\DeclareMathOperator{\tensor}{\otimes}     
\DeclareMathOperator{\bigO}{\mathcal{O}}   
\DeclareMathOperator{\bigOmega}{\Omega}    
\setlist{noitemsep,topsep=0.7mm,leftmargin=*}
\newcommand{\MPP}{\mbox{Matched-Path Principle\/}}
\newcommand{\CIP}{\mbox{Contextual-Interpretation Principle\/}}
\newcommand{\Matched}{{\it matched\/}}
\newcommand{\NumConnections}{{\rm NumConnections\/}}
\newcommand{\DontCareGrouping}{{\it DontCareGrouping\/}}
\newcommand{\ForkGrouping}{{\it ForkGrouping\/}}
\newcommand{\BReturnTuples}{{\it BReturnTuples\/}}
\newcommand{\AConnection}{{\it AConnection\/}}
\newcommand{\BConnection}{{\it BConnections\/}}
\newcommand{\BConnections}{{\it BConnections\/}}
\newtheorem{Construction}{\indent\,Construction}
\def\thmhead@plain#1#2#3{%
  \thmname{#1}\thmnumber{\@ifnotempty{#1}{ }\@upn{#2}}%
  \thmnote{ {\the\thm@notefont#3}}}
\let\thmhead\thmhead@plain
\theoremstyle{definition}
\newtheorem*{theorem*}{Theorem}
\newenvironment{Constr}{\par %
                \begin{Construction}}{$\QED$\end{Construction}}
\newcounter{LineNumber}
\newenvironment{ProofSketch}{%
  \proof}{\endproof}
\newcommand{\sem}[1]{\llbracket {#1} \rrbracket}
\newcommand{\EXP}{{\textit{EXP}}}
\newcommand{\POP}{{\textit{POP}}}
\DeclareMathOperator{\op}{{\textit{op}}}
\newcommand{\QFT}{{\textit{QFT}}\xspace}
\newcommand{\GHZ}{{\textit{GHZ}}\xspace}
\newcommand{\BV}{{\textit{BV}}\xspace}
\newcommand{\CNOT}{{\textit{CNOT}}}
\newcommand{\ev}{{\textit{ev}}}
\newcommand{\EV}{{\textit{EV}}}
\newcommand{\bp}{{\textit{bp}}}
\newcommand{\BP}{{\textit{BP}}}
\newcommand{\ZeroBP}{{\bar{0}_{\BP}}}
\newcommand{\protoWCFLOBDD}{\textrm{proto-WCFLOBDD}\xspace}
\newcommand{\Omit}[1]{}
\newcommand{\twrchanged}[1]{#1}
\newcommand{\swarat}[1]{#1}
\definecolor{lightgray}{rgb}{0.55,0.52,0.54}
\newcommand{\OnlySupplemental}[1]{}    
\renewcommand{\subsubsection}[1]{\medskip\noindent{\textbf{#1}}}
\date{}
\keywords{Weighted decision diagram, matched paths, best-case double-exponential compression, quantum simulation}
\begin{document}
\title{Weighted Context-Free-Language Ordered Binary Decision Diagrams}

\author{Meghana Sistla}
\orcid{0000-0002-4215-0651}
\affiliation{%
  \institution{University of Texas at Austin}
  \city{Austin}
  \country{USA}
}
\email{mesistla@utexas.edu}

\author{Swarat Chaudhuri}
\orcid{0000-0002-6859-1391}
\affiliation{%
  \institution{University of Texas at Austin}
  \city{Austin}
  \country{USA}
}
\email{swarat@cs.utexas.edu}

\author{Thomas Reps}
\orcid{0000-0002-5676-9949}
\affiliation{%
  \institution{University of Wisconsin}
  \city{Madison}
  \country{USA}
}
\email{reps@cs.wisc.edu}

\begin{abstract}
This paper presents a new data structure, called \emph{Weighted Context-Free-Language Ordered BDDs} (WCFLOBDDs), which are a hierarchically structured decision diagram, akin to Weighted BDDs (WBDDs) enhanced with a procedure-call mechanism.
For some functions, WCFLOBDDs are exponentially more succinct than WBDDs.
They are potentially beneficial for representing functions of type $\mathbb{B}^n \rightarrow D$,
when a function's image $V \subseteq D$ has many different values.
We apply WCFLOBDDs in quantum-circuit simulation, and find that they perform better than WBDDs on certain benchmarks.
With a 15-minute timeout, the number of qubits that can be handled by WCFLOBDDs is 1-64$\times$ that of WBDDs
(and 1-128$\times$ that of CFLOBDDs, which are an unweighted version of WCFLOBDDs).
These results support the conclusion that for this
\twrchanged{
application---from the standpoint of problem size, measured as the number of qubits---WCFLOBDDs provide the best of both worlds: performance roughly matches
}
whichever of WBDDs and CFLOBDDs is better.
\twrchanged{
(From the standpoint of running time, the results are more nuanced.)
}

\end{abstract}

\maketitle

\section{Introduction}
\label{Se:intro}

Many BDD variants \cite{DBLP:books/siam/Wegener00} have been introduced to address different failings of vanilla BDDs \cite{toc:Bryant86}, often motivated by new applications.
For instance, multi-terminal BDDs (MTBDDs) \cite{DBLP:journals/fmsd/FujitaMY97} (or ADDs \cite{DBLP:journals/fmsd/BaharFGHMPS97}) were introduced to represent non-Boolean functions $\{0,1\}^n \rightarrow D$, where $D$ is a set.
MTBDDs have their own drawbacks: for a function $h$ whose image has a large number of different values $V \subseteq D$, the size of an MTBDD for $h$ cannot be any smaller than $|V|$.
To date, the most successful solution to this issue has been \emph{Weighted BDDs} (WBDDs)---BDD-like structures with weights on edges \cite{DBLP:journals/tcad/NiemannWMTD16,DBLP:conf/date/ViamontesMH04}.
For a given Boolean assignment $a$, $h(a)$ is computed as the product of the weights on the path for $a$.
This approach yields succinct representations when factors of the values in $V$ can be reused in the WBDD.
Recently, because of the efficient compression capabilities of WBDDs, they, and other related weighted variants
\twrchanged{
have been applied to
}
quantum-circuit simulation \cite{Book:ZW2020}.

\swarat{In this paper, we explore a new way to further compress the WBDD representation: \emph{the introduction of hierarchical structure}.
A recent paper \cite{TOPLAS:SCR24} has developed this idea for unweighted BDDs. 
Whereas WBDDs (and BDDs) can be seen as acyclic finite-state machines (FSMs) that read assignments in bit-serial order, Context-Free-Language Ordered BDDs (CFLOBDDs), the data structure proposed in that work, can be viewed as hierarchical FSMs \cite{TOPLAS:ABEGRY05}. Just as hierarchical FSMs can be exponentially more succinct than FSMs, CFLOBDDs can be exponentially more succinct than classical BDDs.
}
Our contribution here is to combine the complementary benefits of WBDDs and CFLOBDDs through a data structure called \textit{Weighted CFLOBDDs}
\twrchanged{
(WCFLOBDDs).
}
In the best case, WCFLOBDDs are exponentially more succinct than both WBDDs and CFLOBDDs.
\figref{DesignSpace} shows the design space of BDDs, WBDDs, CFLOBDDs, and WCFLOBDDs:

\begin{wrapfigure}{R}{0.54\textwidth}
  \centering
  \vspace{-3.0ex}
  {\small
  $\begin{array}{@{\hspace{0ex}}r@{\hspace{0.95ex}}|@{\hspace{0.65ex}}r@{\hspace{0.65ex}}||c@{\hspace{0.65ex}}|@{\hspace{0.65ex}}c@{\hspace{0ex}}}
    \multicolumn{2}{c||}{}                              & \multicolumn{2}{c}{\textrm{Hierarchical}} \\
    \cline{3-4}
    \multicolumn{2}{c||}{}                              & \textrm{no}   & \textrm{yes} \\
    \hline\hline
    \multirow{2}{*}{\textrm{Weights}} &  \textrm{no} & \textrm{BDD~\cite{toc:Bryant86,DBLP:journals/fmsd/FujitaMY97,DBLP:journals/fmsd/BaharFGHMPS97}}  & \textrm{CFLOBDD~\cite{TOPLAS:SCR24}}  \\
    \cline{2-4}
                                         & \textrm{yes} & \textrm{WBDD~\cite{DBLP:journals/tcad/NiemannWMTD16,DBLP:conf/date/ViamontesMH04}} & \textrm{WCFLOBDD (this paper)} \\
  \end{array}$
  }
  \caption{The design space of BDDs, WBDDs, CFLOBDDs, and WCFLOBDDs.}
  \label{Fi:DesignSpace}
  \vspace{-4ex}
\end{wrapfigure}

We evaluated WCFLOBDDs, WBDDs, and CFLOBDDs on both synthetic and quantum-simulation benchmarks, and found
\twrchanged{
that---from the standpoint of problem size, measured as number of qubits---WCFLOBDDs
}
perform better than WBDDs and CFLOBDDs on most benchmarks.
On the quantum-simulation benchmarks, with a 15-minute timeout,
the number of qubits that can be handled by WCFLOBDDs is
2,097,152 for Greenberger-Horne-Zeilinger (GHZ) (1$\times$ over CFLOBDDs, 64$\times$ over WBDDs); 
262,144 for Bernstein-Vazirani (BV) (1$\times$ over CFLOBDDs, 16$\times$ over WBDDs);
524,288 for Deutsch-Jozsa (DJ) (2$\times$ over CFLOBDDs,
32$\times$ over WBDDs);
8,192 for Simon's algorithm
(1$\times$ over CFLOBDDs, 1$\times$ over WBDDs)
2,048 for Quantum Fourier Transform (QFT) (128$\times$ over CFLOBDDs, 1$\times$ over WBDDs);
and 16 for Grover's algorithm (2$\times$ over CFLOBDDs, 1$\times$ over WBDDs).
Such results support the conclusion that, for at least some applications, WCFLOBDDs provide the best of both worlds:
\twrchanged{
performance roughly matches whichever of WBDDs and CFLOBDDs is better (and beats both on DJ).
}
\twrchanged{
(From the standpoint of running time, the results are more nuanced.)
}

Our work makes the following contributions:
\begin{itemize}
  \item
    We define WCFLOBDDs, a data structure to represent functions, relations, matrices, etc.\ (\sectref{wcflobdds}).
  \item
    We show that WCFLOBDDs can be exponentially more succinct than both WBDDs and CFLOBDDs (\sectref{separation}).
  \item
    We give algorithms for WCFLOBDD operations (\sectref{algos}).
  \item
    We find that WCFLOBDDs perform as well as or better than WBDDs and CFLOBDDs on synthetic and most quantum-simulation benchmarks (\sectref{eval}).
\end{itemize}
\sectref{back} reviews WBDDs \twrchanged{and CFLOBDDs}.
\sectref{related} discusses related work.
\sectref{conclusion} concludes.
\section{Background}
\label{Se:back}

\subsection{Weighted Binary Decision Diagrams (WBDDs)}
Consider the family of Hadamard matrices $\mathcal{H} = \{ H_{2^i} \mid i \geq 1 \}$ defined recursively as
\newline
\fbox{$
H_{2^i} =
\begin{cases}
    \frac{1}{\sqrt{2}}\left[\begin{smallmatrix}
    1 & 1\\
    1 & -1
    \end{smallmatrix}\right] & \text{$i$ = $1$}\\
    H_{2^{i-1}} \otimes H_{2^{i-1}} & \text{otherwise}
\end{cases}
$}
where $\otimes$ denotes Kronecker product.
  The \emph{Kronecker product} $A \tensor B$ equals
  $
    \left[\begin{smallmatrix}
                  A_{1,1} & \cdots & A_{1,m} \\
                  \vdots  & \ddots & \vdots  \\
                  A_{n,1} & \cdots & A_{n,m}
                \end{smallmatrix}\right] \otimes B
             \eqdef \left[\begin{smallmatrix}
                              A_{1,1}B & \cdots & A_{1,m}B \\
                               \vdots  & \ddots & \vdots   \\
                              A_{n,1}B & \cdots & A_{n,m}B
                    \end{smallmatrix}\right].
  $
As $i$ increases, matrices in $\mathcal{H}$ increase in size exponentially.
$H_{2^i}$ contains
\twrchanged{
$2^{i}$ rows and $2^{i}$ columns, and thus $2^{i+1}$ elements.
}
To index a row (column), ${i-1}$ row (column) variables are required.

\begin{figure}[tb!]
    \centering
    \begin{subfigure}{0.3\linewidth}
        \centering
        \includegraphics[width=0.60\linewidth]{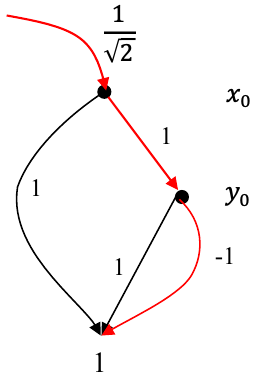}
        \caption{WBDD for $H_2$}
        \label{Fi:wbdds-h2}
    \end{subfigure}
    \begin{subfigure}{0.3\linewidth}
        \centering
        \includegraphics[width=0.73\linewidth]{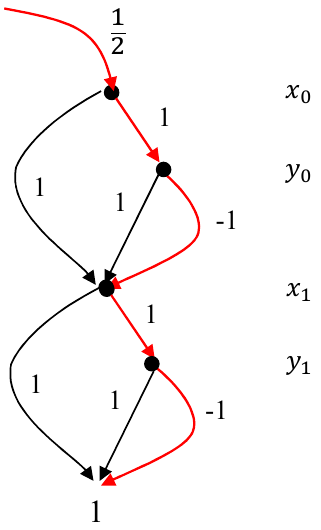}
        \caption{WBDD for $H_4$}
        \label{Fi:wbdds-h4}
    \end{subfigure}
    \begin{subfigure}{0.3\linewidth}
        \centering
        \includegraphics[width=0.55\linewidth]{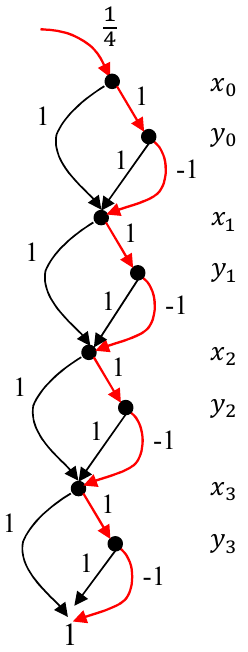}
        \caption{WBDD for $H_8$}
        \label{Fi:wbdds-h8}
    \end{subfigure}
    \caption{WBDDs for the matrices $H_2$, $H_4$, and $H_8$, with $x$ variables for rows and $y$ variables for columns. }
    \label{Fi:wbdds-hadamard}
\end{figure}

\figref{wbdds-hadamard} shows the WBDD representations for the first three matrices in $\mathcal{H}$: $H_2$, $H_4$, and $H_8$.
Every node is associated with a decision variable, and the outgoing edges correspond to the $0$ or $1$ decision.\footnote{
  In all diagrams, left branches are for $0$ ($F$); right branches are for $1$ ($T$).
}
The $x$ variables represent the row index;
the $y$ variables represent the column index.
The numbers associated with edges indicate the weight of the edge.
The red highlighted paths indicate the all-$1$ variable-assignment $a = \forall i \in \{1..|\text{vars}|\} : \{ x_i \mapsto 1, y_i \mapsto 1 \}$, which corresponds to the element in the lower right-hand corner of the matrix.
The product of the weights along the path, together with the common weight (at the top of the WBDD) produces the value of the represented function at $a$.
The size of the WBDDs $\{ H_{2^i} \mid i \geq 1 \}$ increases linearly in the number of variables $i$.

\paragraph{Semi-Field Weights}
\twrchanged{
In WBDDs (and WCFLOBDDs), weights are drawn from a semi-field.
}

\begin{definition}
\twrchanged{
Let $\mathcal{D} = \langle A, +, \cdot, \bar{0}, \bar{1} \rangle$ be a set, where $\bar{0}, \bar{1} \in A$, that supports two binary operations: $+$ and $\cdot$.
}
$\mathcal{D}$ is a \textbf{\emph{semi-field}} if, for all $a, b, c \in A$, the following properties are satisfied:

\noindent
{\small
\begin{equation*}
  \begin{array}{@{\hspace{0ex}}r@{\hspace{1.0ex}}r@{\hspace{0.75ex}}c@{\hspace{0.75ex}}l@{\hspace{1.0ex}}|r@{\hspace{1.0ex}}r@{\hspace{0.75ex}}c@{\hspace{0.75ex}}l@{\hspace{0ex}}}
    \textrm{Associativity:}  & a + (b + c) & = & (a + b) + c                 & \textrm{Annihilation:}   & a \cdot \bar{0} & = & \bar{0} = \twrchanged{\bar{0} \cdot a} \\
                             & a \cdot (b \cdot c) & = & (a \cdot b) \cdot c & \textrm{Distributivity:} & a \cdot (b + c) & = & (a \cdot b) + (a \cdot c) \\
    \textrm{Commutativity:}  & a + b & = & b + a                             &                          & (b + c) \cdot a & = & (b \cdot a) + (c \cdot a) \\
    \textrm{Identities:}       &  a + \bar{0} & = & a = \twrchanged{\bar{0} + a}
                                             & \textrm{Mult.\ Inverse:} & \multicolumn{3}{c}{a \neq 0 \Rightarrow \exists a^{-1} \in \mathcal{D} : \twrchanged{~ a \cdot a^{-1} = \bar{1} = a^{-1} \cdot a }} \\
                               & a \cdot \bar{1} & = & a = \twrchanged{\bar{1} \cdot a}                                            &  &  &  &      
  \end{array}
\end{equation*}}
\end{definition}

\noindent
\twrchanged{
The non-$\bar{0}$ elements are a group under multiplication,
$\langle A - \{\bar{0}\}, \cdot, \bar{1} \rangle$, whereas
$\langle A, +, \bar{0} \rangle$ is only a semi-group.
The following properties are consequences of the semi-field axioms:
\begin{itemize}
    \item
\twrchanged{
      The additive identity $\bar{0}$ and multiplicative identity $\bar{1}$ are each uniquely defined.
      \begin{itemize}
        \item
          If there is an element $d \in A$ that satisfies $\forall a : a + d = a = d + a$, then $\bar{0} = \bar{0} + d = d$.
        \item
          If there is an element $d \in A$ that satisfies $\forall a : a \cdot d = a = d \cdot a$, then $\bar{1} = \bar{1} \cdot d = d$. 
      \end{itemize}
      Consequently, we can speak of $\bar{0}$ as \emph{the} additive identity element and $\bar{1}$ as \emph{the} multiplicative identity element.
}
    \item
\twrchanged{
      For each element $a \in A$, $a \neq \bar{0}$, the multiplicative inverse is unique.
      \begin{itemize}
        \item
          Suppose that $b$ and $c$ are both inverses of $a$.
          Then $b = b \cdot \bar{1}$ $= b \cdot (a \cdot c)$ $= (b \cdot a) \cdot c$ $= \bar{1} \cdot c = c$.
      \end{itemize}
}
    \item 
\twrchanged{
      A semi-field has no zero divisors:
      \begin{itemize}
        \item
          $a \cdot b = \bar{0} \land a \neq \bar{0}$ $\Rightarrow$ $b = \bar{1} \cdot b$ $= (a^{-1} \cdot a) \cdot b$ $= a^{-1} \cdot (a \cdot b)$ $= a^{-1} \cdot \bar{0} = \bar{0}$.
          Similarly, $b \cdot a = \bar{0} \land a \neq \bar{0}$ $\Rightarrow$ $b = b \cdot \bar{1}$ $= b \cdot (a \cdot a^{-1})$ $= (b \cdot a) \cdot a^{-1}$ $= \bar{0} \cdot a^{-1} = \bar{0}$.
      \end{itemize}
      Equivalently, because $\langle A - \{\bar{0}\}, \cdot, \bar{1} \rangle$ is a group, it is closed under multiplication;
      thus, for $a,b \in A - \{\bar{0}\}$, $a \cdot b$ can never equal $\bar{0}$.
}
\end{itemize}
}

\twrchanged{
Because $\mathbb{R}$ and $\mathbb{C}$ are fields, they are also semi-fields.
An example of a semi-field that is not a field is the set of invertible $n \times n$ matrices (together with the all-0 matrix of size $n \times n$ as $\bar{0}$), with matrix addition and matrix multiplication.
}

\twrchanged{
We need inverse elements with respect to ``$\cdot$'' to be able to canonicalize WBDDs, as well as weighted decision trees (see \figref{Complicated}(b)) and WCFLOBDDs (\sectref{canonicity}).
Because one multiplies weights as one follows the path for an assignment, the basic idea is to label left branches (for a Boolean variable bound to 0) with $\bar{1}$, and label right branches (for 1) with some value.
Suppose that the branches at a node are originally labeled with $a$ and $b$, respectively.
During canonicalization, the left branch would be labeled with $\bar{1}$, the right branch with $a^{-1} \cdot b$, and the value $a$ would be propagated to every incoming edge. 
This relabeling maintains the product over all paths---e.g., $\ldots \cdot a \cdot \bar{1} \cdot \ldots$ along the left branch, and  $\ldots \cdot a \cdot a^{-1} \cdot b  \cdot \ldots$  along the right branch. (There are some special cases for canonicalization when the left edge is originally labeled with $\bar{0}$.)
To carry out such weight-propagation steps, ``$\cdot$''  must be associative, but need not be commutative.
}

\subsection{CFLOBDDs as Hierarchical Finite-State Machines}
\twrchanged{
A CFLOBDD over $2^n$ variables can be considered to be a form of non-recursive, single-entry, multi-exit, hierarchical finite-state machine \cite{DBLP:journals/scp/Harel87,TOPLAS:ABEGRY05} that
(i) has a fixed maximum nesting-depth of $n$, and
(ii) accepts a subset of $\{0, 1\}^{2^n}$.
(If necessary, the client pads the variables to the next higher power of $2$.)
The hierarchy in a CFLOBDD is captured by the \emph{levels} of its (sub-)machines.
Level $l$ indicates the remaining number of nested calls that are allowed;
thus, the topmost level of a CFLOBDD with $2^n$ variables is level $n$, with nested calls counting down to level $0$.
}

\twrchanged{
At each level $k$, a sub-machine of a CFLOBDD has a fixed pattern of calls:
it always performs an ``$A$ call'' to a level-$(k\text{-}1)$ sub-machine, followed by a ``$B$ call'' to a level-$(k\text{-}1)$ sub-machine.
A given level-$k$ sub-machine always performs the same $A$ call, but depending on the return state, it can perform different $B$ calls.
The level-$k$ sub-machine returns to level $k+1$ using an exit state determined from the return state of the $B$ call. 
}

\twrchanged{
This view of a CFLOBDD as a hierarchical finite-state machine helps to explain the variable-decomposition principle used:
at each level $k$, the variables are---in effect---divided in half:
the $A$ call interprets $x_0, \dots, x_{2^{k-1}-1}$, and a $B$ call interprets $x_{2^{k-1}},\dots, x_{{2^k}-1}$.
At level $0$, one is left with only a single variable, so level $0$ is where an individual variable is interpreted.
}

\section{Weighted CFLOBDDs (WCFLOBDDs)}
\label{Se:wcflobdds}

\begin{figure}[tb!]
\centering
  \begin{subfigure}{0.22\linewidth}
    \centering
    \includegraphics[width=.98\linewidth]{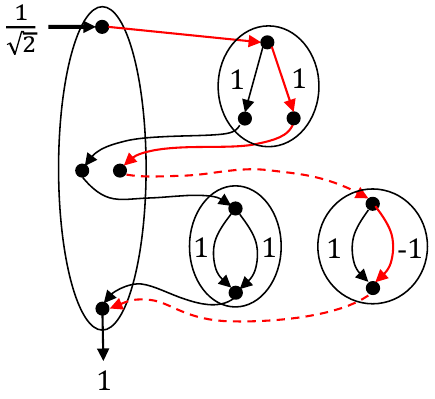}
    \caption{WCFLOBDD for $H_2$}
    \label{Fi:wcflobdd_H2}
  \end{subfigure}
  \begin{subfigure}{0.32\linewidth}
    \centering
    \includegraphics[width=0.98\linewidth]{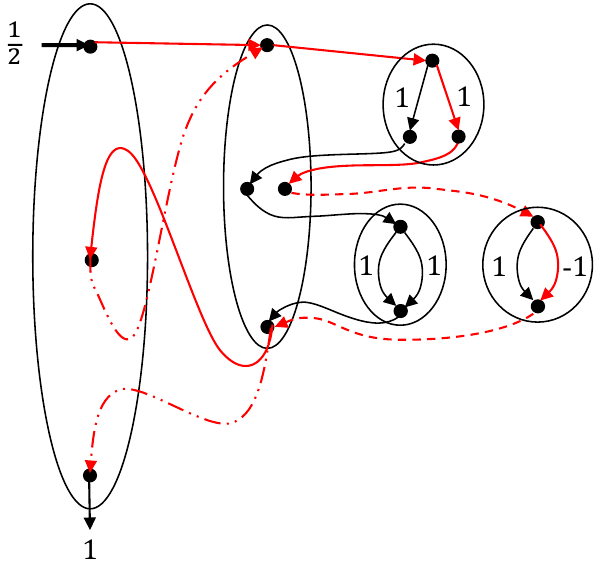}
    \caption{WCFLOBDD for $H_4$}
    \label{Fi:wcflobdd_H4}
  \end{subfigure}
  \begin{subfigure}{0.42\linewidth}
    \centering
    \includegraphics[width=0.98\linewidth]{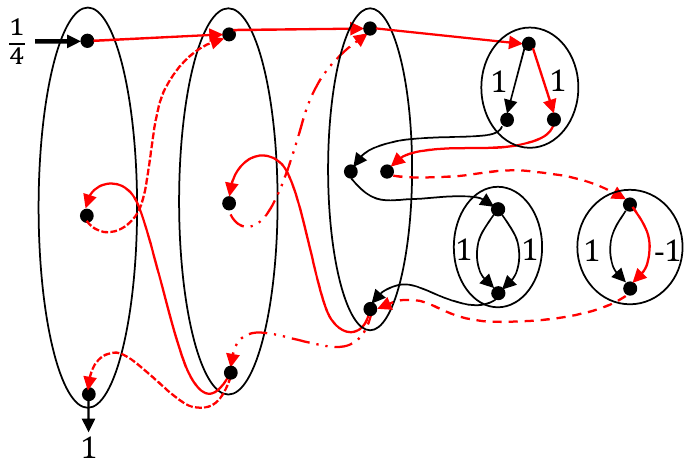}
    \caption{WCFLOBDD for $H_8$}
    \label{Fi:wcflobdd_H8}
  \end{subfigure}
  \begin{subfigure}{0.23\linewidth}
    \centering
    \includegraphics[width=.98\linewidth]{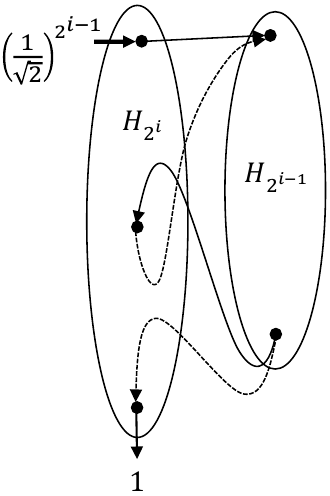}
    \caption{General structure of WCFLOBDD for $H_{2^i}$}
    \label{Fi:wcflobdd_Hn}
  \end{subfigure}
  \begin{subfigure}{0.72\linewidth}
    \centering
    \includegraphics[width=0.98\linewidth]{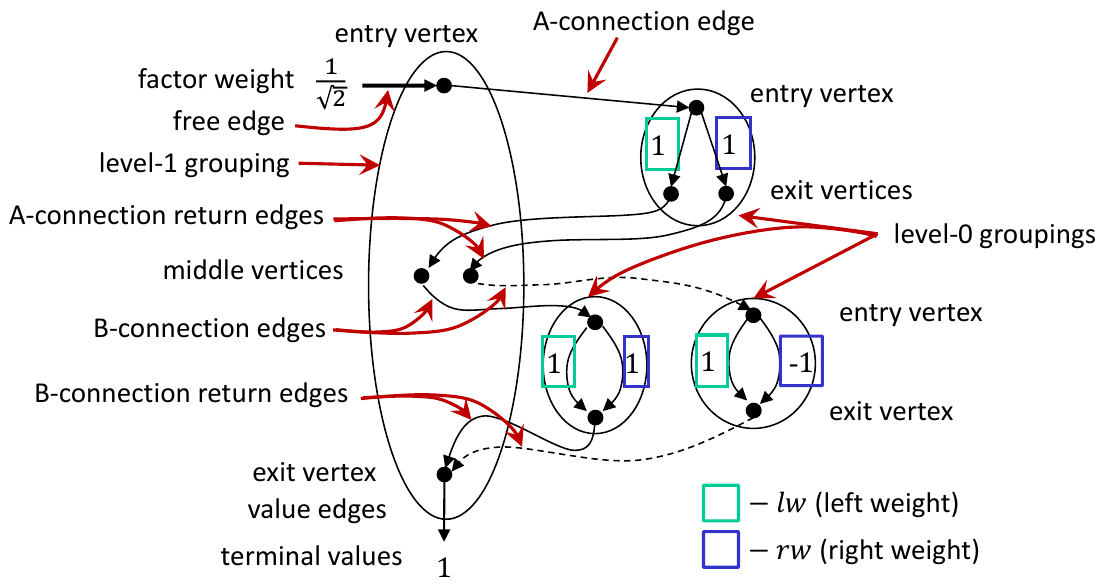}
    \caption{Terminology for WCFLOBDD components}
    \label{Fi:terminology}
  \end{subfigure}
  \caption{
    (a), (b), and (c) show WCFLOBDDs for the first three matrices in $\mathcal{H}$---$H_2$, $H_4$, and $H_8$---with the variable ordering $\langle x_0, y_0, x_1, y_1, \ldots \rangle$ ($\vec{x}$: row; $\vec{y}$: column).
    (d) shows the general structure of a WCFLOBDD that represents $H_{2^i}$. 
    (e) illustrates the constituents of the WCFLOBDD for $H_2$.
}
  \label{Fi:wcflobdd_hadamard}
\end{figure}

\subsection{Basic Structure of WCFLOBDDs}
\label{Se:BasicStructureOfWCFLOBDDs}

\twrchanged{
The hierarchy in a WCFLOBDD is similar to that in a CFLOBDD.
The number of variables at each level is always a power of 2, and at each level $k$, the variables are divided in half---the $A$ call interpreting $x_0, \dots, x_{2^{k-1}-1}$, and the $B$ calls interpreting $x_{2^{k-1}},\dots, x_{{2^k}-1}$.
The main difference from a CFLOBDD is that at level $0$, each transition has a \emph{weight}.
This approach produces a structure that is akin to a non-recursive, single-entry, multi-exit, \emph{weighted} hierarchical finite-state machine.
}

\twrchanged{
We define WCFLOBDDs in two parts.
\defref{wcflobdd-def} defines the basic structure of WCFLOBDDs, whose various elements are depicted in \figref{wcflobdd_hadamard}(e).
\defref{StructuralInvariants} imposes some additional structural invariants to ensure that WCFLOBDDs provide a canonical representation of Boolean functions.
Where necessary, we distinguish between \emph{mock-WCFLOBDDs} (\defref{wcflobdd-def}) and \emph{WCFLOBDDs} (\defref{StructuralInvariants}), although we typically drop the qualifier ``mock-'' when there is little danger of confusion.
}

\begin{definition}[\twrchanged{Mock-WCFLOBDD; see \figref{wcflobdd_hadamard}(e)}]
\label{De:wcflobdd-def}
\twrchanged{
A \emph{mock-WCFLOBDD} at level $k$ is a hierarchical structure made up of some number of \emph{groupings}, of which there is one grouping at level $k$, called the \emph{head-grouping}, and at least one grouping at each level $0, 1, \ldots, k-1$.
A grouping is a collection of vertices and edges (to other groupings), and weights (for level-0 groupings), with the structure described below.
}

\twrchanged{
A mock-WCFLOBDD is a triple $\langle f, G, V \rangle$, where
$f \in \mathcal{D}$ is the \emph{factor weight} of the \emph{free-edge} (defined below),
$G$ is the head-grouping
}
\twrchanged{
(conceptually ``topmost;''
portrayed leftmost),
}
and $V \in \{ [\bar{0}], [\bar{1}], [\bar{0}, \bar{1}], [\bar{1}, \bar{0}] \}$---where ``$[\cdot]$'' denotes a tuple---is the sequence of \emph{terminal values}, whose cardinality must match the number of exit vertices of $G$ (defined below).

\begin{itemize}
  \item \textit{Groupings:}
    Each oval-shaped object is called a \emph{grouping}, and every grouping is associated with a level $l$ ($\geq 0$).
    Each grouping at level $i$ is always connected to groupings at level $i\text{-}1$. 
  \item \textit{Vertices:}
    Each grouping $g$ at a level $\geq 1$ has
\twrchanged{
    a unique
}
    \textit{entry vertex} (the dot at the top of $g$), and some \textit{middle vertices} and \textit{exit vertices}
    (dots in the middle and at the bottom of $g$).
    The three sets of vertices are disjoint.
\twrchanged{
    (It is useful to think of the collections of middle vertices and exit vertices as two sequences, each numbered from left-to-right.)
}

    \hspace*{1.5ex}
\twrchanged{
    A level-$0$ grouping 
\twrchanged{
    (conceptually ``bottommost;''
    portrayed to the right)
}
    has a unique entry vertex, no middle vertices, and two edges (the $0$-edge and $1$-edge), which are directed to either one or two exit vertices.
    There are two kinds of level-$0$ groupings:
}
    \textit{fork groupings} (with two exit vertices) and \textit{don't-care groupings} (with one exit vertex).

    \item \textit{A-connection:} The grouping between $g$'s entry and middle vertices is called the A-connection.
    The edge from $g$'s entry vertex is called an A-connection edge, and the edges from the A-connection's exit vertices to $g$'s middle vertices are called \emph{A-connection return edges}.
    \item \textit{B-connections:} The groupings between $g$'s middle and exit vertices are called B-connections.
    The edges from $g$'s middle vertices are called B-connection edges and the edges from a B-connection's exit vertices to $g$'s exit vertices are called \emph{B-connection return edges}.

    \hspace{1.5ex}
    There is one A-connection, but there can be more than one B-connection (each with a set of B-connection return edges back to $g$'s exit vertices).
    \item \textit{Value Edges:}
    The edges that connect the exit vertices of the head-grouping to the terminal values $V$ are called \emph{value edges}.
\end{itemize}

\twrchanged{
The only edges that assigned weights are the free-edge and the edges of level-0 groupings.
}
\begin{itemize}
  \item
    The incoming edge to the
\twrchanged{
    entry vertex of the outermost grouping is called the \emph{free-edge}.
}
    The weight $f \in \mathcal{D}$ associated with the free-edge is the \emph{factor weight}.
  \item
    A level-0 grouping has a pair of semi-field weights $(lw, rw)$:
    $lw \in \mathcal{D}$ is the weight for the decision-edge for $0$ (false);
    $rw \in \mathcal{D}$ is the weight for the decision-edge for $1$ (true).
\end{itemize}

\twrchanged{
The terminal values connected to the exit vertices of the head-grouping can only be $\bar{0}, \bar{1} \in \mathcal{D}$.
}
For a given function $h$, the appropriate sequence of terminal values $V$ can be understood by considering the decision tree for $h$:
\begin{itemize}
  \item
    $[\bar{0}]$ only occurs for the function
    $\lambda \vec{x}. \bar{0}$.
    All leaves of the decision tree are $\bar{0}$.
  \item
    $[\bar{1}]$ occurs for a function for which the value is never $\bar{0}$ (for any assignment to the Boolean variables). No leaf of the decision tree is $\bar{0}$.
  \item
    $[\bar{1}, \bar{0}]$ occurs when a function has the value $\bar{0}$ for
    at least one
    assignment, but the leftmost leaf of the decision tree is non-$\bar{0}$.
  \item
    $[\bar{0}, \bar{1}]$ occurs when the function has a non-$\bar{0}$ value for
    at least one assignment, and
    the leftmost leaf of the decision tree is $\bar{0}$.
\end{itemize}
\end{definition}

\figref{wcflobdd_hadamard}(a) shows the WCFLOBDD for $H_2$. 
It consists of 4 groupings, of which 3 are at level 0 and one is at level 1; the upper-right level-$0$ grouping is a fork grouping, and the bottom-right level-$0$ groupings are don't-care groupings.

A grouping at level-$k$ encodes $2^k$ variables and $2^{2^k}$ paths.
As seen in~\figref{wcflobdd_hadamard}(a), the grouping at level-$1$ encodes two variables, and therefore 4 paths. 
In \figref{wcflobdd_hadamard}(b), the largest oval is a level-$2$ grouping, which encodes 4 variables: 
2 variables through the A-connection grouping and 2 variables through the B-connection grouping, and therefore $2^4$ paths.
In general, at level $k$ the $2^k$ variables are divided into halves:
$2^{k-1}$ in the A-connection and $2^{k-1}$ in the B-connection.

\begin{figure}[tb!]
    \centering
    \includegraphics[scale=0.44]{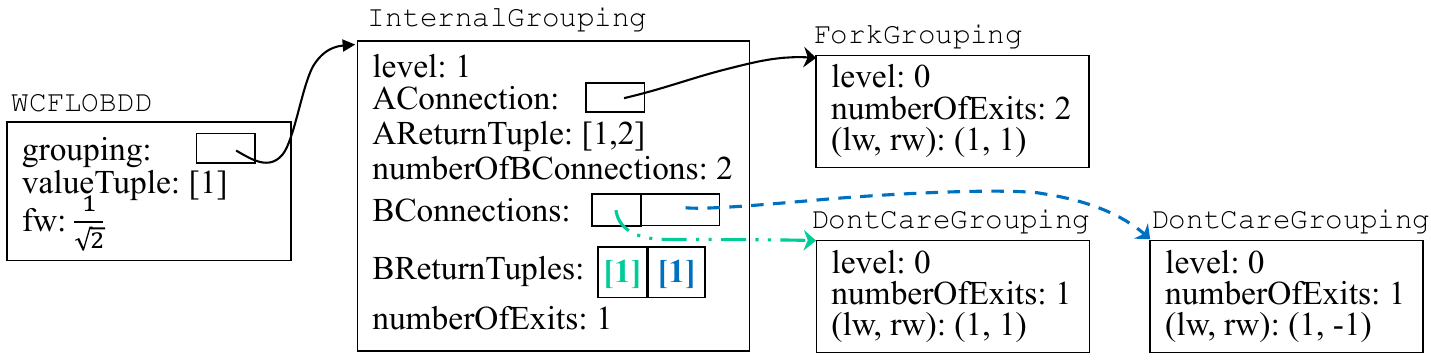}
    \caption{
    Object diagram of the WCFLOBDD for matrix $H_2$ (\figref{wcflobdd_hadamard}(a)).
    }
    \label{Fi:oops_wcflobdd}
    \vspace{-3ex}
\end{figure}


\twrchanged{
\figref{oops_wcflobdd} shows an object diagram of the WCFLOBDD for matrix $H_2$ from \figref{wcflobdd_hadamard}(a).
In this representation, which will be used in all algorithms stated in the paper, there are no explicit entry, middle, and exit vertices. The pointer to the object serves as the entry vertex; numbers in [1..\textit{numberOfBConnections}] stand for the middle vertices;
and those in [1..\textit{numberOfExits}] stand for the exit vertices.
(The context determines whether an index is being used as a middle vertex or exit vertex.)
As can be seen by comparing \figref{oops_wcflobdd} to \figref{wcflobdd_hadamard}(a), the various edges that appear in \figref{wcflobdd_hadamard}(a) are stored as pointers to Groupings and elements of ReturnTuples.
For a given A-connection edge or B-connection edge $c$ from grouping $g_i$ to $g_{i-1}$, the associated ReturnTuple $rt_c$ consists of the sequence of targets of return edges from exit vertices of $g_{i-1}$ to the middle or exit vertices of $g_i$ that correspond to $c$.
There are three grouping classes:
$\textit{InternalGrouping}$, $\textit{ForkGrouping}$, and $\textit{DontCareGrouping}$.
The latter two are level-$0$ groupings, which store left and right edge weights $(lw, rw)$;
thus, operations that construct a level-$0$ grouping take two inputs, $lw$ and $rw$.
All groupings at levels $\geq 1$ are $\textit{InternalGrouping}$s.
}

\twrchanged{
To be able to make inductive arguments about WCFLOBDDs, we define
}

\twrchanged{
\begin{definition}[Mock-proto-WCFLOBDD]\label{De:MockProtoWCFLOBDD}
  A \emph{mock-proto-WCFLOBDD} at level $i$ is a grouping at level $i$, together with the lower-level groupings to which it is connected (and the connecting edges).
  In other words, a mock-proto-WCFLOBDD has the following recursive structure:
  \begin{itemize}
    \item
      a mock-proto-WCFLOBDD at level 0 is either a fork grouping or a don't-care grouping
    \item 
      a mock-proto-WCFLOBDD at level $i$ is headed by a grouping at level $i$ whose
      \begin{itemize}
        \item
          A-connection edge and associated return edges ``call'' a level-($i$-1) mock-proto-WCFLOBDD
        \item
          B-connection edges and their associated return edges ``call'' some number of level-($i$-1) mock-proto-WCFLOBDDs.
      \end{itemize}
  \end{itemize}
  A mock-WCFLOBDD is a mock-proto-WCFLOBDD in which (i) the exit vertices of the mock-proto-CFLOBDD have been associated with specific terminal values $V$, and (ii) a factor weight has been supplied.
  (In other words, in the tuple $\langle f, G, V \rangle$ in \defref{wcflobdd-def}, $G$ serves double-duty: both as the head-grouping, and as the mock-proto-WCFLOBDD of which it is the head-grouping.)
\end{definition}
}

When interpreting a Boolean assignment, the path taken must abide by the following principle:
\begin{mdframed}[innerleftmargin = 3pt, innerrightmargin = 3pt, skipbelow=-0.0em]
  $\textbf{\MPP}$.  {\em When a path follows an edge that returns to level $i$ from level $i-1$, it must follow an edge that matches the closest preceding edge from level $i$ to level $i-1$.\/}
\end{mdframed}

\twrchanged{
Formally, the matched-path principle can be expressed as a condition that---for a path to be \emph{matched}---the word spelled out by the labels on the edges of the path must be a word in a certain context-free language \cite{kn:Yann90}.
This idea is the origin of ``CFL'' in ``CFLOBDD'' and ``WCFLOBDD.''
One way to formalize the condition is to label each edge from level $i$ to level $i-1$ with an open-parenthesis symbol of the form ``$(_b$'', where $b$ is an index that distinguishes the edge from all other edges to any entry vertex of any grouping of the WCFLOBDD.
(In particular, suppose that there are  $\NumConnections$ such edges, and that the value of $b$ runs from $1$ to $\NumConnections$.)
Each return edge that runs from an exit vertex of the level-$(i\text{-}1)$ grouping back to level $i$, and corresponds to the edge labeled ``$(_b$'', is labeled ``$)_b$''.
Each path in a WCFLOBDD then generates a string of parenthesis symbols formed by concatenating, in order, the labels of the edges on the path.
(For purposes of this discussion, edges in level-$0$ groupings are treated as $\epsilon$.)
A path 
is called a \emph{$\Matched$-path} iff the path's word is in the language $L(\Matched)$ of balanced-parenthesis strings generated by
\begin{equation}
  \label{Eq:MatchedPathGrammar}
  \Matched
  \rightarrow
  \epsilon \mid \Matched~\Matched \mid {(_b}~\Matched~{)_b} \qquad 1 \leq b \leq \NumConnections 
\end{equation}
Only $\Matched$-paths that start at the entry vertex of the WCFLOBDD's highest-level grouping and end at a terminal value are considered in interpreting a WCFLOBDD.
}

Decisions are taken only at the level-$0$ groupings:
the edges in a level-$0$ grouping correspond to the two choices for a Boolean variable. 
A grouping can be reused multiple times at different
\twrchanged{
``calls,'' as seen in \figref{wcflobdd_hadamard}(b)--(d).
}
Because of sharing at all levels, a given level-$0$ grouping (decision grouping) does not represent a specific variable,
according to the following principle:

\begin{mdframed}[innerleftmargin = 3pt, innerrightmargin = 3pt, skipbelow=-0.0em]
  $\textbf{\CIP}$.
  {\em A level-$0$ grouping is not associated with a specific Boolean variable.
  Instead, the variable that a level-$0$ grouping refers to is determined by the context:
  the
\twrchanged{
  $j^{\textit{th}}$
}
  level-$0$ grouping visited along a matched path is used to interpret the
\twrchanged{
  $j^{\textit{th}}$
}
  Boolean variable.}
\end{mdframed}

\emph{Function evaluation.}
\twrchanged{
As with BDDs, it is often useful to think of an assignment $a$ as a sequence of Boolean values.
We use ``$||$'' to denote concatenation of two half-size sequences:
$a = a_1 || a_2$, where $|a_1|$ $= |a_2|$ $= |a|/2$ (and $|a|$ is a power of 2).
}

\begin{definition}[Exit vertex reached via assignment $a$]
  \label{De:ExitVertexReached}
\twrchanged{
  Suppose that $g$ is a grouping.
  Let $g.\BReturnTuples[j]$ be the sequence of return edges for the $j^{\textit{th}}$ B-connection of $g$;
  i.e., $g.\BReturnTuples[j][k]$ represents the exit vertex of $g$ that is connected to the $k^{\textit{th}}$ exit vertex of the $j^{\textit{th}}$ B-connection of $g$.
  For a grouping $g$ and an assignment $a$, $\langle g \rangle(a)$ denotes the exit vertex of $g$ reached via assignment $a$, defined as follows:
  \[
    \begin{array}{@{\hspace{0ex}}r@{\hspace{0.75ex}}c@{\hspace{0.75ex}}l@{\hspace{18ex}}r@{\hspace{0.75ex}}c@{\hspace{0.75ex}}l@{\hspace{0ex}}}
      \langle \DontCareGrouping \rangle([0]) & \eqdef & 1
      &
      \langle \DontCareGrouping \rangle([1]) & \eqdef & 1 \\
      \langle \ForkGrouping \rangle([0]) & \eqdef & 1
      &
      \langle \ForkGrouping \rangle([1]) &\eqdef & 2 \\
      \langle g \rangle(a) & \eqdef & \multicolumn{4}{@{\hspace{0ex}}l}{g.\BReturnTuples[j][k], \textrm{where}~g.\textit{level} \geq 1, a = a_1 || a_2,} \\
           & & \multicolumn{4}{@{\hspace{0ex}}l}{j = \langle g.\AConnection\rangle(a_1),~\textrm{and}~k = \langle g.\BConnections[j]\rangle(a_2).}
    \end{array}
  \]
}
\end{definition}

\begin{definition}[WCFLOBDD evaluation]
  \label{De:WCFLOBDDEvaluation}
\twrchanged{
  Given WCFLOBDD $C = \langle f, G, V \rangle$ for $h: \{0,1\}^n \rightarrow \mathcal{D}$, and an assignment $a$ for $h$'s arguments, the value $h(a)$ is obtained from $C$, denoted by $\sem{C}(a)$, as the product of the edge weights on the path for $a$ in $C$:
  \[
    \begin{array}{@{\hspace{0ex}}r@{\hspace{0.75ex}}c@{\hspace{0.75ex}}l@{\hspace{0ex}}}
      \sem{C}(a) & \eqdef & f \cdot \sem{G}(a) \cdot V[\langle G \rangle(a)] \\
      \sem{G}(a) & \eqdef & \sem{G.\AConnection}(a_1) \cdot \sem{G.\BConnection[j]}(a_2),
    \end{array}
  \]
  where $a = a_1 || a_2$ and $j = \langle G.\AConnection \rangle (a_1)$.
}
\end{definition}

In the case of~\figref{wcflobdd_hadamard}(a),
$\mathcal{D} = \langle \mathbb{R}, +, \times, 0, 1 \rangle$.
For $a = \{ x_0 \mapsto 1, y_0 \mapsto 1 \}$, the path highlighted in bold evaluates to $H[1][1] = \frac{1}{\sqrt{2}} \times 1 \times -1 \times 1 = \frac{-1}{\sqrt{2}}$,
as desired.
\figref{wcflobdd_hadamard}(b) and \figref{wcflobdd_hadamard}(c) show the WCFLOBDD representations of $H_4$ and $H_8$.
\figref{wcflobdd_hadamard}(d) shows the generic structure for
how $H_{2^i}$ is
formed from $H_{2^{i-1}}$.
As one can see, the grouping at level $i\text{-}1$ is shared twice at 
\twrchanged{
each
}
level $i \ge 2$, giving a representation that is exponentially more succinct than a WBDD.

\textit{Intuition for succinctness:} Consider the WBDD and WCFLOBDD for $H_4$ in~\figref{wbdds-hadamard}(b) and~\figref{wcflobdd_hadamard}(b), respectively.
The A-connection of the level-$2$ grouping in \figref{wcflobdd_hadamard}(b) represents the top half of the WBDD structure (for variables $\langle x_0, y_0 \rangle$), and the B-connection of the level-$2$ grouping represents the bottom half of the WBDD structure (for variables $\langle x_1, y_1 \rangle$).
However, because the A-connection grouping and B-connection grouping of the WCFLOBDD have the identical structure, they are shared, producing a more succinct representation.

\textit{Implementation.}
Our WCFLOBDD implementation will be made available in open-source form if the paper is accepted.
The implementation is parameterized---via C++ templatization---to work with any user-defined semi-field $\mathcal{D}$ (and its corresponding operations $+$ and $\times$).

\subsection{Canonicity}
\label{Se:canonicity}

Like other decision diagrams, we ensure \emph{canonicity}---every function has a unique representation---by imposing certain structural invariants \cite[p.\ 48]{DBLP:books/siam/Wegener00}, \cite[\S4]{TOPLAS:SCR24}.

\textit{Intuition for canonicity:}
The structural invariants are designed to ensure that---for a given order on the Boolean variables---each Boolean function has a unique, canonical representation as a WCFLOBDD.
In reading \defref{StructuralInvariants} below, it will help to keep in mind that the goal of the invariants is to force there to be a \emph{unique} way to fold a given decision tree into a WCFLOBDD that represents the same Boolean function.
The decision-tree folding method is discussed later in this section and in
\sectref{canonicalness},
and is illustrated in \figref{Complicated}, but the main characteristic of the folding method is that it works \emph{greedily, left to right}.
This directional bias shows up in structural invariants~\ref{Inv:1}, \ref{Inv:2a}, and \ref{Inv:2b}.

\begin{definition}\label{De:StructuralInvariants}
\twrchanged{
    A \emph{(proto-)WCFLOBDD} is a mock-(proto-)WCFLOBDD in which every grouping/proto-WCFLOBDD  satisfies the \emph{structural invariants} given below.
}
    (See \cite[Fig.\ 8]{TOPLAS:SCR24}.)   
\end{definition}

\paragraph{\textbf{Structural Invariants.}}
The structural invariants mainly deal with (i) the organization of return tuples, and (ii) weights in level-0 groupings.
The following invariants deal with constraints on return tuples.
Let $c$ be an A-connection or B-connection edge with return tuple $rt_c$ from $g_{i-1}$ to $g_{i}$.
\begin{enumerate}
  \item
    \label{Inv:1}
     If $c$ is an $A$-connection edge, then $rt_c$ must map $g_{i-1}$'s exit vertices 1-to-1, in order, onto $g_i$'s middle vertices.
  \item
    If $c$ is the $B$-connection edge whose source is middle vertex $j+1$ of $g_i$ and whose target is $g_{i-1}$, then $rt_c$ must meet two conditions:
        \begin{enumerate}
          \item
            \label{Inv:2a}
            It must map $g_{i-1}$'s exit vertices 1-to-1 (but not necessarily onto) $g_i$'s exit vertices.
          \item
            \label{Inv:2b}
            It must ``compactly extend'' the set of exit vertices in $g_i$ defined by the return tuples for the previous $j$ $B$-connections (similar to condition (\ref{Inv:1}) on A-connections):
            Let $rt_{c_1}$, $rt_{c_2}$, $\ldots$, $rt_{c_j}$ be the return tuples for the first $j$ $B$-connection edges out of $g_i$.
            Let $S$ be the set of indices of exit vertices of $g_i$ that occur in return tuples $rt_{c_1}$, $rt_{c_2}$, $\ldots$, $rt_{c_j}$, and let $n$ be the largest value in $S$.
            (That is, $n$ is the index of the rightmost exit vertex of $g_i$ that is a target of any of the return tuples $rt_{c_1}$, $rt_{c_2}$, $\ldots$, $rt_{c_j}$.)
            If $S$ is empty, then let $n$ be $0$.

            \hspace*{1.5ex}
            Now consider $rt_c$ ($= rt_{c_{j+1}}$).
            Let $R$ be the (not necessarily contiguous) sub-sequence of $rt_c$ whose values are strictly greater than $n$.
            Let $m$ be the size of $R$.
            Then $R$ must be exactly the sequence $[n+1, n+2, \ldots, n+m]$.
        \end{enumerate}
  \item
    A \protoWCFLOBDD can be used (i.e., pointed to by other higher-level groupings) multiple times, but a \protoWCFLOBDD never contains two separate instances of equal {\protoWCFLOBDD}s.
      (Equality is defined inductively on the hierarchical structure of groupings.)
    
  \item
    \label{Inv:Structural:End}
    For every pair of $B$-connection edges $c$ and $c'$ of grouping $g_i$, with associated return tuples $rt_c$ and $rt_{c'}$, if $c$ and $c'$ lead to two level $i\text{-}1$ \protoWCFLOBDD{s}, say $p_{i-1}$ and $p'_{i-1}$, such that $p_{i-1} = p'_{i-1}$, then the associated return tuples must be different (i.e., $rt_c \neq rt_{c'}$).
    This condition means that two $B$-connection edges $c$ and $c'$ can ``call'' the same \protoWCFLOBDD, but the two sets of return edges $rt_c$ and $rt_{c'}$ have to be different.
\end{enumerate}

\noindent
The following invariants pertain to the weights on edges in level-0 groupings:
\begin{enumerate}[resume]
  \item
   \label{Inv:Weights:Start}
    The left-edge and right-edge weights $(lw, rw)$ must obey:
    \begin{equation*}
        (lw, rw) = \begin{cases}
        (\bar{1}, rw) & \text{when $lw \neq \bar{0}$}\\
        (\bar{0}, \bar{1}) &\text{ otherwise }
        \end{cases}
    \end{equation*}
  \item
    If a middle vertex $m$ of $g_i$ has a path with aggregated weight $\bar{0}$ from the entry vertex of $g_i$ to $m$, then there must exist an exit-vertex $e$ such that every level-0 grouping edge on every path from $m$ to $e$ has weight $\bar{0}$.
  \item
    If $e$ is an exit vertex of the outermost grouping such that all paths of weight $\bar{0}$ lead to $e$, then $e$ must map to the terminal value $\bar{0}$.
  \item
    \label{Inv:Weights:End}
    If all paths lead to the terminal value $\bar{0}$,
    then the factor-weight must be $\bar{0}$.
\end{enumerate}

\noindent
Finally, in a level-$k$ \emph{WCFLOBDD} $\langle f, G, V \rangle$,
\begin{enumerate}[resume]
  \item
    The head-grouping $G$ of $\langle f, G, V \rangle$
    heads a level-$k$ \protoWCFLOBDD.
  \item
    \label{It:OutermostLevel:ValueTuple}
    Value-tuple $V$ is one of $\{ [\bar{0}], [\bar{1}], [\bar{0}, \bar{1}], [\bar{1}, \bar{0}] \}$, and thus maps each exit vertex of $G$ to a distinct terminal value.
    If there exists an exit vertex $e$ of $G$ such that all paths leading to $e$ have weight $\bar{0}$, then $e$ is mapped to $\bar{0}$.
\end{enumerate}

\paragraph{\textbf{Canonicity Theorem}}
\twrchanged{
Now we state the canonicity theorem for WCFLOBDDs:
}

\begin{theorem}\label{The:Canonicity}
If $C_1$ and $C_2$ are two level-$k$ WCFLOBDDs for the same Boolean function, and use the same variable ordering, then $C_1$ and $C_2$ are isomorphic.
\end{theorem}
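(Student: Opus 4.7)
The plan is to proceed by induction on the level $k$, following the shape of the unweighted CFLOBDD canonicity proof in \cite[\S4.1]{DBLP:journals/corr/abs-2211-06818} while threading the weight invariants through every step. What I would actually prove, strengthening to support the induction, is that the function $h \colon \{0,1\}^n \to \mathcal{D}$ together with the fixed variable ordering forces every choice: the factor weight $f$, the value-tuple $V$, each pair $(lw, rw)$ in every level-$0$ grouping, and the entire hierarchical skeleton.

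\textit{Base case} ($k = 0$). A level-$0$ WCFLOBDD represents a one-variable function. I would split on whether $h \equiv \bar{0}$: in that case the ``all-zero paths'' clauses force $f = \bar{0}$, $(lw, rw) = (\bar{0}, \bar{1})$, and $V = [\bar{0}]$; otherwise the normalization $(lw, rw) = (\bar{1}, rw)$ pins the left weight, and then $f$, $rw$, and $V$ are read off uniquely from $h(0)$ and $h(1)$ using the multiplicative inverse in $\mathcal{D}$.

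\textit{Induction step.} Let $C_1 = \langle f_1, G_1, V_1 \rangle$ and $C_2 = \langle f_2, G_2, V_2 \rangle$ both compute $h$. I would argue, in order:
\begin{enumerate}
  \item The set of values taken by $h$, together with the invariants requiring distinct terminals at the head grouping and mapping all-zero exits to $\bar{0}$, forces $V_1 = V_2$.
  \item Factoring out the unique scalar needed to restore level-$0$ normalization throughout the structure produces the canonical factor weight, giving $f_1 = f_2$.
  \item The sub-function computed by the single A-connection---a function on the first half of the variables, valued in labels for middle vertices---is determined by $h$; the inductive hypothesis then yields isomorphic A-connections. Invariant (1) on return tuples forces a canonical bijection between the middle vertices of $G_1$ and of $G_2$.
  \item At each corresponding middle vertex the associated B-connection computes a determined sub-function on the second half of the variables, so by induction corresponding B-connection proto-WCFLOBDDs are isomorphic. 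Invariants (2a)--(2b) match return-tuple shapes, and the non-duplication invariants (3)--(4) forbid alternative groupings of the B-connections that would compute the same overall function.
\end{enumerate}

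\textit{Main obstacle.} The delicate step is the weight bookkeeping in item~(2) above: because edge weights multiply along matched paths, we must show that factoring a scalar out of each level is consistent across the whole hierarchy. I would support the induction with a lemma saying that for any sub-function $h'$ that is not identically $\bar{0}$, there is a \emph{unique} $c \in \mathcal{D}$ such that $c^{-1} h'$ admits a proto-WCFLOBDD satisfying every weight invariant, and this $c$ is precisely the weight that the parent must place on the incoming edge. The semi-field axioms---multiplicative inverse for nonzero elements and the absence of zero divisors noted immediately after the definition---are what make $c$ well-defined, which is exactly why the weight domain was required to be a semi-field rather than merely a semi-ring; the aggregated-$\bar{0}$ clause (invariant~(2) on weights) handles the degenerate middle vertices where pulling out $c$ would otherwise be ambiguous. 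Once this lemma is in hand, gluing the A-connection and B-connection isomorphisms along the canonical bijections on middle and exit vertices completes the induction.
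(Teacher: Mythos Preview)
Your approach differs substantially from the paper's. The paper does not carry out a direct structural induction comparing $C_1$ and $C_2$; instead it establishes three obligations: (i) every level-$k$ WCFLOBDD represents a decision tree with $2^{2^k}$ leaves, (ii) every such decision tree is represented by \emph{some} level-$k$ WCFLOBDD, via a deterministic two-stage construction (decision tree $\to$ weighted decision tree $\to$ WCFLOBDD, spelled out in an appendix), and (iii) uniqueness, obtained by unfolding an arbitrary $C$ to its decision tree, refolding with the deterministic construction of (ii), and showing the result is isomorphic to $C$. Canonicity then falls out because the construction in (ii) is deterministic. This unfold/fold argument sidesteps the weight-bookkeeping you flag as the main obstacle: the normalization of weights is baked into the bottom-up pass that builds the weighted decision tree, so no separate ``unique scalar $c$'' lemma is needed.

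Your direct-induction route is plausible, but watch one formulation issue. The inductive hypothesis you state is about WCFLOBDDs, i.e., objects computing a single $\mathcal{D}$-valued function; yet in steps (3)--(4) you want to apply it to A- and B-connections, which are \emph{proto}-WCFLOBDDs with several exit vertices and which compute a pair (exit index, accumulated weight) rather than a $\mathcal{D}$-value. Your ``unique $c$'' lemma is phrased for a single sub-function $h'$, but the A-connection does not itself compute a $\mathcal{D}$-valued function; what it determines is a \emph{partition} of upper-half assignments (two assignments share a middle vertex iff their residual lower-half functions are nonzero scalar multiples, or both identically $\bar 0$) together with a normalized weight on each class. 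To close the induction you will need to strengthen the hypothesis to proto-WCFLOBDDs and make that partition-plus-weight data the thing proved unique. The paper's fold/unfold argument avoids this by having the deterministic construction absorb both the partition and the normalization in one pass.
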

\begin{ProofSketch}
    We prove the theorem by establishing three properties:
\begin{enumerate}
    \item Every level-$k$ WCFLOBDD represents a decision tree (for the same Boolean function with the same variable ordering) with $2^{2^k}$ leaves.
    \item Every decision tree with $2^{2^k}$ leaves is represented by some level-$k$ WCFLOBDD.
    \item No decision tree with $2^{2^k}$ leaves is represented by more than one level-$k$ WCFLOBDD (up to isomorphism).
\end{enumerate}


\emph{Obligation 1.}
\twrchanged{
This property is established
}
by induction. Let $P(l)$ 
\twrchanged{
be
}
the number of leaves of a decision tree for the function represented by a level-$l$ \protoWCFLOBDD.
We want to prove that $P(l) = 2^{2^l}$.
\begin{itemize}
    \item Base Case: $l = 0$, \#variables = 1, \#paths = \#leaves in a decision tree = 2.
    \item Induction Step: Let $P(m) = 2^{2^m}$.
    We wish to prove that $P(m+1) = 2^{2^{m+1}}$.
    Given a level-$l$ \protoWCFLOBDD with outermost grouping $g_l$, by construction the number of variables of $g_l$ is equally divided between the A-connection and B-connections of $g_l$. 
    Hence, for a level-$m{+}1$ \protoWCFLOBDD, we have
    $
      P(m+1) = \sum_{j}A_j(m) \cdot P(m) = (\sum_j A_j(m)) \cdot P(m) =  P(m) \cdot P(m) = 2^{2^m} \cdot 2^{2^m}  = 2^{2^{m+1}}
    $,
    where $A_j(m)$ is the number of matched paths through the level-$m$ A-connection \protoWCFLOBDD to its $j^{th}$ exit vertex. However, we are just summing over all A-connection exit vertices, so $\sum_j A_j(m) = P(m)$.
\end{itemize}

\emph{Obligation 2.}
This property is established by giving a (deterministic) construction to convert a decision tree with $2^{2^k}$ leaves into a level-$k$ WCFLOBDD (satisfying all structural invariants). The construction is given in Appendix
~\sectref{canonicalness}.
(The construction is a recursive procedure that works greedily over the decision tree, from left to right.)

\emph{Obligation 3.}
This property is established by (i) unfolding WCFLOBDD $C$ into a decision tree;
(ii) folding the decision tree back to a WCFLOBDD $C'$; and
(iii) showing that $C'$ is isomorphic to $C$.
(The folding-back step uses the same deterministic construction from Obligation 2.)
\end{ProofSketch}

\noindent
The folding/unfolding constructions of Obligations 2 and 3 are used solely
for the purpose of proving \theoref{Canonicity}; 
they are not explicit operations on WCFLOBDDs.

\textit{Intuition.}
Structural invariants (\ref{Inv:1})-(\ref{Inv:Structural:End}), which deal with return tuples,
ensure that the WCFLOBDD for a Boolean function $f$ has a \textit{``left-to-right''} folding of the decision tree for $f$ (with the same variable ordering).
Structural invariants (\ref{Inv:Weights:Start})-(\ref{Inv:Weights:End}), which deal with weights,
ensure that there cannot be two {\protoWCFLOBDD}s for the same decision-tree structure that have different sets of level-$0$ groupings, i.e., with different $(lw, rw)$ weights.
The deterministic construction algorithm that folds a decision tree into a WCFLOBDD 
involves a two-step process. The decision tree is first converted into a weighted decision tree (a decision tree with weights on the edges), and then the weighted decision tree is folded into a WCFLOBDD. The first step enforces the structural invariants of the weights and the second step of \textit{``left-to-right''} folding handles the structural invariants that deal with the return tuples.

\begin{figure}
    \centering
    \begin{subfigure}[t]{0.48\linewidth}
    \includegraphics[width=\linewidth]{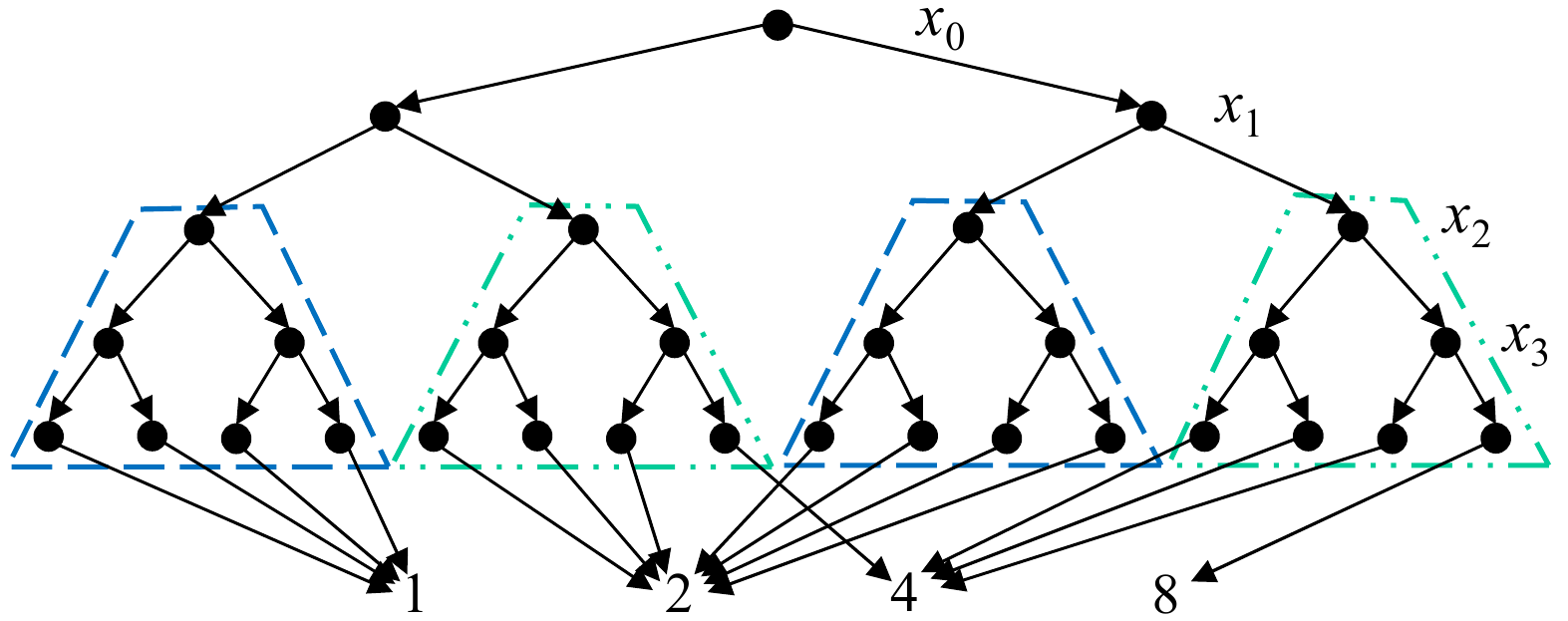}
    \caption{Decision tree}
    \vspace{4ex}
    \end{subfigure}
    \begin{subfigure}[t]{0.48\linewidth}
    \includegraphics[width=\linewidth]{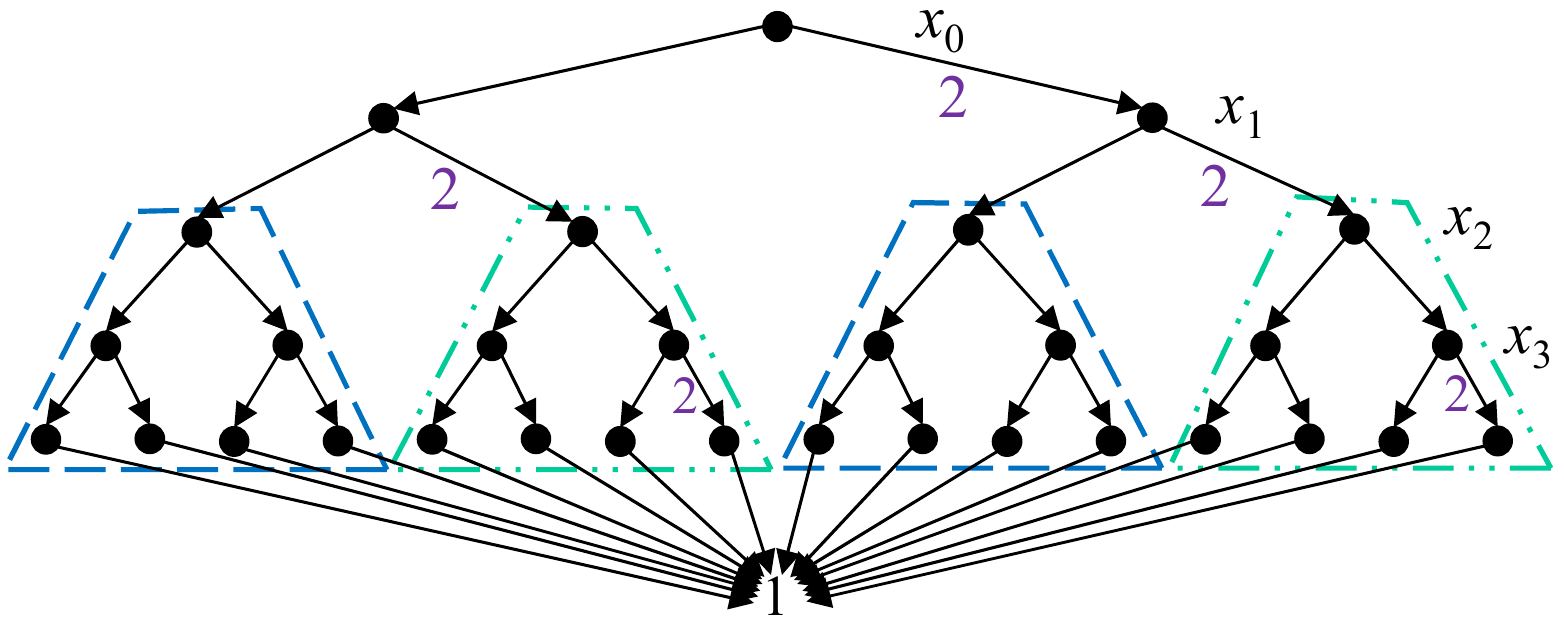}
    \caption{
      Weighted decision tree for (a).
      The edges labelled with numbers in purple are the edge weights.
\twrchanged{
      To reduce clutter,
}
      labels for edges with weight $1$ are omitted.
    }
    \vspace{4ex}
    \end{subfigure}
    \begin{subfigure}[t]{0.45\linewidth}
    \centering
    \includegraphics[width=\linewidth]{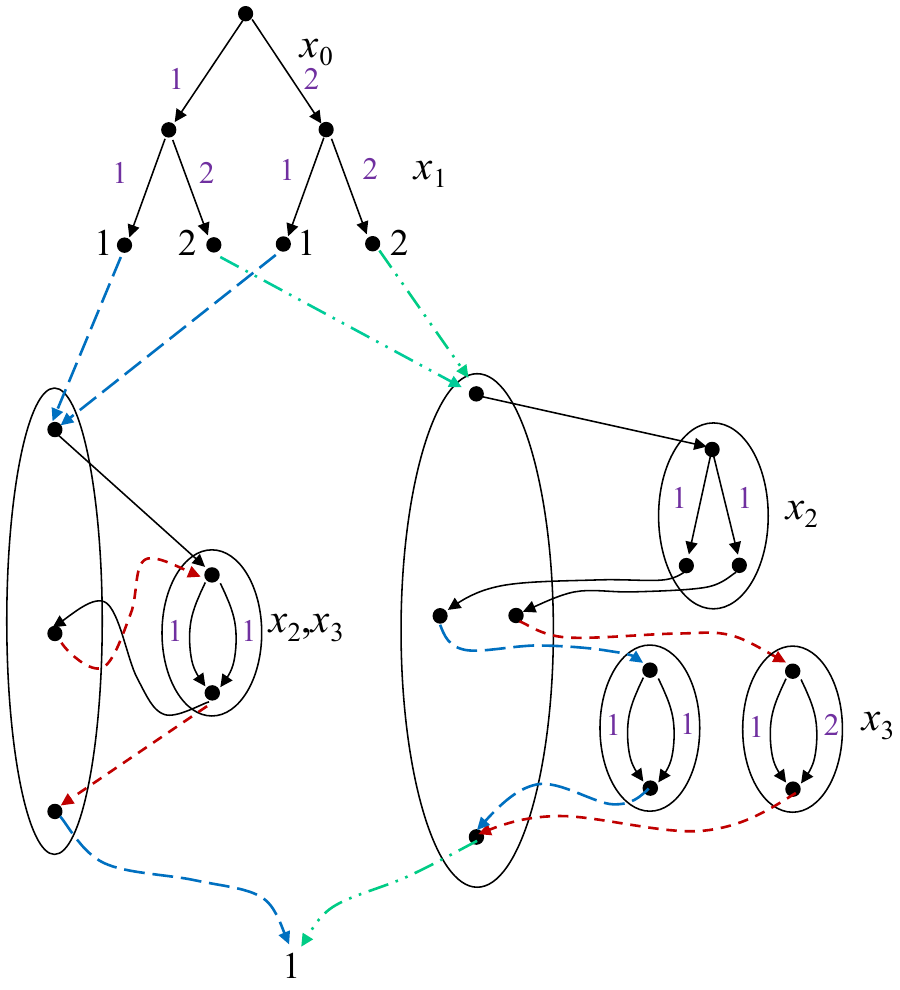}
    \caption{
    \protect \raggedright
      Hybrid of weighted decision tree for $x_0$ and $x_1$, and WCFLOBDDs for $x_2$ and $x_3$.
      The dashed, and dashed-double-dotted edges from the four vertices labeled 1, 2, 1, and 2, respectively, correspond to the dashed, and dashed-double-dotted trapezoids in (a) and (b). The numbers highlighted in purple indicate the edge-weights.
    }
    \end{subfigure}
    \begin{subfigure}[t]{0.5\linewidth}
    \centering
    \includegraphics[width=\linewidth]{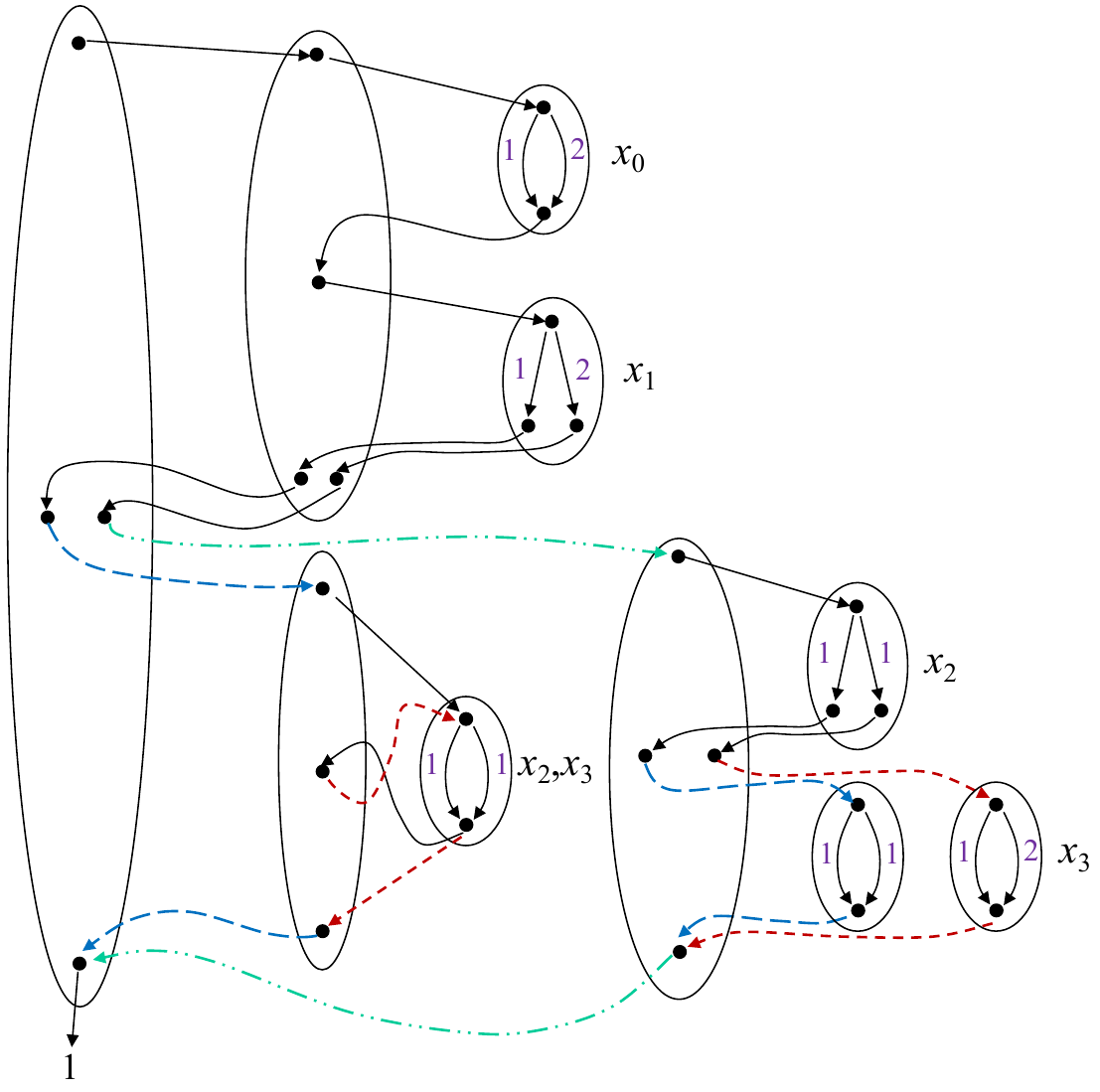}
    \caption{WCFLOBDD representation for the weighted decision tree in (a) (Some of the groupings have been duplicated to avoid clutter). The numbers highlighted in purple indicate the edge-weights.}
    \end{subfigure}
    \caption{Representations of the vector $V$ = $[1,1,1,1,2,2,2,4,2,2,2,2,4,4,4,8]^T$ with variables $\langle x_0, x_1, x_2, x_3 \rangle$ representing
\twrchanged{
    an index into $V$.
}
    }
    \label{Fi:Complicated}
\end{figure}

\textit{Example:} \figref{Complicated} shows the representation of a vector
of length 16, and thus 4 variables are used to represent the index of an element
of the vector. 
The four diagrams in \figref{Complicated} show some of the stages of
converting a decision tree to a WCFLOBDD.
\figref{Complicated}(a) shows the decision-tree representation of the vector.
The weighted decision tree for \figref{Complicated}(a) is shown in \figref{Complicated}(b).
In this step, the weights are assigned to the edges, ensuring that the structural invariants on the weights hold.
\figref{Complicated}(c) shows the process of converting a weighted decision tree to a WCFLOBDD.
The vertices labeled 1,2,1,2 at ``half-height'' (at the end of the weighted decision tree for $x_0$ and $x_1$) indicate the common sub-structures found during the left-to-right folding of the bottom half of the weighted decision tree.
These common ``values'' are used to drive the left-to-right folding of the upper-half weighted decision tree, and to ensure that the structural invariants hold on the return tuples of the A-connection of the outermost grouping in \figref{Complicated}(d).
Finally, \figref{Complicated}(d) shows the full WCFLOBDD representation for \figref{Complicated}(a).

\paragraph{\textbf{Comparison with CFLOBDDs.}}
WCFLOBDDs draw inspiration from CFLOBDDs:
the structural elements of levels and A-connection/B-connection edges, along with structural invariants that enforce a left-to-right folding of the decision tree are similar to CFLOBDDs.
However, in WCFLOBDDs the edges in level-$0$ groupings have weights, which differs from CFLOBDDs.
To ensure that WCFLOBDDs are canonical, we introduced structural invariants (\ref{Inv:Weights:Start})-(\ref{Inv:Weights:End}) and (\ref{It:OutermostLevel:ValueTuple}).

\subsection{Exponential Succinctness Gaps}
\label{Se:separation}

\twrchanged{
In this section, we establish
exponential gaps between (i) WCFLOBDDs and WBDDs, and (ii) WCFLOBDDs and CFLOBDDs using an example function.
}

\begin{definition}
\label{De:PopRelation}
\twrchanged{
The $n$-bit population-count function $\POP_n{:}~\{0,1\}^{n} \rightarrow \{2^{j} \mid j \in \{0 \ldots 2^n\}\}$ on variables $\{ x_0\cdots x_{n-1} \}$ is
$
  \POP_n(X) \eqdef 2^{\textit{pop}(X)}
                = 2^{\sum_{i=0}^{n-1}{\delta(x_i)}},
$
where
$
\delta(a) = \begin{cases}
    1 & \text{a = 1}\\
    0 & \text{otherwise}
\end{cases}
$
and $pop(X)$ represents the number of $1$s in the binary representation of $X$.
}
\end{definition}

\begin{theorem}\label{The:PopSeparation}
\twrchanged{
\twrchanged{
For $n = 2^k$, where $k \geq 0$,
}
$\POP_n$,
defined over the variables $\{x_0\cdots x_{n-1}\}$, 
can be represented by a WCFLOBDD with
\twrchanged{
$\Theta(\log n)$
}
vertices and edges.
A CFLOBDD
that uses only the variables $\{x_0\cdots x_{n-1}\}$
for $\POP_n$ requires $\Omega(n)$ vertices and edges;
and a WBDD that uses the same variable set
\twrchanged{
requires
}
$\Theta(n)$ nodes.
}
\end{theorem}

\begin{wrapfigure}{R}{0.50\textwidth}
  \centering
  \vspace{-3.0ex}
  \begin{tabular}{c@{\hspace{3.0ex}}c}
    \begin{tabular}{@{\hspace{0ex}}c@{\hspace{0ex}}}
      \includegraphics[width=0.14\textwidth]{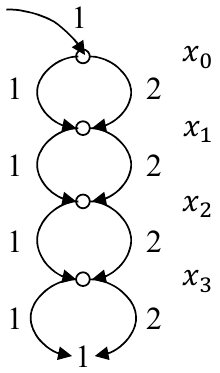}
    \end{tabular}
    &
    \begin{tabular}{@{\hspace{0ex}}c@{\hspace{0ex}}}
      \includegraphics[width=0.30\textwidth]{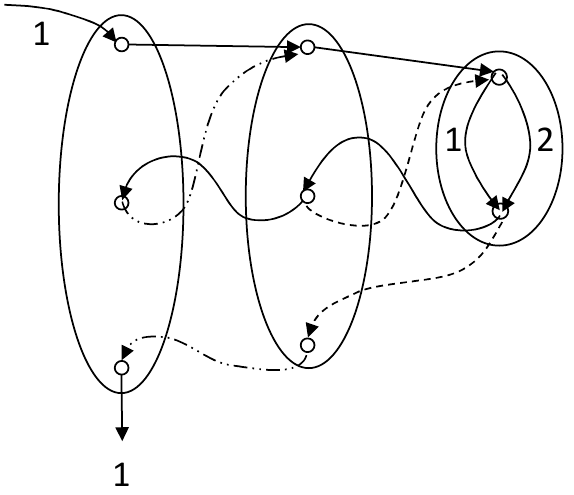}
    \end{tabular}
    \\
    {\small (a)} & {\small (b)} 
  \end{tabular}
  \caption{
\twrchanged{
  Two representations of $\POP_4$ using (a) WBDDs and (b) WCFLOBDDs.
}
  }
    \label{Fi:SeparationFigures}
  \vspace{-4ex}
\end{wrapfigure}


\begin{Proof}
\twrchanged{
\textit{Weighted Decision Tree.}
\twrchanged{
 Each $x_i$ node in
}
 a weighted decision tree for $\POP_n$ has a 1 on the left branch ($x_i$ = 0) and a 2 on the right branch ($x_i$ = 1). The paths of the weighted decision tree evaluate to $2^k + 1$ different values: the leftmost path (all variables set to 0)  evaluates to 1 = $2^0$; the rightmost path (all variables set to 1) evaluates to $2^{2^k}$.  There are repeated values that the function represented by the weighted decision tree evaluates to, but we have every power of 2 from $2^0$ to $2^{2^k}$.
 }

\twrchanged{
\textit{WBDD Claim.}
A WBDD for $\POP_n$ would 
}

\noindent
\twrchanged{
consist of $2^k$ don't-care nodes (one for every 
}

\noindent
\twrchanged{
variable)---with left-edge weight of $1$ and right-edge weight of $2$---stacked on top of each other.
Hence, the size of the WBDD is linear in the number of variables: $\Theta(2^k) = \Theta(n)$.
\twrchanged{
(See \figref{SeparationFigures}(i).)
}
}

\twrchanged{
\textit{WCFLOBDD Claim.}
\twrchanged{
Consider a WCFLOBDD for $\POP_n$ $W$ with $n = 2^k$ variables and $k$ levels.
There is only a single grouping in $W$ for each level.
At level $0$, the unique grouping is a don't-care grouping with $(lw, rw) = (1, 2)$.
At each level $l \geq 1$, the level-$l$ grouping's A-connection and (one) B-connection both ``call'' the unique grouping at level $l-1$.
Because every level of $W$ has only a single grouping,
}
the WCFLOBDD has $\Theta(k)$ = $\Theta(\log n)$ vertices and edges.
\twrchanged{
(See \figref{SeparationFigures}(ii).)
}
}

\twrchanged{
\textit{CFLOBDD Claim.}
The terminal values of the CFLOBDD $C$ representing $\POP_n$ would be the unique output values of the function. The terminal values of $C$ would contain all powers of 2 from $2^0$ to $2^{2^k}$; hence $C$ has $2^k + 1$ terminal values and the size of $C$ is $\Omega(2^k) = \Omega(n)$.
}

\twrchanged{
Moreover, weighted decision trees, WBDDs, CFLOBDDs, and WCFLOBDDs always have the structures described above \emph{regardless of the variable order one chooses to use}.
Hence, for $\POP_n$, WCFLOBDDs are inherently exponentially more succinct than both WBDDs and CFLOBDDs.
}
$~~\QED$
\end{Proof}

\section{Operations on WCFLOBDDs}
\label{Se:algos}

\begin{sidewaystable}
    \centering

    \resizebox{\textwidth}{!}{
    \begin{tabular}{|c|c|c|c|c|c|}
        \hline
         \multicolumn{1}{|c|}{\multirow{2}{*}{Operation}} & \multirow{2}{*}{Type Signature} & \multirow{2}{*}{Description} & \multicolumn{3}{|c|}{Time Complexity} \\
         \cline{4-6}
                                                          &                                 &                              &   WCFLOBDD & WBDD &
                                                          \twrchanged{CFLOBDD} \\
         \hline
         \multirow{2}{*}{Equal} & WCFLOBDD $\times$ WCFLOBDD  & \multirow{2}{*}{Checks if two WCFLOBDDs are equal} & \multirow{2}{*}{$\mathcal{O}(1)$} & \multirow{2}{*}{$\mathcal{O}(1)$} & \multirow{2}{*}{\twrchanged{$\mathcal{O}(1)$}} \\
         & $\rightarrow$ Boolean & & & & \\
         \hline
         \multirow{2}{*}{ConstantZero} & \multirow{2}{*}{Int ($k$) $\rightarrow$ WCFLOBDD}  & Creates a WCFLOBDD for & \multirow{2}{*}{$\mathcal{O}(k)$} & \multirow{2}{*}{$\mathcal{O}(1)$}  & \multirow{2}{*}{\twrchanged{$\mathcal{O}(k)$}} \\
          &  & the function $\lambda x_0 \dots x_{2^k-1}.F$ & & & \\
         \hline
         \multirow{2}{*}{ConstantOne} & \multirow{2}{*}{Int ($k$) $\rightarrow$ WCFLOBDD} & Creates a WCFLOBDD for & \multirow{2}{*}{$\mathcal{O}(k)$} & \multirow{2}{*}{$\mathcal{O}(1)$}  & \multirow{2}{*}{\twrchanged{$\mathcal{O}(k)$}} \\
          &  & the function $\lambda x_0 \dots x_{2^k-1}.T$ & & & \\
         \hline
         \multirow{2}{*}{Projection} & \multirow{2}{*}{Int($k$) $\times$ Int($i$) $\rightarrow$ WCFLOBDD} & Creates a WCFLOBDD for & \multirow{2}{*}{$\mathcal{O}(k)$} & \multirow{2}{*}{$\mathcal{O}(2^k)$}  & \multirow{2}{*}{\twrchanged{$\mathcal{O}(k)$}} \\
         &  & the function $\lambda x_0 \dots x_{2^k-1}. x_i$  & & & \\
         \hline
         \multirow{2}{*}{ScalarMultiply} & WCFLOBDD ($c$) $\times$ Value ($v$)  & \multirow{2}{*}{Performs $c' = c \ast v$} & \multirow{2}{*}{$\mathcal{O}(1)$} & \multirow{2}{*}{$\mathcal{O}(1)$}  & \multirow{2}{*}{\twrchanged{$\mathcal{O}(|c| \times |c'|)$}} \\
         & $\rightarrow$ WCFLOBDD & & & & \\
         \hline
         Pointwise Multiplication & WCFLOBDD ($c_1$) $\times$ WCFLOBDD ($c_2$) & \multirow{2}{*}{Performs \twrchanged{$c' = c_1 \odot c_2$}} & \multirow{2}{*}{$\mathcal{O}(|c_1| \times |c_2| \times |c'|)$} & \multirow{2}{*}{$\mathcal{O}(|c_1|_B \times |c_2|_B)$}  & \multirow{2}{*}{\twrchanged{$\mathcal{O}(|c_1| \times |c_2| \times |c'|)$}} \\
           \twrchanged{(Hadamard product $\odot$)} &  \twrchanged{$\rightarrow$ WCFLOBDD} &  &  & & \\
          \hline
         \multirow{2}{*}{Pointwise Addition} & WCFLOBDD ($c_1$) $\times$ WCFLOBDD ($c_2$) & \multirow{2}{*}{Performs \twrchanged{$c' = c_1 + c_2$}} & \multirow{2}{*}{$\mathcal{O}(2^{\textit{vars}})$} & \multirow{2}{*}{$\mathcal{O}(2^{\textit{vars}})$}  & \multirow{2}{*}{\twrchanged{$\mathcal{O}(|c_1| \times |c_2| \times |c'|)$}} \\
          &  \twrchanged{$\rightarrow$ WCFLOBDD} &  &  & & \\
         \hline
         \multirow{2}{*}{PathWeight} & \multirow{2}{*}{WCFLOBDD($c$) $\rightarrow$ WCFLOBDD} & Computes the weight of paths to  & \multirow{2}{*}{$\mathcal{O}(|c|)$}  & \multirow{2}{*}{$\mathcal{O}(|c|_B)$}  & \multirow{2}{*}{\twrchanged{$\mathcal{O}(|c|)$}} \\
          &  & every exit vertex of every grouping &  &  & \\
         \hline
        \multirow{2}{*}{Sampling} & \multirow{2}{*}{WCFLOBDD($c$) $\rightarrow$ String} & \multirow{2}{*}{Samples a path from $c$} & \multirow{2}{*}{$\mathcal{O}(\max (\textit{vars}, |c|))$} & \multirow{2}{*}{$\mathcal{O}(\max (\textit{vars}, |c|_B))$}  & \multirow{2}{*}{\twrchanged{$\mathcal{O}(\max (\textit{vars}, |c|))$}} \\
         &  &  &  &  & \\
         \hline
         \multirow{2}{*}{KroneckerProduct} & WCFLOBDD($c_1$) $\times$ WCFLOBDD($c_2$)  & \multirow{2}{*}{Performs $c' = c_1 \tensor c_2$} & \multirow{2}{*}{$\mathcal{O}(1)$}  & \multirow{2}{*}{$\mathcal{O}(|c_1|_B)$}
         & \multirow{2}{*}{\twrchanged{$\mathcal{O}\left(\left(
                            \begin{array}{@{\hspace{0ex}}c@{\hspace{0.75ex}}l@{\hspace{0ex}}}
                                & |c_1| + |c_2| \\
                              + & c_1.\text{\#exits} \times c_2.\text{\#exits}
                            \end{array}
                            \right) \times |c'|\right)$}} \\
         & $\rightarrow$ WCFLOBDD & & & & \\
         \hline
         \multirow{2}{*}{MatrixMultiply} & WCFLOBDD($c_1$) $\times$ WCFLOBDD($c_2$)  & Performs $c' = c_1 \times c_2$ & $\mathcal{O}(N^3)$, plus the time for & \multirow{2}{*}{$\mathcal{O}(N^3)$}  & \twrchanged{$\mathcal{O}(N^3)$, plus the time for} \\
         & $\rightarrow$ WCFLOBDD & for matrices of size $N \times N$ & a final call to \texttt{Reduce} & & \twrchanged{a final call to \texttt{Reduce}} \\
         \hline
    \end{tabular}
    }
    \caption{
      List of operations on 
\twrchanged{
      WCFLOBDDs, WBDDs and CFLOBDDs;
}
      $\textit{vars}$ denotes the number of Boolean variables ($= 2^k$, where $k$ is the number of levels of the WCFLOBDD).
      The size measure $|\cdot|$ counts the number of
      vertices and edges---with no double-counting of vertices and edges in groupings that are shared due to hash-consing.
      In the column for the time complexities of WBDD operations, an occurrence of $c$ refers to a WBDD argument of the operation, and $|c|_B$ denotes the size of WBDD $c$ (the number of nodes and edges).
      Note that the complexity of MatrixMultiply is in terms of the sizes of the matrices represented by $c_1$ and $c_2$,
\twrchanged{
      and not the sizes of the data structures $c_1$ and $c_2$, plus the cost of  an added ``final call to \texttt{Reduce}.''
      The latter cost depends on the sizes of the structures that are input to and output from \texttt{Reduce}.
      That is, the complexity of $c’ = \mathtt{Reduce}(c)$ is
      $\bigO(|c'| \times |c|)$.
      Because it is difficult to combine the complexity of the symbolic matrix-multiplication computation and that of \texttt{Reduce}, we broke out the cost of \texttt{Reduce} as a separate item.
    }
    }
    \label{Ta:algo-list-cflobdds}
\end{sidewaystable}

In this section, we discuss the algorithms for operations on WCFLOBDDs.
\tableref{algo-list-cflobdds} shows a table with the operations on WCFLOBDDs, along with their complexities, as well as the complexities of the corresponding WBDDs operations.

Because WCFLOBDDs are hierarchically structured, algorithms are
divide-and-conquer algorithms that, for a given \texttt{InternalGrouping}, recurse on the A-connection and then the B-connections (or vice versa), thereby splitting the variables in half for each subproblem, along the following lines:

\begin{center}
\smallskip
\noindent
{\small{{\tt
\begin{minipage}{\columnwidth}
\begin{tabbing}
Op\=(Grouping g) \{ \+ \\
    $\ldots$       // Base case: Construct appropriate level-0 grouping \\
    Internal\=Grouping g' = new InternalGrouping(k);  // k = g.level \\
    // Recursive calls on Op(g.AConnection) and Op(g.BConnections[i]), \\
    // for 1 $\leq$ i $\leq$ numberOfBConnections, to fill in g'.AConnection and the \\
    // elements of array g'.BConnections with level-(k-1) Groupings \\
    $\ldots$ \\
    return RepresentativeGrouping(g'); \- \\
\}
\end{tabbing}
\end{minipage}
}}}
\end{center}

\smallskip
\noindent
For each \texttt{InternalGrouping} constructed, the algorithms enforce the structural invariants (\sectref{canonicity}) on return tuples and weights (at level $0$) so that the return value is a (proto-)WCFLOBDD.
The algorithms use two standard techniques (mostly elided in the pseudo-code)
to ensure that the amount of memory used to represent a WCFLOBDD does not blow up, and to reduce the cost of WCFLOBDD operations.
\emph{Hash consing}~\cite{goto1974monocopy} is used to ensure that only one representative of a value exists in memory.
The operation ${\tt RepresentativeGrouping }$ checks whether a grouping is a duplicate, and if so, discards it and returns the grouping's representative.
${\tt RepresentativeGrouping }$ consults a global hash-table that maintains pointers to the unique representation of each grouping.
Consequently, one can test in unit time if two proto-WCFLOBDDs are equal by comparing their pointers.
(As will be seen in \lineseqref{PPConstantOneStart}{PPConstantOneEnd} of
\twrchanged{
\algref{PairProduct},
}
this ability allows some special cases to be identified quickly and helps speed up computations.)

In discussions of space costs later in this section, the ``size of a WCFLOBDD'' is the total number of vertices and edges---with no double-counting of vertices and edges in groupings that are shared due to hash-consing.

A \emph{cache} for a function $F$ is an associative-lookup table with pairs of the form $[x, F(x)]$, used to eliminate the cost of re-doing a previously performed computation \cite{michie1967memo}.
All the algorithms on WCFLOBDDs perform function caching using the following idiom:
\begin{center}
\small{{\tt
\begin{minipage}{\columnwidth}
\begin{tabbing}
F(x)\=\ \{ \+ \\
    $\textbf{if}~\texttt{cache}_F(x) \neq \texttt{NULL}~\textbf{return}~\texttt{cache}_F(x)$; \\
    $\ldots$ \\
    $\texttt{cache}_F(x) = \texttt{retVal}$;  // Update the cache with the return value\\
    return retVal; \- \\
\}
\end{tabbing}
\end{minipage}
}}
\end{center}
At the end of every algorithm, the cache is updated with the computed result, thereby allowing a redundant computation to be avoided if the algorithm is called later with the same arguments.

\begin{figure}[tb!]
    \centering
    \begin{subfigure}{0.45\linewidth}
    \centering
    \includegraphics[width=0.38\linewidth]{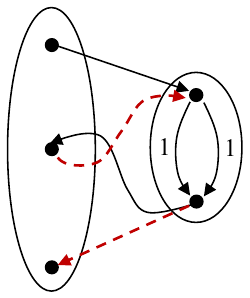}
    \caption{Level-$1$ ConstantOneProtoWCFLOBDD}
    \end{subfigure}
    \begin{subfigure}{0.45\linewidth}
    \centering
    \includegraphics[width=0.38\linewidth]{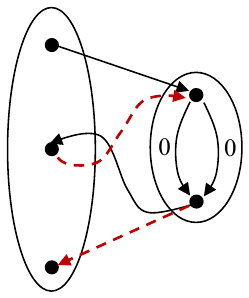}
    \caption{Level-$1$ ConstantZeroProtoWCFLOBDD}
    \end{subfigure}
\Omit{
    \begin{subfigure}{0.22\linewidth}
    \includegraphics[width=.8\linewidth]{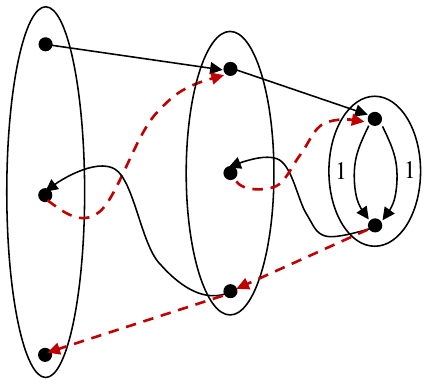}
    \caption{Level-$2$ ConstantOneProtoWCFLOBDD}
    \end{subfigure}
    \begin{subfigure}{0.45\linewidth}
    \includegraphics[width=.8\linewidth]{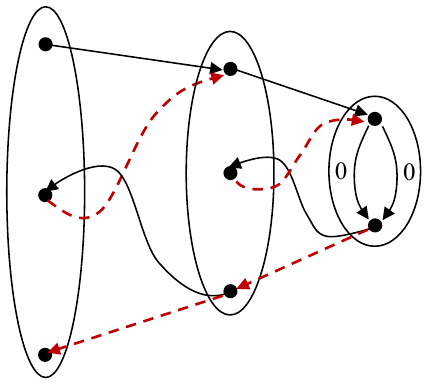}
    \caption{Level-$2$ ConstantZeroProtoWCFLOBDD}
    \end{subfigure}
}
    \begin{subfigure}{0.45\linewidth}
    \centering
    \includegraphics[width=0.76\linewidth]{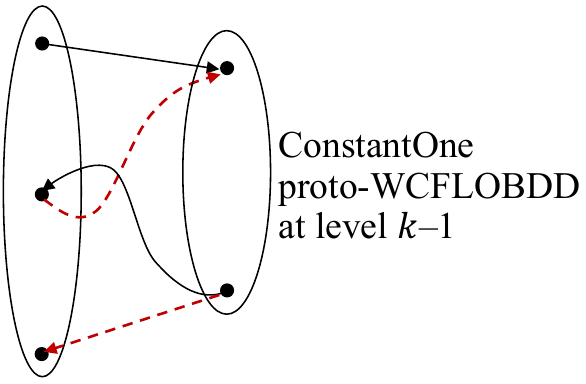}
    \caption{Level-$k$ ConstantOneProtoWCFLOBDD}
    \end{subfigure}
    \begin{subfigure}{0.45\linewidth}
    \centering
    \includegraphics[width=0.76\linewidth]{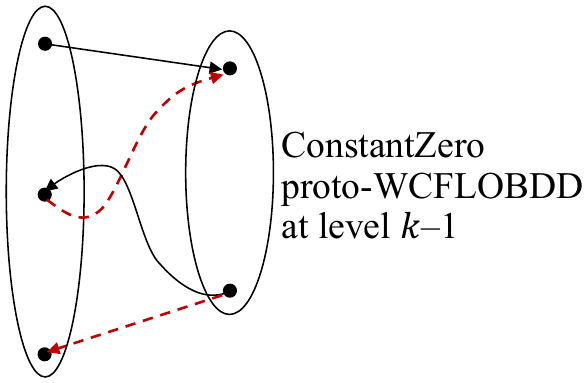}
    \caption{Level-$k$ ConstantZeroProtoWCFLOBDD}
    \end{subfigure}
\caption{${\tt ConstantOneProtoWCFLOBDD }$ and ${\tt ConstantZeroProtoWCFLOBDD}$ for levels $1$\Omit{, $2$,} and $k > 1$.}
\label{Fi:nodistinctfigs}
\end{figure}

\Omit{
\begin{algorithm}[tb!]
\caption{ConstantZeroProtoWCFLOBDD\label{Fi:ConstantZeroProtoCFLOBDDAlgorithm}}
\small{
\Input{int k -- level of the topmost grouping}
\Output{Proto-WCFLOBDD for $\lambda x. \bar{0}$ with $2^k$ variables}
\Begin{
\lIf{k == 0}{\Return RepresentativeDontCareGrouping($\bar{0},\bar{0}$)}
InternalGrouping g = new InternalGrouping(k)\;
g.AConnection = ConstantZeroProtoWCFLOBDD(k-1)\;
g.AReturnTuple = [1]\;
g.numberOfBConnections = 1\;
g.BConnections[1] = g.AConnection\;
g.BReturnTuples[1] = [1]\;
g.numberOfExits = 1\;
\Return RepresentativeGrouping(g)\;
}
}
\end{algorithm}
}

\Omit{
\begin{algorithm}
\caption{ConstantOneProtoCFLOBDD\label{Fi:ConstantOneProtoCFLOBDDAlgorithm}}
\small{
\Input{int k -- level of the topmost grouping}
\Output{Proto-CFLOBDD for $\lambda x. \bar{1}$ with $2^k$ variables}
\Begin{
\If{k == 0}{\Return RepresentativeDontCareGrouping(1,1)\;}
InternalGrouping g = new InternalGrouping(k)\;
g.AConnection = ConstantOneProtoCFLOBDD(k-1)\;
g.AReturnTuple = [1]\;
g.numberOfBConnections = 1\;
g.BConnections[1] = g.AConnection\;
g.BReturnTuples[1] = [1]\;
g.numberOfExits = 1\;
\Return RepresentativeGrouping(g)\;
}
}
\end{algorithm}
}

\subsubsection{Constant Functions.}
\figref{nodistinctfigs} shows the $\protoWCFLOBDD$s for the constant functions $f_{\bar{0}}(x) = \lambda x. \bar{0}$ and $ f_{\bar{1}}(x) = \lambda x. \bar{1}$ ($\texttt{ConstantZero}$ and $\texttt{ConstantOne}$, respectively).
These $\protoWCFLOBDD$s each have just one kind of level-$0$ grouping.
Their value tuples are $[\bar{0}]$ and $[\bar{1}]$, respectively.
\Omit{Many WCFLOBDDs algorithms perform equality tests on groupings with $\texttt{ConstantZero}$ and $\texttt{ConstantOne}$ to identify arguments on which to perform special-case handling.}

\subsubsection{Projection Function.}
Create a level-$k$ \protoWCFLOBDD $g$ for the function $f(k,i) \eqdef \lambda x_0 \dots x_{2^k -1}. x_i$.
If $i < 2^{k-1}$, $g$'s A-connection is a \protoWCFLOBDD for $f(k-1, i)$, and $g$ has two middle vertices. If $i \neq 0$, then the $1^{\textit{st}}$ B-connection of $g$ is {\tt ConstantOne} and the $2^{\textit{nd}}$ B-connection is {\tt ConstantZero};
if $i = 0$, the B-connections are the opposite.
If $i \geq 2^{k-1}$, $g$'s A-connection is {\tt ConstantOne} and $g$ has one middle vertex; $g$'s only B-connection is a \protoWCFLOBDD for $f(k-1, i - 2^{k-1})$. In all cases, $g$ has two exit vertices;
they lead to the value tuple $[\bar{1}, \bar{0}]$ in all cases except for $i = 0$, when they lead to $[\bar{0}, \bar{1}]$.

\subsubsection{Unary Operations.}
\textit{Scalar Multiplication.} Given a scalar $v \in \mathcal{D}$ and WCFLOBDD $C = \langle \textit{fw}, g, \textit{vt} \rangle$ for function $f$, the WCFLOBDD for $\lambda x. (v \cdot f(x))$ is
$
  C' = \begin{cases}
         \langle v \cdot \textit{fw}, g, \textit{vt} \rangle & v \neq \bar{0} \\
         \texttt{ConstantZero} & v = \bar{0}
       \end{cases}
$


\subsubsection{Pointwise Binary Operations.}
A binary operation $\op$ works \emph{pointwise} if, for two functions $f$ and $g$, $f \op g \eqdef \lambda x . f(x) \op g(x)$.
We discuss $\op \in \{ \cdot, + \}$.
The algorithms operate on WCFLOBDDs that are parametrized on the semi-field $\mathcal{D}$, and hence, the operations $\cdot$ and $+$ are specific to $\mathcal{D}$.

\begin{algorithm}[tb!]
\caption{Pointwise Multiplication\label{Fi:MBO}}
\small{
\Input{WCFLOBDDs n1 = $\langle \textit{fw1}, g1, \textit{vt1} \rangle$, n2 = $\langle \textit{fw2}, g2, \textit{vt2} \rangle$}
\Output{\twrchanged{WCFLOBDD} n = n1 $\cdot$ n2}
\Begin{
\tcp{Perform cross product}
    Grouping$\times$PairTuple [g,pt] = PairProduct(g1,g2)\;  \label{Li:BAAR:CallPairProduct}
    ValueTuple deducedValueTuple = [ vt1[i1] $\cdot$ vt2[i2]~:~[i1,i2] $\in$ pt ]\;  \label{Li:BAAR:LeafValues}
     \tcp{Collapse duplicate leaf values, folding to the left}
    Tuple$\times$Tuple [inducedValueTuple,inducedReductionTuple] = CollapseClassesLeftmost(deducedValueTuple)\;  \label{Li:BAAR:CollapseLeafValues}
    Grouping$\times$Weight [g', $\textit{fw}$] = Reduce(g, inducedReductionTuple, deducedValueTuple)\;  \label{Li:BAAR:CallReduce}
    WCFLOBDD n = RepresentativeCFLOBDD($\textit{fw} \cdot \textit{fw1} \cdot \textit{fw2}$, g', inducedValueTuple)\;
    \Return n\;
}
}
\end{algorithm}

\smallskip
\noindent
\textit{Pointwise Multiplication (\algref{MBO}).}
Given the WCFLOBDDs for functions $f$ and $g$, the goal is to create the WCFLOBDD for their pointwise product, $f \cdot g$.
Let $c_1 = \langle \textit{fw}_1, g_1, v_1\rangle$ and $c_2 = \langle \textit{fw}_2, g_2, v_2 \rangle$ be the WCFLOBDDs that represent $f$ and $g$, respectively.
As with BDDs, such operations on WCFLOBDDs can be implemented via a two-step process:
(i) create a cross-product of $c_1$ and $c_2$, and
(ii) perform a reduction step on the result of step (i).
The cross-product
(called PairProduct)
is performed recursively on the A-connection groupings, followed by the B-connection groupings
(see \figref{PairProductIllustration} and \algref{PairProduct}).
The cross-product of two groupings yields a tuple of the form ($g$, $[pt]$), where $g$ is the resultant grouping and $pt$ is a sequence of index-pairs.
The index-pairs in $pt$ indicate the B-connection groupings on which cross-product operations need to be performed.
\figref{PairProductIllustration}(c) shows the cross-product of the WCFLOBDDs that represent
$H_2 = \frac{1}{\sqrt{2}}\left[\begin{smallmatrix}
        1 & 1\\
        1 & -1
    \end{smallmatrix}\right]$
and $I_2 = \left[\begin{smallmatrix}
        1 & 0\\
        0 & 1\\
        \end{smallmatrix}\right]$.

\begin{figure}[tb!]
    \centering
    \begin{subfigure}{0.32\linewidth}
        \includegraphics[width=\linewidth]{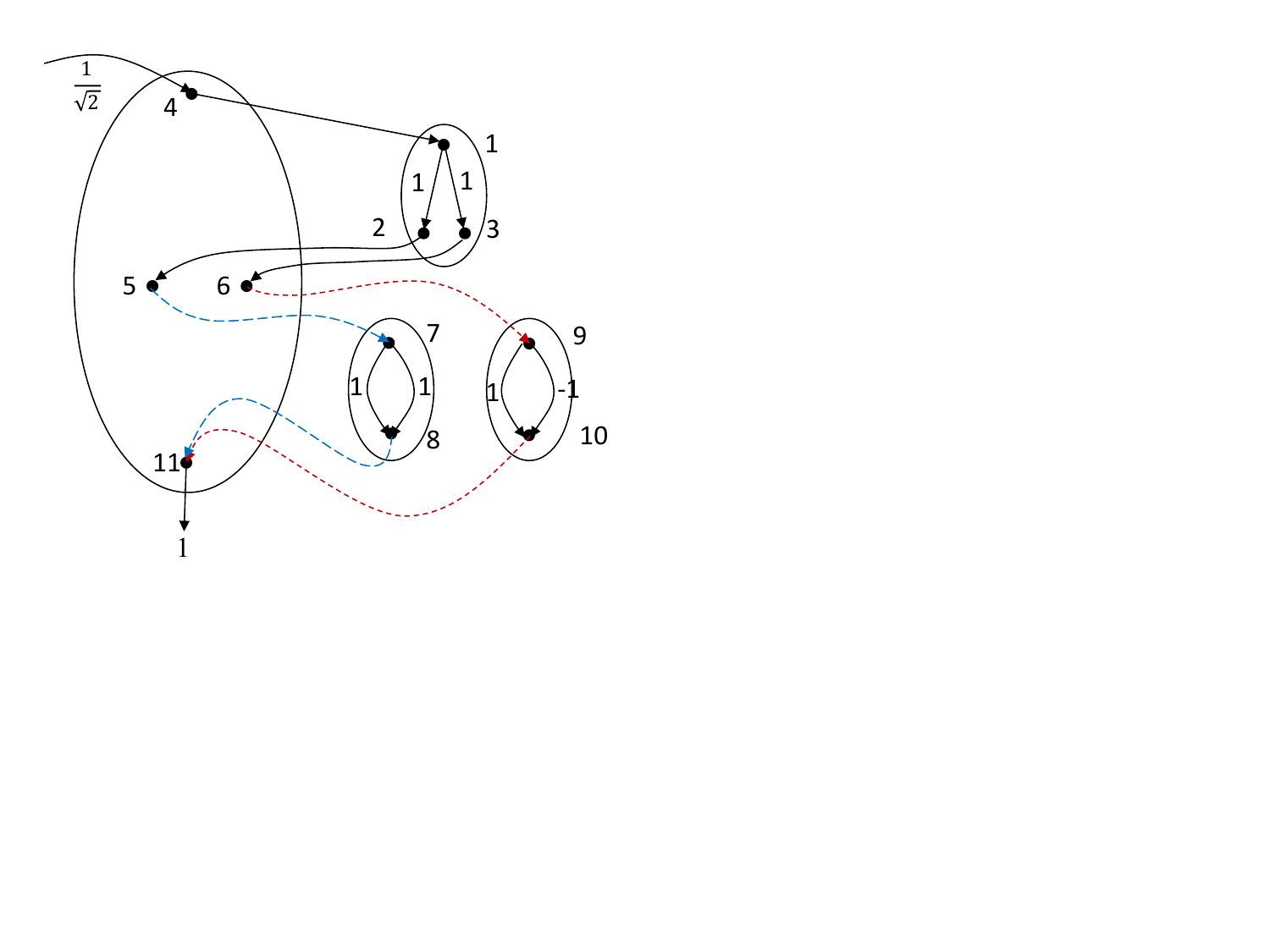}
        \caption{$H_2 = \frac{1}{\sqrt{2}}\left[\begin{smallmatrix}
        1 & 1\\
        1 & -1
    \end{smallmatrix}\right]$}
    \end{subfigure}
    \begin{subfigure}{0.32\linewidth}
        \includegraphics[width=\linewidth]{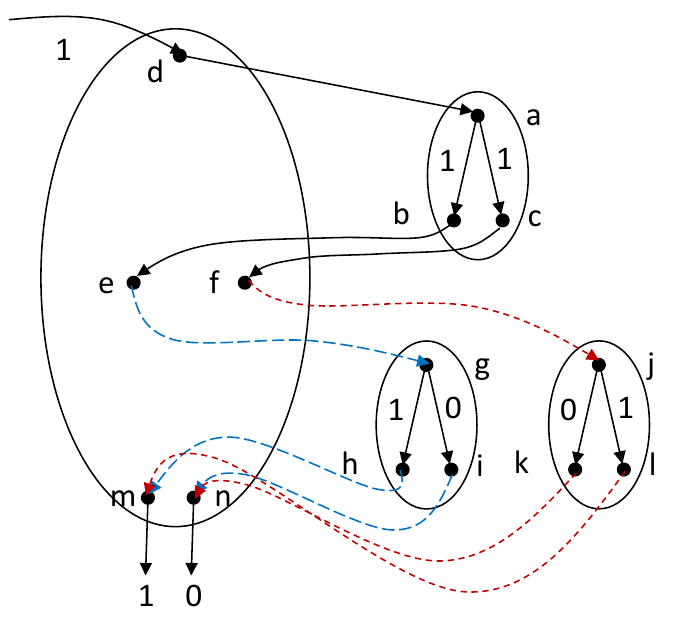}
        \caption{$I_2 = \left[\begin{smallmatrix}
        1 & 0\\
        0 & 1\\
        \end{smallmatrix}\right]$}
    \end{subfigure}
    \begin{subfigure}{0.32\linewidth}
        \includegraphics[width=\linewidth]{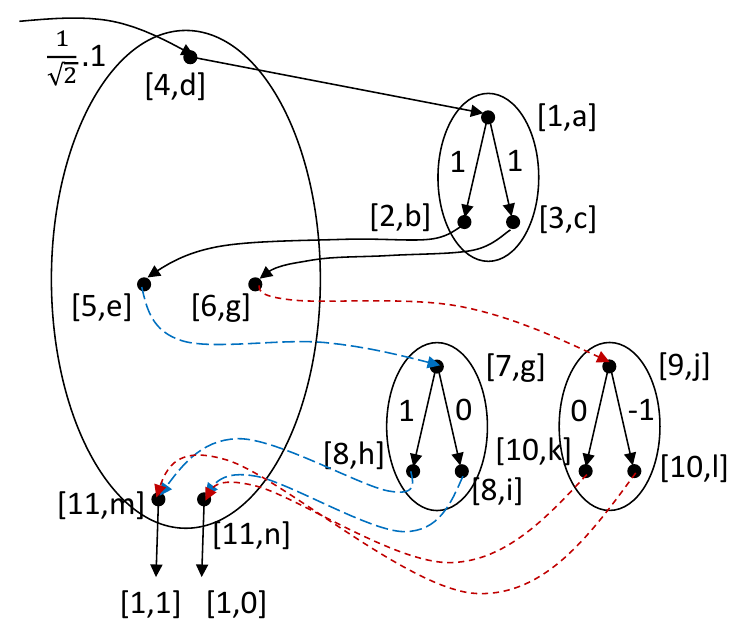}
        \caption{Result of ${\tt PairProduct}$ on (a) and (b)}
    \end{subfigure}
    \caption{Illustration of ${\tt PairProduct}$ 
      of the WCFLOBDDs for the matrices $H_2$ and $I_2$.
    }
    \label{Fi:PairProductIllustration}
    \vspace{-3ex}
\end{figure}

\begin{algorithm}[tb!]
\caption{PairProduct\label{Fi:PairProduct}}
\small{
\Input{Groupings g1, g2}
\Output{Grouping g: product of g1 and g2; PairTuple ptAns: tuple of pairs of exit vertices}
\Begin{
    \lIf {g1 and g2 are both ConstantOneProtoCFLOBDD \label{Li:PPConstantOneStart}}  
    { \Return [ g1, [[1,1]] ]}   
    \lIf {g1 or g2 is ConstantZeroProtoCFLOBDD }  
    { \Return [ ConstantZeroProtoCFLOBDD(g1.level), [[1,1]] ]}   
    \lIf {g1 is ConstantOneProtoCFLOBDD}
    {\Return [ g2, [[1,k]~:~k $\in$ [1..g2.numberOfExits]] ]}
    \lIf {g2 is ConstantOneProtoCFLOBDD}
    {\Return [ g1, [[k,1]~:~k $\in$ [1..g1.numberOfExits]] ]}      \label{Li:PPConstantOneEnd}
\tcp{Elided: similar cases for other base cases, with appropriate pairings of exit vertices}
\If {g1 and g2 are fork groupings  \label{Li:PPBothForkGroupings}} {
    ForkGrouping g = new ForkGrouping(g1.lw $\cdot$ g2.lw, g1.rw $\cdot$ g2.rw)\;
\Return [ g, [[1,1],[2,2]] ]\;
} \label{Li:PPBothForkGroupingsEnd}
\tcp{ Pair the A-connections}
    Grouping$\times$PairTuple [gA,ptA] = PairProduct(g1.AConnection, g2.AConnection)\;
    InternalGrouping g = new InternalGrouping(g1.level)\;
    g.AConnection = gA\;
    g.AReturnTuple = [1..$|$ptA$|$]\tcp*[r]{Represents the middle vertices}
    g.numberOfBConnections = $|$ptA$|$\;  \label{Li:PPNumberOfBConnections}
    \tcp{Pair the B-connections, but only for pairs in ptA }
    \tcp{Descriptor of pairings of exit vertices}
    Tuple ptAns = []\;  \label{Li:ptAnsDeclaration}

    \tcp{Create a B-connection for each middle vertex}
    \For {$j \leftarrow 1$ \KwTo $|ptA|$}{
        Grouping$\times$PairTuple [gB,ptB] = PairProduct(g1.BConnections[ptA(j)(1)], g2.BConnections[ptA(j)(2)])\;
        g.BConnections[j] = gB \;
        \tcp{Create g.BReturnTuples[j], and augment ptAns as necessary}
        g.BReturnTuples[j] = [] \;        \label{Li:PPExitVertexLoopStart}
        \For{$i \leftarrow 1$ \KwTo $|ptB|$} {
            c1 = g1.BReturnTuples[ptA(j)(1)](ptB(i)(1))\tcp*[r]{a g1 exit}
            c2 = g2.BReturnTuples[ptA(j)(2)](ptB(i)(2)) \tcp*[r]{a g2 exit}
            \eIf(\tcp*[f]{Not a new exit vertex of g}){[c1,c2] $\in$ ptAns}
                {
                index = the k such that ptAns(k) == [c1,c2] \;
                g.BReturnTuples[j] = g.BReturnTuples[j] $||$ index \;
                }
            (\tcp*[f]{Identified a new exit vertex of g}){
                g.numberOfExits = g.numberOfExits + 1 \;
                g.BReturnTuples[j] = g.BReturnTuples[j] $||$ g.numberOfExits \;
                ptAns = ptAns $||$ [c1,c2] \; \label{Li:PPConcatenateExitPair}
            }
        }  \label{Li:PPExitVertexLoopEnd}
    }
    \Return [RepresentativeGrouping(g), ptAns]\; \label{Li:PPTabulateAnswer}
}
}
\end{algorithm}

At level-$0$ and the topmost level, the
PairProduct algorithm performs the following steps:
\begin{itemize}
  \item
    Level-$0$:
    Let the two level-$0$ groupings be $g_1^0$ with weights $(lw_1, rw_1)$ and $g_2^0$ with weights $(lw_2, rw_2)$.
    The return value is a new level-$0$ grouping $g$, with weights $(lw_1 \cdot lw_2, rw_1 \cdot rw_2)$, along with a sequence of index-pairs on which the cross-product is performed next.
    For example, if $g_1^0$ is a fork-grouping and $g_2^0$ is a don't-care grouping, then $pt = [(1,1), (2,1)]$.
  \item
    Topmost level:
    If $(g, pt)$ is returned after performing the cross-product, the indices in $pt$ are indices into the value tuples of $c_1$ and $c_2$.
    A new value tuple $v$ is constructed accordingly.
    For instance, if $v_1 = [\bar{1}, \bar{0}]$, $v_2 = [\bar{0}, \bar{1}]$, and $pt = [(1,1), (1, 2), (2, 1)]$, then $v$ is
    $[ (v_1[1] \cdot v_2[1]), (v_1[1] \cdot v_2[2]), (v_1[2] \cdot v_2[1])] = [\bar{1} \cdot \bar{0}, \bar{1} \cdot \bar{1}, \bar{0} \cdot \bar{0}] = [\bar{0}, \bar{1}, \bar{0}]$.
\end{itemize}

\begin{algorithm}[tb!]
\caption{CollapseClassesLeftmost\label{Fi:InducedTuples}}
\small{
\Input{Tuple equivClasses}
\Output{Tuple$\times$Tuple [projectedClasses, renumberedClasses]}
\Begin{
\tcp{Project the tuple equivClasses, preserving left-to-right order, retaining the leftmost instance of each class}
\label{Li:IT:TupleCollapseStart}
Tuple projectedClasses = [equivClasses(i) : i $\in$ [1..$|$equivClasses$|$] $|$ i = min\{j $\in$ [1..$|$equivClasses$|$] $|$ equivClasses(j) = equivClasses(i)\}]\;
\label{Li:IT:TupleCollapseEnd}
\tcp{Create tuple in which classes in equivClasses are renumbered according to their ordinal position in projectedClasses}
Map orderOfProjectedClasses = \{[x,i]: i $\in$ [1..$|$projectedClasses$|$] $|$ x = projectedClasses(i)\}\;
Tuple renumberedClasses = [orderOfProjectedClasses(v) : v $\in$ equivClasses]\;
\Return [projectedClasses, renumberedClasses]\;
}
}
\end{algorithm}

When the resulting value tuple has duplicate entries, such as $v = [\bar{0}, \bar{1}, \bar{0}]$ in the example above, it is necessary to perform a reduction step, \texttt{Reduce}
(\algref{Reduce} in \sectref{canonicalness}),
to maintain the WCFLOBDD structural invariants.
In this case, \texttt{Reduce} folds together the first and third exit vertices of $g$, which are both mapped to $\bar{0}$ by $v$.
In \algref{MBO}, the information that directs the reduction step is obtained by calling CollapseClassesLeftmost (\algref{InducedTuples}), which returns two tuples:
inducedValueTuple (here, $[\bar{0}, \bar{1}]$) and inducedReductionTuple (here, $[1, 2, 1]$):
inducedValueTuple consists of the leftmost occurrences of $\bar{0}$ and $\bar{1}$ in $v$, in the same left-to-right order in which they occur in $v$;
inducedReductionTuple indicates where each occurrence of $\bar{0}$ and $\bar{1}$ in $v$ is mapped to in inducedValueTuple.
\texttt{Reduce} traverses $g$ backwards to create $g'$, a reduced version of $g$ that, together with inducedValueTuple as its terminal values, satisfies the WCFLOBDD structural invariants.
The result from $c_1 \cdot c_2$ is the WCFLOBDD $\langle \textit{fw}_1 \cdot \textit{fw}_2, g', \textrm{inducedValueTuple} \rangle$.

Just as there can be multiple occurrences of a given node in a BDD, there can be multiple occurrences of a given grouping in a WCFLOBDD.
To avoid a blow-up in costs, binary operations need to avoid making repeated calls on a given pair of groupings $h_1 \in c_1$ and $h_2 \in c_2$.
Assuming that the hashing methods used for hash-consing and function caching run in expected unit-cost time, the cost of
PairProduct
is bounded by the product of the sizes of the two-argument WCFLOBDDs, and the cost of \texttt{Reduce} is bounded by the product of the sizes of its input and output WCFLOBDDs.


\smallskip
\noindent
\textit{Pointwise Addition 
(\algrefs{ABO}{WeightedPairProduct} in \sectref{PointwiseAddition})
.}
Given the WCFLOBDDs $f = \langle \textit{fw}_1, h_1, v_1\rangle$ and $g = \langle \textit{fw}_2, h_2, v_2 \rangle$, the goal is to create the WCFLOBDD for their pointwise sum, $f + g$.
The algorithm performs a kind of ``weighted'' cross-product of $f$ and $g$.
The algorithm is similar to the one for pointwise multiplication, but with a few 
embellishments---in particular, weights are aggregated at each level and used in the recursive computation at the next level.
The weighted cross-product construction is performed recursively on the A-connections, followed by additional recursive calls on the B-connections according to sets of B-connection indices included in the answer returned from the call on the A-connections.
In addition to two groupings, $g_1$ and $g_2$, the cross-product also takes as input two weights, $p_1$ and $p_2$.
To get things started, the top-level call is WeightedPairProduct($h_1$,$h_2$, $\textit{fw}_1$, $\textit{fw}_2$).

The return value takes the form $[g, pt]$, where $g$ is a grouping that is temporary (and later will be updated with actual weights) and $pt$ is a sequence of tuples of the form $\langle (q_1, i_1), (q_2, i_2) \rangle$.
Values such as $i_1$ and $i_2$ are exit-vertex indices---used to determine what further recursive calls to invoke---and $q_1$ and $q_2$ are weights that play the role of $p_1$ and $p_2$ in those recursive
calls---i.e., WeightedPairProduct($\_$, $\_$, $q_1$, $q_2$), where the groupings used as the first and second arguments are based on $i_1$ and $i_2$, respectively.

\emph{Level $0$}:
Base-case processing happens at level $0$.
Let the two level-$0$ groupings be $g_1^0$ with weights $(lw_1, rw_1)$ and $g_2^0$ with weights $(lw_2, rw_2)$.
Suppose that the input weights are $p_1$ and $p_2$.
A new level-$0$ grouping $g$ is created with weights $(x_1, x_2)$, where $x_1 = \bar{1}$ in all cases except when $p_1 \cdot lw_1 = \bar{0}$, in which case $x_1 = \bar{0}$, and
$x_2 = \bar{1}$ in all cases except when $p_2 \cdot lw_2 = \bar{0}$, in which case $x_2 = \bar{0}$.
The return values are $g$ and $pt$, where $pt$ contains two tuples indicating which cross-products are to be performed subsequently.
The following table shows how $pt$ is constructed:
{\small
\begin{equation*}
  \begin{array}{ll|l}
    \multicolumn{1}{c}{g_1^0} & \multicolumn{1}{c}{g_2^0} & \multicolumn{1}{c}{pt} \\
    \hline
    \texttt{DontCareGrouping} & \texttt{DontCareGrouping}
         & [\langle (p_1 \cdot lw_1, 1), (p_2 \cdot lw_2, 1) \rangle, \langle (p_1 \cdot rw_1, 1), (p_2 \cdot rw_2, 1) \rangle] \\
    \hline
    \texttt{ForkGrouping} & \texttt{DontCareGrouping}
         & [\langle (p_1 \cdot lw_1, 1), (p_2 \cdot lw_2, 1) \rangle, \langle (p_1 \cdot rw_1, 2), (p_2 \cdot rw_2, 1) \rangle] \\
    \hline
    \texttt{DontCareGrouping} & \texttt{ForkGrouping}
         & [\langle (p_1 \cdot lw_1, 1), (p_2 \cdot lw_2, 1) \rangle, \langle (p_1 \cdot rw_1, 1), (p_2 \cdot rw_2, 2) \rangle] \\
    \hline
    \texttt{ForkGrouping} & \texttt{ForkGrouping}
         & [\langle (p_1 \cdot lw_1, 1), (p_2 \cdot lw_2, 1) \rangle, \langle (p_1 \cdot rw_1, 2), (p_2 \cdot rw_2, 2) \rangle] \\
    \hline
  \end{array}
\end{equation*}}

\emph{Topmost level}:
Suppose that $(\textit{fg}, \textit{pt})$ is returned by the top-level call on WeightedPairProduct.
Some post-processing takes place to set up an appropriate call on \texttt{Reduce}.
In each tuple $\langle (q_1, i_1), (q_2, i_2) \rangle$ in $\textit{pt}$, $i_1$ and $i_2$ are indices into the value tuples $v_1$ and $v_2$, respectively, of $f$ and $g$.
For instance, if $v_1 = [\bar{1}, \bar{0}]$, $v_2 = [\bar{1}]$, and
$pt = [\langle (d_1, 1), (d_2, 1) \rangle, \langle (d_3, 2), (d_4, 1) \rangle]$, where $d_1, d_2, d_3, d_4 \in \mathcal{D}$,
then a tuple $v$ is created: $v = [ (d_1 \cdot v_1[1] + d_2 \cdot v_2[1]), (d_3 \cdot v_1[2] + d_4 \cdot v_2[1])] = [d_1 + d_2, d_4]$.
$\textit{fg}$ is then reduced with respect to $v$, i.e., $v$ is propagated through $\textit{fg}$
to create $\textit{fg}'$ so that
(i) the edges of level-0 groupings in $\textit{fg}'$ hold appropriate weights, and
(ii) $\textit{fg}'$ satisfies the structural invariants of \defref{StructuralInvariants}.
The reduction operation also returns the new factor weight $\textit{fw}'$.
For our example, assuming that neither $d_1 + d_2$ nor $d_4$ is $\bar{0}$, the overall top-level tuple of terminal values would be $v' = [\bar{1}]$:
$v'$ is obtained from $v$ by replacing every non-zero element in $v$ with $\bar{1}$---to obtain $v'' = [\bar{1}, \bar{1}]$---and
then $v'$ retains only the first occurrences of the different values in $v''$.
The resultant WCFLOBDD is
$h = f + g = \langle \textit{fw}', \textit{fg}', v'\rangle$.

The pseudo-code is
given as
~\algrefs{ABO}{WeightedPairProduct} in \sectref{PointwiseAddition}.
Note that several optimizations can be performed for special cases, such as $\texttt{ConstantZero}$ or cases when both groupings are equal, 
to avoid performing a traversal of all the levels of groupings.

\subsubsection{Representing Matrices and Vectors.}
When matrices are represented with WCFLOBDDs, the variables correspond to the bits of the matrix's row and column indices.
For a matrix $M$ of size $2^{n} \times 2^{n}$, the WCFLOBDD representation has $2n$ variables $(x_0, \ldots, x_{n-1})$ and $(y_0, \ldots, y_{n-1})$, where the $x$-variables are the
row-index bits and the $y$-variables are the column-index
bits.
Typically, we use a variable ordering in which the $x$ and $y$ variables are interleaved,
$x_0$ and $y_0$ are the most-significant bits, and $x_{n-1}$ and $y_{n-1}$ are the least-significant bits.
The nice property of this ordering is that, as we work through each pair of variables in an assignment, the matrix elements that remain ``in play'' represent a sub-block of $M$. 

When vectors of size $2^n \times 1$ are represented using WCFLOBDDs, the variables $(x_0, \ldots, x_{n-1})$ correspond to the bits of the vector's row index.
Typically, $x_0$ is the most-significant bit and $x_{n-1}$ is the least-significant bit.

\subsubsection{Kronecker Product.}
Consider two matrices $M_1$ and $M_2$ represented by WCFLOBDDs $c_1 = \langle \textit{fw}_1, g_1, v_1 \rangle$ and $c_2 = \langle \textit{fw}_2, g_2, v_2 \rangle$.
Their Kronecker product $M = M_1 \otimes M_2$ is performed by using the 
``procedure-call'' mechanism of WCFLOBDDs to ``stack'' the Boolean variables of $c_1$ on top of those of $c_2$.
If $c_1$ or $c_2$ is $\texttt{ConstantZero}$, then $g = \texttt{ConstantZero}$.
Otherwise, a new grouping $g$ is created with $g_1$ as the A-connection of $g$, and $g_2$ as the B-connection(s) of $g$. If there exists an exit vertex $e$ of $g_1$ that has a terminal value of $\bar{0}$,
then the middle vertex of $g$ that is connected to $e$ has a B-connection to a \texttt{ConstantZeroProtoWCFLOBDD}.
The terminal values of $g$ are as follows:
\begin{equation*}
  v = \begin{cases}
        v_2 &
             \text{if $v_1$ == $[\bar{1}]$ (i.e., no such $e$ exists)}
        \\
        [\bar{0},\bar{1}] & 
            \text{if $e$ is the $1^{\textit{st}}$ exit vertex of $g_1$}
             \\
        [\bar{1},\bar{0}]
        &
              \text{if $e$ is the $2^{\textit{nd}}$ exit vertex of $g_1$ and $v_2 = [\bar{1}]$}
        \\
        v_2 & \text{otherwise}
      \end{cases}
\end{equation*}

The resultant WCFLOBDD $c = c_1 \otimes c_2$ is $\langle \textit{fw}_1 \cdot \textit{fw}_2, g, v \rangle$.

\subsubsection{Matrix Multiplication.}
\twrchanged{
Matrix multiplication is performed by a recursive divide-and-conquer algorithm, where the divide step performs a block decomposition that reduces the problem size to $\sqrt{N} \times \sqrt{N}$, and the (recursive) conquer step is similar to the standard cubic-time algorithm.
In such a divide-and-conquer algorithm, one solves subproblems on A-connections and then B-connections.\footnote{
\twrchanged{
  Some other divide-and conquer algorithms on WCFLOBDDs solve subproblems on B-connections and then A-connections.
}
}
In essence, one splits on half of the Boolean variables, which leads to $O(\sqrt{N})$ problems, each of size $O(\sqrt{N})$ \cite[Fig.\ 12]{TOPLAS:SCR24}.
}

\twrchanged{
Consider two $N \times N$ matrices $P$ and $P'$, represented by WCFLOBDDs $C = \langle \textit{fw}, g, v \rangle$ and $C' = \langle \textit{fw}', g', v' \rangle$
using the interleaved-variable order.
The A-connections of $g$ and $g'$ represent the commonalities in sub-blocks of $P$ and $P'$, respectively, of size $\sqrt{N} \times \sqrt{N}$ and the B-connections of $g$ and $g'$ represent the sub-matrices of $P$ and $P'$, respectively, of size $\sqrt{N}\times \sqrt{N}$.
The matrix-multiplication algorithm is recursively called on the A-connections of $g$ and $g'$, followed by the B-connections, based on information returned by the call on the A-connections, and then possibly some matrix-addition operations.
}

The challenge that we face is that at all levels below top-level,
\twrchanged{
various sub-matrices of the left argument need to be multiplied by various sub-matrices of the right argument, and added together.
However, for the computation performed at a given level of the WCFLOBDD, the algorithm has neither values nor sub-matrices at hand.
Those (unknown) sub-matrices correspond to the exit vertices of the groupings that the current invocation of the algorithm was passed as the left and right arguments.
Call these sets of exit vertices $\EV$ and $\EV’$, respectively.
We can use the elements of $\EV$ and $\EV’$ as variables, and compute non-top-level matrix multiplications symbolically.
It turns out that the symbolic information we need to keep takes the form of bilinear polynomials over $\EV$ and $\EV’$, consisting of summands of the form $c \cdot \ev_i \cdot \ev’_j$, where $\ev_i \in \EV$ and $\ev’_j \in \EV’$.
}

\begin{figure}[tb!]
\centering
\twrchanged{
\begin{tabular}{@{\hspace{0ex}}ccc@{\hspace{0ex}}}
  \begin{tabular}{@{\hspace{0ex}}c@{\hspace{0ex}}}
    \includegraphics[width=.32\linewidth]{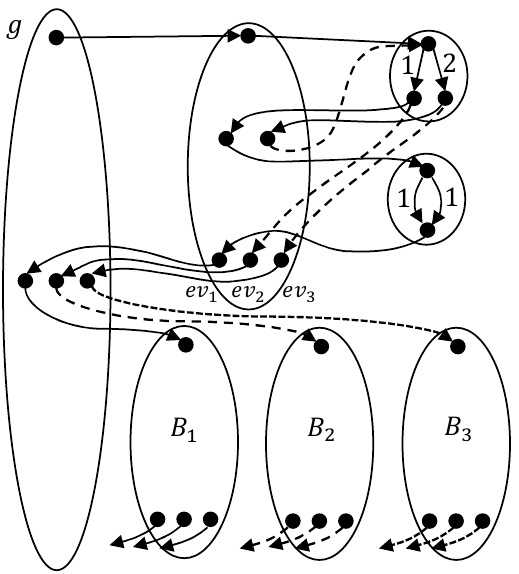}
  \end{tabular}
  &
  \begin{tabular}{@{\hspace{0ex}}c@{\hspace{0ex}}}
    \includegraphics[width=.35\linewidth]{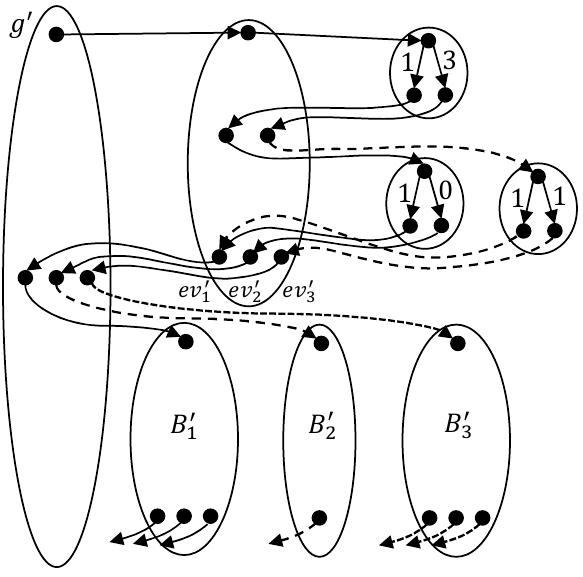}
  \end{tabular}
  &
  \begin{tabular}{@{\hspace{0ex}}c@{\hspace{0ex}}}
    \includegraphics[width=.29\linewidth]{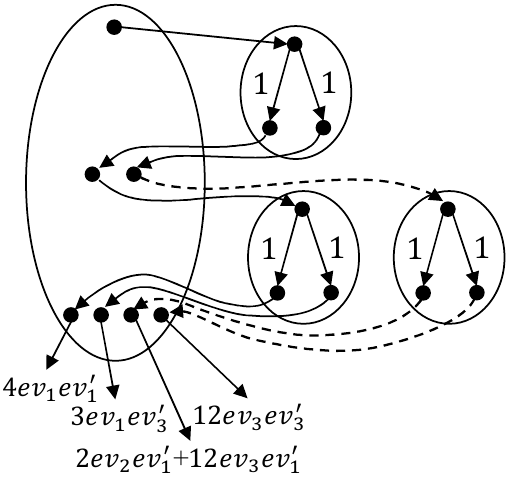}
  \end{tabular}
  \\
  {\small (a)} & {\small (b)} & {\small (c)}
\end{tabular}
}
\vspace{-1.5ex}
\caption{
\twrchanged{
  Illustration of how bilinear polynomials over exit vertices of lower-level groupings arise in matrix multiplication.
  (a) Left argument of $g \times g'$; (b) right argument of $g \times g'$; (c) the level-1 structure that is constructed in the level-1 subproblem $g$.AConnection $\times$ $g'$.AConnection.
}
}
\label{Fi:MatrixMultiplicationIllustration}
\end{figure}

\twrchanged{
\begin{example}\label{Exa:MatrixMultiplicationIllustration}
\figref{MatrixMultiplicationIllustration} illustrates of how such bilinear polynomials over exit vertices arise in matrix multiplication.
\figref{MatrixMultiplicationIllustration}(a) and (b) show level-2 groupings, $g$ and $g'$, which are the left-hand and right-hand arguments of a matrix-multiplication (sub)problem.
The first step of this multiplication problem is to symbolically multiply the level-1 groupings $g$.AConnection and $g'$.AConnection.
In \figref{MatrixMultiplicationIllustration}(a) and (b), when considered as $2\times 2$ matrices over their respective exit vertices, $[\ev_1, \ev_2, \ev_3]$ and $[\ev_1', \ev_2', \ev_3']$,
$g$.AConnection and $g'$.AConnection are the matrices of bilinear functions shown on the left side of \eqref{TwoByTwoMatrixSymbolic}:
\begin{equation}
    \label{Eq:TwoByTwoMatrixSymbolic}
    \begin{bmatrix}
      \ev_1 & \ev_1 \\
      2\ev_2 & 4\ev_3
    \end{bmatrix}
    \times
    \begin{bmatrix}
      \ev'_1 & 0\ev'_2 \\
      3\ev'_1 & 3\ev'_3 
    \end{bmatrix}
    =
    \begin{bmatrix}
      4\ev_1 \ev'_1 & 3\ev_1 \ev'_3 \\
      2\ev_2 \ev'_1 + 12\ev_3 \ev'_1 & 12\ev_3 \ev'_3
    \end{bmatrix}\\
  \end{equation}
\figref{MatrixMultiplicationIllustration}(c) shows the level-1 preliminary structure that is constructed after multiplying
$g$.AConnection and $g'$.AConnection.
This structure represents the matrix on the right-hand side of \eqref{TwoByTwoMatrixSymbolic}.
Each exit vertex of \figref{MatrixMultiplicationIllustration}(c) is associated with a bilinear polynomial consisting of summands of the form $c \cdot ev_i \cdot ev’_j$, where $\ev_i \in \EV$ and $\ev’_j \in \EV’$.
(In essence, the structure is a CFLOBDD with a bilinear polynomial for each terminal value.)
The interpretation of the bilinear monomials from \figref{MatrixMultiplicationIllustration}(c) leads to five B-connection multiplications being performed: $B_1 \times B_1'$, $B_1 \times B_3'$, $B_2 \times B_1'$, $B_3 \times B_1'$, and $B_3 \times B_3'$.
Each multiplication produces a structure that is similar to the one shown in \figref{MatrixMultiplicationIllustration}(c).
Five scalar multiplications and one matrix addition are then performed.
These last steps again produce bilinear polynomials because bilinear polynomials are closed under linear arithmetic (see \eqref{BilinearPolynomialOperations}).
\end{example}
}

The matrix-multiplication algorithm works recursively level-by-level, creating WCFLOBDDs for which the entries in the value tuples are bilinear polynomials.
We call such value tuples \emph{MatMultTuple}s.
A \emph{MatMultTuple} is a sequence of bilinear polynomials over the exit vertices of two groupings $g_1$ and $g_2$.
Each bilinear polynomial $\bp$ is a map from a pair of exit-vertex indices to a coefficient (which is a value in $\mathcal{D}$):
$\bp \in \BP_{\EV,\EV'} \eqdef (\EV \times \EV') \rightarrow \mathcal{D}$, where $\EV, \EV'$ are the sets of exit vertices of $g_1$ and $g_2$.
To perform linear arithmetic on bilinear polynomials, we define
\begin{equation}
  \label{Eq:BilinearPolynomialOperations}
  \begin{array}{@{\hspace{0ex}}l@{\hspace{5.0ex}}r@{\hspace{0.5ex}}c@{\hspace{0.5ex}}l@{\hspace{0ex}}}
      \ZeroBP : \BP & \ZeroBP & \eqdef & \lambda (\ev,\ev') \, . \,\bar{0} \\
      + : \BP \times \BP \rightarrow \BP & \bp_1 + \bp_2 & \eqdef & \lambda (\ev,\ev') \, . \,\bp_1(\ev,\ev') + \bp_2(\ev,\ev')\\
      \cdot :\mathcal{D} \times \BP \rightarrow \BP & n \cdot \bp & \eqdef & \lambda (\ev,\ev') \, . \,n \cdot \bp(\ev,\ev')\\
  \end{array}
\end{equation}

The base case is for two level-$1$ proto-WCFLOBDDs, which correspond to a pair of $2 \times 2$ matrices.
The left and right decision-edges of a level-$1$ proto-WCFLOBDD's level-$0$ groupings hold weights, which serve as coefficients of \emph{linear} functions in the entries of each of the $2 \times 2$ matrices.
The result of multiplying two level-$1$ proto-WCFLOBDDs introduces bilinear polynomials.\footnote{
\twrchanged{
  As explained shortly, more complicated polynomials do not arise at levels $2, 3, \ldots$
}
}
Operationally, a bilinear polynomial can be represented by a set of triples of the form $\{ \ldots, [(i, j), c], \ldots \}$, where $[(i, j), c]$ represents a term of the form $c \cdot \ev_i \cdot \ev’_j$.

\begin{example}
Returning to \exref{MatrixMultiplicationIllustration}, each entry in the matrix on the right-hand side of \eqref{TwoByTwoMatrixSymbolic} can be represented by a set of triples.
\[
  \begin{bmatrix}
    \{[(1,1),4]\} & \{[(1,3),3]\} \\
    \{[(2,1),2], [(3, 1), 12]\} & \{[(3,3),12]\} 
  \end{bmatrix}  
\]
The \emph{MatMultTuple} is the listing of exit vertices for interleaved-variable order:
\begin{equation*}
  [ \{[(1,1),4]\}, \{[(1,3),3]\},  \{[(2,1),2], [(3,1),12]\}, \{[(3,3),12]\} ].
\end{equation*}
\end{example}

More abstractly, the function $\bp$ holds the coefficients of the bilinear polynomial over variable pairs $\{\ev \in \EV\} \times \{\ev' \in \EV’\}$, and $\bp$ implicitly denotes the expression
\[
    \sum_{{\ev \in \EV} \times {\ev \in \EV’}} \bp(\ev,\ev’) \cdot \ev \cdot \ev’.
\]

As symbolic matrix multiplication is performed, a key operation is to ``evaluate'' a bilinear polynomial with respect to a binding of exit-vertices to (other) bilinear polynomials.
\twrchanged{
So why don’t we get quartic polynomials, and then even higher-degree polynomials?
Because all operations are performed on bilinear polynomials over the same variable sets (i.e., sets of exit vertices).
Consider again \figref{MatrixMultiplicationIllustration}(c) and the bilinear polynomials associated with each exit vertex.
Each such polynomial can be considered to be a bilinear polynomial over the middle vertices of $g$ and $g'$, and thus consists of terms of the form $c \cdot m_i \cdot m’_j$.
Each such term is treated as a directive to multiply $g.\textrm{BConnections}[i]$ and $g’.\textrm{BConnections}[j]$.
}

\twrchanged{
For instance, the third exit vertex of \figref{MatrixMultiplicationIllustration}(c) has the associated bilinear polynomial $\textit{pA} = 2\ev_2 \ev'_1 + 12\ev_3 \ev'_1$.
The first term, $2\ev_2 \ev'_1$, is evaluated with respect to the binding $[\ev_2 \mapsto B_2, \ev'_1 \mapsto B'_1]$, leading to $B_2 \times B'_1$;
the second term, $12\ev_3 \ev'_1$, is evaluated with respect to the binding $[\ev_3 \mapsto B_3, \ev'_1 \mapsto B'_1]$, leading to $B_3 \times B'_1$. 
A multiplication of $g.\textrm{BConnections}[i]$ and $g’.\textrm{BConnections}[j]$ produces a matrix $m^{i,j}$ whose $(k,l)^{\textit{th}}$ entry is a bilinear polynomial $\textit{pB}^{i,j}_{k,l}$ over the exit vertices of $g.\textrm{BConnections}[i]$ and $g’.\textrm{BConnections}[j]$.
Now---and this observation is the key reason why everything stays bilinear---$g.\textrm{BReturnTuples}[i]$ and $g’.\textrm{BReturnTuples}[j]$ are used to convert $\textit{pB}^{i,j}_{k,l}$ into a bilinear polynomial $p^{i,j}_{k,l}$ over the \emph{exit vertices of $g$ and $g’$}.
Consequently, the remaining steps of ``evaluating'' $\textit{pA}$---multiplying a bilinear polynomial such as $p^{i,j}_{k,l}$ by a constant and addition of bilinear polynomials---are all performed on bilinear polynomials with the same variable sets, namely, the exit vertices of $g$ and $g’$.
Bilinear polynomials are closed under these operations (\eqref{BilinearPolynomialOperations}).
}

At top level, we form a preliminary value tuple $w$ by evaluating each bilinear polynomial $\bp$ in the top-level \emph{MatMultTuple} as follows:
\begin{equation*}
  \langle v, v' \rangle (\bp)
    \eqdef \sum \left\{ \bp(\ev, \ev') \cdot v(\ev) \cdot v'(\ev') \mid \ev \in \EV, \ev' \in \EV' \right\}
\end{equation*}
By structural invariant (\ref{It:OutermostLevel:ValueTuple}), the value tuple $\tilde{w}$ of the answer WCFLOBDD must be one of $\{ [\bar{0}], [\bar{1}], [\bar{0}, \bar{1}], [\bar{1}, \bar{0}] \}$.
Thus, $\tilde{w}$ is constructed from $w$ by replacing all non-$\bar{0}$ entries of $w$ with $\bar{1}$ and removing duplicates.
To ensure that the final answer satisfies the WCFLOBDD structural invariants, \texttt{Reduce} is called (\algref{Reduce} in \sectref{canonicalness}), which is passed the pattern of repeated values in $w$ (along with $w$ itself).

Certain optimizations can be performed to avoid traversing all the levels of the argument groupings:
when at least one argument is $\texttt{ConstantZero}$, then all paths have weight $\bar{0}$, and the computation can be short-circuited.
Similarly, when one of the arguments is the identity matrix, we can return the other argument.


\subsubsection{Sampling.}
\label{Se:SamplingOverview}
A WCFLOBDD with no non-negative edge weights can be considered to represent a discrete distribution over the set of assignments to the Boolean variables.
An assignment---or equivalently, the corresponding matched path—is considered
to be an elementary event. The probability of a matched path $p$ is the weight of
$p$ divided by the sum of the weights of all matched paths of the WCFLOBDD.

\textit{Weight Computation.} To sample an assignment from a WCFLOBDD, we first compute the weight corresponding to every exit vertex of every grouping $g$:
the weight of the $i^{\textit{th}}$ exit vertex of grouping $g$ is the sum of weights of all paths leading to $i$ from the entry vertex of $g$. 
This information at every grouping $g$ is computed by recursively calling $g$'s A-connection and B-connection groupings and using the information to compute the information for $g$.

At level-$0$ (base case), if $g$ is a \texttt{DontCareGrouping}, the weight of the exit vertex is the sum of the weights of the two edges, $(lw + rw)$.
If $g$ is a \texttt{ForkGrouping},
the weight of exit vertex $1$ is $lw$, and the weight of exit vertex $2$ is $rw$.

\textit{Sampling.} To sample an assignment $a$ from the probability distribution represented by a WCFLOBDD,
half of the assignment, $a_A$, is sampled recursively from the A-connection, and the other half, $a_B$, is sampled recursively from one of the B-connections, where $a = a_A \, || \, a_B$.
Consider the outermost grouping $g$ and a given exit vertex $i$.
For each middle vertex $m$ of $g$, the sum of weights of the matched paths from the entry vertex of $g$ to $i$ that passes through $m$ forms a distribution $D$ on $g$'s middle vertices 
(see Eqn.~(\ref{Eq:sampling-equation}), \sectref{Sampling})
.
To sample a matched path that leads to $i$ we (i) first sample the index ($m_{\textit{index}}$) of a middle vertex of $g$ according to $D$, (ii) recursively sample from $g$.AConnection with respect to the exit vertex that leads to $m_{index}$, (iii) recursively sample from $g$.BConnection[$m_{\textit{index}}$] with respect to the exit vertex that leads to $i$, and (iv) concatenate the sampled paths to obtain the sampled path of $g$.
(Steps (ii) and (iii) can be done in either order.)

At level-$0$ (base case), if $g$ is a \texttt{DontCareGrouping}, the assignment is sampled from ``$0$'' and ``$1$'' in proportion to the edge weights $(\textit{lw}, \textit{rw})$.
If $g$ is a \texttt{ForkGrouping}, the assignment ``$0$'' or ``$1$'' is chosen according to index $i$.

\section{Evaluation}
\label{Se:eval}


\begin{wrapfigure}{r}{0.6\textwidth}
    \centering
    \vspace{-10.0ex}
\resizebox{\linewidth}{!}{
    \begin{tabular}{|c|c|c|c|c|c|c|c|}
    \hline
    Bench- & \multirow{2}{*}{$\log(n)$} & \multicolumn{2}{c|}{WCFLOBDD} & \multicolumn{2}{c|}{WBDD} & \multicolumn{2}{c|}{CFLOBDD} \\
    \cline{3-8}
    mark & & Time (s) & Size & Time (s) & Size & Time (s) & Size \\
    \hline
    \multirow{6}{*}{B1} & 16 & \textbf{0.001} & 442 & 7.32 & 65536 & \textbf{0.001} & \textbf{329}\\
    & 17 & \textbf{0.001} & 470 & 14.51 & 131072 & \textbf{0.001} & \textbf{350}\\
    & 18 & \textbf{0.001} & 498 & 29.23 & 262144 & \textbf{0.001} & \textbf{371}\\
    & 19 & \textbf{0.001} & 526 & 57.91 & 524288 & \textbf{0.001} & \textbf{392}\\
    & 20 & \textbf{0.001} & 554 & 117.44 & 1048576 & \textbf{0.001} & \textbf{413}\\
    \cline{5-6}
    & 21 & \textbf{0.001} & 582 & \multicolumn{2}{c|}{Timeout} & \textbf{0.001} & \textbf{434}\\
    \hline
    \multirow{6}{*}{B2} & 16 & \textbf{0.001} & 654 & 7.34 & 65538 & \textbf{0.001} & \textbf{645}\\
    & 17 & \textbf{0.001} & 696 & 14.56 & 131074 & 0.002 & \textbf{687}\\
    & 18 & \textbf{0.001} & 738 & 29.06 & 262146 & 0.003 & \textbf{729}\\
    & 19 & \textbf{0.001} & 780 & 58.43 & 524290 & 0.005 & \textbf{771}\\
    & 20 & \textbf{0.001} & 822 & 118.44 & 1048578 & 0.007 & \textbf{813}\\
    \cline{5-6}
    & 21 & \textbf{0.001} & 864 & \multicolumn{2}{c|}{Timeout} & 0.013 & \textbf{855}\\
    \hline
    \multirow{6}{*}{B3} & 16 & \textbf{0.001} & 199 & 7.27 & \textbf{1} & 0.005 & 197\\
    & 17 & \textbf{0.001} & 211 & 14.4 & \textbf{1} & 0.01 & 209\\
    & 18 & \textbf{0.001} & 223 & 28.63 & \textbf{1} & 0.018 & 221\\
    & 19 & \textbf{0.001} & 247 & 57.17 & \textbf{1} & 0.035 & 233\\
    & 20 & \textbf{0.001} & 247 & 116.52 & \textbf{1} & 0.071 & 245\\
    \cline{5-6}
    & 21 & \textbf{0.001} & 259 & \multicolumn{2}{c|}{Timeout} & 0.142 & 257\\
    \hline
    \multirow{6}{*}{B4} & 16 & 0.51	& \textbf{307} & 7.29 & 32769 & \textbf{0.003} & 647\\
    & 17 & 1.03	& \textbf{326} & 14.52 & 65537 & \textbf{0.004} & 690\\
    & 18 & 2.13	& \textbf{345} & 28.91 &	131073 & \textbf{0.008} & 733\\
    & 19 & 4.51	& \textbf{364} & 57.93 &	262145 & \textbf{0.017} & 776\\
    & 20 & 10 &	\textbf{383} & 117.67 & 524289 & \textbf{0.04} & 819\\
    \cline{5-6}
    & 21 & 23.84 & \textbf{402} & \multicolumn{2}{c|}{Timeout} & 0.1 & 862\\
    \hline
    \multirow{6}{*}{B5} & 16 & \textbf{0.001} & \textbf{83} & 7.25 & \textbf{1} & \textbf{0.001} & \textbf{83}\\
    & 17 & \textbf{0.001} & \textbf{88} & 14.38 & \textbf{1} & \textbf{0.001} & \textbf{88}\\
    & 18 & \textbf{0.001} & \textbf{93} & 28.71 & \textbf{1} & \textbf{0.001} & \twrchanged{\textbf{93}}\\
    & 19 & \textbf{0.001} & \textbf{98} & 57.15 & \textbf{1} & \textbf{0.001} & \textbf{98}\\
    & 20 & \textbf{0.001} & \textbf{103} & 116.77 & \textbf{1} & \textbf{0.001} & \textbf{103} \\
    \cline{5-6}
    & 21 & \textbf{0.001} & \textbf{108} & \multicolumn{2}{c|}{Timeout} & \textbf{0.001} & \textbf{108}\\
    \hline
    \end{tabular}
}
    \caption{Execution times and sizes of WCFLOBDDs and WBDDs on synthetic benchmarks.
    (For B3, the WBDD size is 1 because $H \times H = I$, which is handled as a special case in MQTDD.)
    }
    \label{Fi:syntheticBenchmarks}
    \vspace{-7ex}
\end{wrapfigure}

Our experiments were designed to answer two questions:
\begin{itemize}[align=left, leftmargin=1cm]
  \item [\textbf{RQ1:}]
    Can WCFLOBDDs represent substantially larger functions than WBDDs?
  \item [\textbf{RQ2:}]
    In terms of time and space, can WCFLOBDDs outperform WBDDs and CFLOBDDs in a practical domain?
\end{itemize}
We ran all experiments on an 
\twrchanged{
Intel\,\textsuperscript{\textregistered} Xeon\,\textsuperscript{\textregistered} Gold 5218 CPU machine running Ubuntu OS
}
with 31GB RAM, 1000MHz CPU frequency with 64 CPUs. We also set the stack size to ``unlimited.''
\twrchanged{
The implementation is single-threaded and consists of about 2,500 lines of C++.
It also uses Quasimodo~\cite{DBLP:journals/corr/abs-2302-04349}, written in Python.
}

\subsection{Experiments to Answer RQ1}

For RQ1, we created five synthetic benchmarks, each defining a family of functions whose members (i) have different numbers of variables (all powers of $2$), and (ii) could occur as part of a matrix-multiplication sequence in a quantum-circuit simulation:

\smallskip
\qquad\begin{tabular}{l@{\hspace{5.0ex}}l@{\hspace{5.0ex}}l}
    B1: $I_n + X_n$    & B3: $H_n \times H_n$   &   B5: $H_n - H_n$  \\
    B2: $\CNOT_n(0, n-1) \times \CNOT_n(\frac{n}{2}-1, \frac{n}{2})$ & B4: $H_n \times I_n + I_n \times X_n$  & 
\end{tabular}


\smallskip
\noindent
$I_n$, $X_n$, and $H_n$ are the Identity, NOT, and Hadamard matrices, respectively, of size $2^{n/2} \times 2^{n/2}$.
$\CNOT_n(a, b)$ is a Controlled-NOT matrix of size $2^{n/2} \times 2^{n/2}$.
$\CNOT(a, b)(\langle x_0, x_1, \ldots, x_{n/2-1} \rangle)$ $=$
$\langle x_0, \ldots, x_a, \ldots,$ $(x_a {\oplus} x_b),$ $\ldots, x_{n/2-1} \rangle$, for $x \in \{0,1\}^{n/2}$.
We tested our implementation of WCFLOBDDs against WBDDs and CFLOBDDs.
We used the MQTDD package \cite{zulehner2019package} (an implementation of WBDDs) and a publicly available implementation of CFLOBDDs \cite{Code:CFLOBDDImplementation}.

\figref{syntheticBenchmarks} shows the performance of WCFLOBDDs and WBDDs in terms of (i) execution time, and (ii) size.
Here ``size'' means the number of vertices and edges for WCFLOBDDs and CFLOBDDS, and the number of nodes for WBDDs.
We find that on the synthetic benchmarks, WCFLOBDDs perform better than WBDDs.
\figref{syntheticBenchmarks} also shows the performance of CFLOBDDs on the synthetic benchmarks.
We find that WCFLOBDDs are comparable in size to CFLOBDDs on benchmarks B1-B3 and B5, and are smaller than CFLOBDDs on B4.
In terms of execution time, WCFLOBDDs are comparable to CFLOBDDs on B1 and B5, and better on B2 and B3.
\twrchanged{
CFLOBDDs perform better than WCFLOBDDs on B4 because B4 involves matrix addition (pointwise addition).
In CFLOBDDs, pointwise addition is fast (based on PairProduct), whereas the weight manipulations in WeightedPairProduct
(\algref{WeightedPairProduct})
have a substantial overhead.
}

\subsection{Experiments to Answer RQ2}

\begin{wrapfigure}{R}{0.6\textwidth}
    \centering
    \vspace{-7.0ex}
\resizebox{\linewidth}{!}{
    \setlength{\tabcolsep}{1.0pt}
    \begin{tabular}{|c|c|c|c|c|c|c|c|}
    \hline
    \multirow{2}{*}{Circuit} & \multirow{2}{*}{\#Qubits} & \multicolumn{2}{c|}{WCFLOBDD} & \multicolumn{2}{c|}{WBDD} & \multicolumn{2}{c|}{CFLOBDD} \\
    \cline{3-8}
    & & Time(s) & Size & Time(s) & Size & Time(s) & Size\\
    \hline
    \multirow{6}{*}{\GHZ} & 16 & 0.03 & 138 & \textbf{0.01} & \textbf{32} & 0.02 & 136\\
    & 256 & \textbf{0.03} & 250 & 0.08 & 512 & \textbf{0.03} & \textbf{248} \\
    & 4096 & \textbf{0.14} & 362 & 8.45 & 8192 & 0.16 & \textbf{360}\\
    & 32768 & \textbf{1.05} & 446 & 714.97 & 65536 & 1.39 & \textbf{444}\\
    \cline{5-6}
    & 65536 & \textbf{2.28} & 474 & \multicolumn{2}{c|}{\multirow{2}{*}{Timeout}} & 3.02 & \textbf{472}\\
    & 2097152 & \textbf{687.7} & 614 & \multicolumn{2}{c|}{} & 760.48 & \textbf{612}\\
    \hline
    \multirow{6}{*}{\BV} & 16 & 0.03 & 157 & \textbf{0.01} & \textbf{18} & 0.03 & 156\\
    & 256 & \textbf{0.07} & 714 & 0.12 & \textbf{258} & 0.16 & 713 \\
    & 4096 & \textbf{0.92} & 5491 & 22.83 & \textbf{4098} & 2.8 & 5490\\
    & 16384 & \textbf{4.52} & 16446 & 447.19 & \textbf{16386} & 13 & 16445\\
    \cline{5-6}
    & 65536 & \textbf{27.08} & 58698 & \multicolumn{2}{c|}{\multirow{2}{*}{Timeout}} & 67.97 & \textbf{58697}\\
    & 262144 & \textbf{287.87} & 218441 & \multicolumn{2}{c|}{} & 515.07 & \textbf{218440}\\
    \hline
    \multirow{7}{*}{\textit{DJ}} & 16 & 0.03 & 140 & \textbf{0.01} & \textbf{18} & 0.03 & 157\\
    & 256 & \textbf{0.08} & \textbf{244} & 0.15 & 258 & 0.16 & 269 \\
    & 4096 & \textbf{1.01} & \textbf{348} & 29.35 & 4098 & 2.58 & 381\\
    & 16384 & \textbf{4.59} & \textbf{400} & 618.48 & 16386 & 11.52 & 437\\
    \cline{5-6}
    & 65536 & \textbf{24.71} & \textbf{452} & \multicolumn{2}{c|}{\multirow{3}{*}{Timeout}} & 56.64 & 493\\
    & 262144 & \textbf{171.9} & \textbf{504} & \multicolumn{2}{c|}{} & 350.23 & 549\\
    \cline{7-8}
    & 524288 & \textbf{562.21} & \textbf{530} & \multicolumn{2}{c|}{} & \multicolumn{2}{c|}{Timeout}\\
    \hline
    \multirow{4}{*}{Simon} & 16 & 0.07 & 242 & \textbf{0.01} & \textbf{83} & 0.07 & 327\\
    & 256 & 1.62 & \textbf{942} & \textbf{0.39} & 1524 & 1.68 & 1524\\
    & 4096 & 124.8 & \textbf{7042} & 93.96 & 24563 & \textbf{92.05} & 10117\\
    & 8192 & 453.98 & \textbf{11902} & 439.74 & 49141 & \textbf{293.73} & 17130\\
    \hline
    \multirow{4}{*}{\QFT} & 4 & 0.03 & 63 & \textbf{0.01} & \textbf{5} & 0.02 & 88\\
    & 16 & 0.03 & 226 & \textbf{0.01} & \textbf{17} & 0.34 & 69736\\
    \cline{7-8}
    & 256 & \textbf{1.39} & 3366 & 1.73 & \textbf{257} & \multicolumn{2}{c|}{\multirow{2}{*}{Timeout}}\\
    & 2048 & \textbf{123.28} & 26678 & 777.5 & \textbf{2049} & \multicolumn{2}{c|}{}\\
    \hline
    \multirow{3}{*}{Grover} & 4 & 0.04 & 120 & \textbf{0.01} & \textbf{11} & 0.04 & 188\\
    & 8 & 0.16 & 199 & \textbf{0.02} & \textbf{23} & 1.33 & 1741\\
    \cline{7-8}
    & 16 & 8.8 & 318 & \textbf{0.51} & \textbf{47} & \multicolumn{2}{c|}{Timeout}\\
    \hline
    \begin{tabular}{@{}c@{}}
     Shor \\
    (15, 2)
    \end{tabular} & 4 & \textbf{0.22} & 205 & 0.25 & \textbf{26} & 322.4 & 2903\\
    \hline
    \begin{tabular}{@{}c@{}}
    Shor \\
    (21, 2)
    \end{tabular} & 5 & 2.81 & 441 & \textbf{0.19} & \textbf{84} & \multicolumn{2}{c|}{\multirow{6}{*}{Timeout}}\\
    \cline{1-6}
    \begin{tabular}{@{}c@{}}
    Shor \\
    (39, 2)
    \end{tabular} & 6 & 5.29 & 442 & \textbf{0.6} & \textbf{143} & \multicolumn{2}{c|}{}\\
    \cline{1-6}
    \begin{tabular}{@{}c@{}}
    Shor \\
    (95, 8)
    \end{tabular} & 7 & \multicolumn{2}{c|}{Timeout} & \textbf{684.35} & \textbf{1440} & \multicolumn{2}{c|}{}\\
    \hline
    \end{tabular}
}
    \caption{Execution times and sizes
    for quantum-circuit benchmarks}
    \label{Fi:quantum-benchmarks}
    \vspace{-3ex}
\end{wrapfigure}

We applied WBDDs, CFLOBDDs, and WCFLOBDDs to quantum-circuit simulation, which involves exponentially large vectors and matrices.
We used MQTDD \cite{zulehner2019package}, CFLOBDDs \cite{Code:CFLOBDDImplementation}, and WCFLOBDDs as back ends to Quasimodo~\cite{DBLP:journals/corr/abs-2302-04349}, a quantum-simulation tool that provides a convenient interface for attaching different back-end decision-diagram packages.
We used seven standard quantum circuits:
Greenberger–Horne–Zeilinger
(\GHZ), Bernstein–Vazirani
(\BV), Deutsch–Jozsa
(\textit{DJ}), Simon's algorithm\Omit{ (one is a slightly reordered circuit---reordering is done in the placement of the qubits)}, Quantum Fourier Transform (\QFT), Grover's algorithm, and Shor's algorithm.

For each circuit other than \GHZ and \QFT, a ``hidden'' string was initially sampled, and the oracle for the circuit was constructed based on the sampled string.
For \QFT, one of the basis states was sampled and the algorithm changed the basis state into a Fourier-transformed state.
For Shor's algorithm, we used the standard $2n+3$ circuit~\cite{beauregard2002circuit} and report the results for values $(N, a)$, where $N$ is the number to be factored and $a$ is a number used in the phase-estimation procedure.
The table in \figref{CircuitWidth} shows the \emph{circuit width} of each benchmark: the total number of qubits in the circuit as a function of $n$, the number of qubits for, e.g., the hidden string (reported in column 2 of \figref{quantum-benchmarks}):



\begin{figure}[tb!]
\centering
{\small
\begin{tabular}{|c|c|c|c|c|c|c|c|}
    \hline
     Circuit & \GHZ & \BV & \textit{DJ} & Simon & QFT & Grover & Shor \\
    \hline
    Width & $n$ & $n + 1$ & $n + 1$ & $2n$ &  $n$ & $2n-1$ & $2n+3$\\
    \hline
\end{tabular}
}
  \caption{Circuit widths of the quantum-circuit benchmarks.}
  \label{Fi:CircuitWidth}
\end{figure}

We ran each benchmark 10 times with a 15-minute timeout, starting from a random initial state.
\figref{quantum-benchmarks} reports average times and sizes of the final state vector.
WCFLOBDDs perform better than WBDDs and CFLOBDDs in most cases. 
The number of qubits that WCFLOBDD can handle is
2,097,152 for GHZ (1$\times$ over CFLOBDDs, 64$\times$ over WBDDs); 
262,144 for BV (1$\times$ over CFLOBDDs, 16$\times$ over WBDDs);
524,288 for DJ (2$\times$ over CFLOBDDs,
32$\times$ over WBDDs);
8,192 for Simon's algorithm
(1$\times$ over CFLOBDDs, 1$\times$ over WBDDs)
2,048 for QFT (128$\times$ over CFLOBDDs, 1$\times$ over WBDDs);
and 16 for Grover's algorithm (2$\times$ over CFLOBDDs, 1$\times$ over WBDDs).

For Simon's algorithm, the number of qubits handled by WCFLOBDDs, WBDDs, and CFLOBDDs is up to 8,192 qubits; however, CFLOBDDs perform better in terms of execution time.
For QFT, which is a function whose image is exponentially large, WCFLOBDDs perform better in time than WBDDs (but worse in space), and have substantially better performance in both time and space than CFLOBDDs.
For Grover's algorithm and Shor's algorithm, WBDDs perform better than both WCFLOBDDs and CFLOBDDs in both time and space.

In cases when WCFLOBDDs, WBDDs, and CFLOBDDs are all successful, the sizes of the final state vector of WCFLOBDDs and CFLOBDDs are similar, and smaller than WBDDs for GHZ, BV, DJ, and Simon's algorithm.
For QFT, Grover's algorithm, and Shor's algorithm, WBDDs represent the final state vector more compactly than WCFLOBDDs and CFLOBDDs.

The evaluation results for CFLOBDD differ from those in~\cite{TOPLAS:SCR24} because of 
(i) differences in the machine on which the experiments were run,
(ii) differences in the version of the circuits used for some benchmarks.
However, in all results reported in \figrefs{syntheticBenchmarks}{quantum-benchmarks}, WCFLOBDDs, WBDDs, and CFLOBDDs are compared on the same machine, and with the same implementations of the benchmarks (except for the decision-diagram package used).
Thus, our studies represent a fair comparison of the three data structures.

\section{Related Work}
\label{Se:related}

BDDs~\cite{toc:Bryant86} have been used in many tools for checking properties of hardware and software.
There are many variants of BDDs, e.g., MTBDDs~\cite{DBLP:journals/fmsd/FujitaMY97}, ADDs~\cite{DBLP:journals/fmsd/BaharFGHMPS97}, Free BDDs \cite[\S6]{DBLP:books/siam/Wegener00}, BMDs \cite{dac:BC95}, Hybrid DDs \cite{iccad:CFZ95}, Differential BDDs \cite{LNCS:AMU95}, and Indexed BDDs \cite{toc:JBAAF97}.
None of these BDD variants use anything like the ``procedure-call'' mechanism of CFLOBDDs \cite{TOPLAS:SCR24}, which we adopted in WCFLOBDDs.

Other variants of BDDs include
\twrchanged{
Edge-Valued BDDs (EVBDDs) \cite{vrudhula1996edge},
Factored Edge-Valued BDDs (FEVBDDs) \cite{tafertshofer1997factored}, 
Quantum Information Decision Diagram (QuIDDs)
\cite{DBLP:journals/qip/ViamontesMH03},
Quantum Multiple-valued DDs (QMDDs) \cite{DBLP:journals/tcad/NiemannWMTD16},
MQTDDs \cite{zulehner2019package}, and
Tensor Decision Diagrams (TDDs) \cite{hong2020tensor},
}
in which the edges out of decision nodes have weights, and some ``accumulation'' (e.g., + or $\times$) of the weights encountered on the path for an assignment $a$ yields the function's value on $a$.
\twrchanged{
QMDDs, QuIDDs, MQTDDs, and TDDs are all variants what we have been calling WBDDs (with complex-valued weights).
They are based on a Shannon decomposition, and thus the only sharing is of sub-DAGs, whereas the call-return structure used in WCFLOBDDs allows sharing of the ``middle of a DAG.''
}

\twrchanged{
WCFLOBDDs are also based on the idea of accumulating weights along (matched) paths, but combine it with the ``procedure-call'' mechanism of CFLOBDDs to get the best of both worlds.
The ''procedure-call'' idea originated with CFLOBDDs, but has not been previously investigated for weighted decision diagrams.
}
The algorithms for CFLOBDDs and WCFLOBDDs are similar in that they are typically divide-and-conquer algorithms that recurse on the A-connection and then the B-connections (or vice versa), thereby splitting the variables in half for each subproblem.  However, the WCFLOBDD algorithms require additional manipulations of weights---typically the propagation of additional weight values that represent accumulated products, often to set the top-level factor weight correctly.

\twrchanged{
Considered as graphs, (W)CFLOBDDs can contain cycles, although the matched-path principle on which they are based excludes all cyclic paths from consideration.
A number of BDD variants have also allowed cycles \cite{gupta1993representation,Thesis:Gupta94,ACSC:R99}.
A discussion of the differences between those structures and CFLOBDDs can be found in \cite[\S11]{TOPLAS:SCR24};
the observations made there apply to WCFLOBDDs as well.
}

There has been previous research on using
a hierarchical organization in compressed representations of functions~\cite{gupta1993representation, thierry2009hierarchical, darwiche2011sdd,nakamura2020variable}.
However, the hierarchy used in WCFLOBDDs is different than the kinds of hierarchical organization used in those works.
A given level-$0$ grouping in WCFLOBDDs is not associated with a specific Boolean
variable---see the Contextual Interpretation Principle in \sectref{BasicStructureOfWCFLOBDDs};
consequently,
a \protoWCFLOBDD can be reused for different variables.

\twrchanged{
The closest similarity between (W)CFLOBDDs and previous structures is with Sentential Decision Diagrams (SDDs) \cite{darwiche2011sdd} and Variable-Shift SDDs (VS-SDDs) \cite{nakamura2020variable}).
Like (W)CFLOBDDs, these data structures generalize BDDs by imposing a tree-structured ordering on variables, and there are functions for which these data structures are exponentially more succinct than BDDs.
SDDs and VS-SDDs represent Boolean functions only ($\mathbb{B}^n \rightarrow \mathbb{B}$), but have also been extended to represent probability distributions \cite{kisa2014probabilistic}.
}
WCFLOBDDs can represent semi-field-valued functions ($\mathbb{B}^n \rightarrow \mathcal{D}$).

\twrchanged{
As presented, WCFLOBDDs give up the freedom---available in SDDs and VS-SDDs---of using a variable decomposition based on arbitrary trees:
the variable decompositions used for CFLOBDDs \cite{TOPLAS:SCR24} and in this paper are balanced binary trees.
That is, at each level $k > 0$, (W)CFLOBDDs are based on a grammar with binary productions (where nonterminals are indexed by level) of the form
\[
  X_k \rightarrow X_{k-1} ~~ X_{k-1},
\]
with half of the variables interpreted in each child.
However, it is easy to generalize (W)CFLOBDDs to use arbitrary trees \cite[\S12]{DBLP:journals/corr/abs-2211-06818}.
All of the principles (variable decomposition, left-folding, canonicity, etc.) carry over if one were to use other grammars, e.g., ternary productions
\[
      X_k \rightarrow X_{k-1} ~~ X_{k-1} ~~ X_{k-1}
\]
or Fibonacci-like productions
\[
     X_k \rightarrow X_{k-1} ~~ X_{k-2}.
\]
One could also use more kinds of indexed nonterminals (e.g., $Y_k$, $Z_k$, etc.) and/or grammar rules of different forms at different levels.
In unpublished work, X.\ Zhi has results that vanilla CFLOBDDs can simulate all such variants based on indexed grammars with polynomial-space overhead.
}

\twrchanged{
VS-SDDs support a sub-structure sharing property that goes beyond SDDs;
however, as with BDDs and WBDDs, the objects that are shared are always sub-DAGs.
In contrast, the (W)CFLOBDD ``procedure-call'' device is a more flexible mechanism for reusing substructures than the mechanism for reusing substructures in VS-SDDs.
In particular, the call-return structure used in (W)CFLOBDDs allows sharing of the ``middle of a DAG.''
}

\twrchanged{
Local Invertible Map DDs (LIMDDs) \cite{DBLP:journals/quantum/VinkhuijzenCEDL23} are a DAG-structured decision diagram (i.e., no ``procedure calls'') in which two nodes are merged if their respective decision trees (considered as vectors) are equal up to multiplication by a tensor product of $2 \times 2$ matrices that represent single-qubit gates.
It has been shown that they can be exponentially more succinct that WBDDs---results that should carry over to WCFLOBDDs (i.e., there are functions for which LIMDDs are efficient, but WCFLOBDDs are not).
One drawback of LIMDDs is that every node-creation operation requires performing a cubic-time operation akin to Gaussian elimination to maintain the invariants that ensure canonicity;
consequently, LIMDDs seem to have a built-in handicap.
Each of the steps required to ensure that WCFLOBDDs are canonical are very fast in comparison (e.g., hashed lookups).
Little has been published about the performance of LIMDDs, but one empirical evaluation \cite{DBLP:conf/spin/VinkhuijzenGHBWL23} reported that QFT on a random stabilizer state took about 250,000 seconds for 24 qubits.
}

WCFLOBDDs closely resemble (weighted) finite, non-recursive, visibly pushdown automata (VPAs) or their close cousins nested-word automata (NWAs). 
There has been a significant amount of work on unweighted VPAs~\cite{DBLP:conf/stoc/AlurM04} and unweighted NWAs  \cite{DBLP:journals/jacm/AlurM09} (see also \cite{URL:NestedWords:Alur24,URL:VPAs:Madhu24}).
Weighted VPAs have been studied by \citet{DBLP:conf/dlt/CaralpRT12} and \citet{DBLP:conf/lata/LabaiM16};
weighted NWAs have been studied by
\citet{DBLP:journals/corr/abs-1001-2175} and---via conversion to weighted Pushdown Systems \cite{DBLP:conf/popl/BouajjaniET03,DBLP:conf/sas/RepsSJ03}---\citet{DBLP:conf/cav/DriscollTR12}.
In general, there do not exist VPAs for a visibly pushdown language that are canonical and minimal, except in a few special cases~\cite{alur2005congruences,kumar2006minimization}.
WCFLOBDDs are a restricted class of weighted VPAs with additional structural invariants (\defref{StructuralInvariants}, invariants (\ref{Inv:1})-(\ref{It:OutermostLevel:ValueTuple})), which enable WCFLOBDDs to support a unique canonical representation of each function in $\mathbb{B}^n \rightarrow \mathcal{D}$.

We used WCFLOBDDs for simulating quantum circuits.
Quantum-circuit simulation can be exact or approximate;
our focus here is on exact simulation (modulo floating-point round-off).
Tensor networks are a quantum-simulation method that is not based on decision diagrams.
CFLOBDDs perform better than tensor networks on
\twrchanged{
some benchmarks
}
\cite[Tab.\ 5]{TOPLAS:SCR24}.
Although we did not compare them with WCFLOBDDs directly, the previous study with CFLOBDDs implies that WCFLOBDDs perform better than tensor networks on
\twrchanged{
many benchmarks.
}
However, for problems such as random circuit sampling, tensor networks perform better than methods based on decision diagrams.
\section{Conclusion}
\label{Se:conclusion}

In this paper, we presented a new data structure, called \emph{Weighted Context-Free-Language OBDDs (WCFLOBDDs)}, an extension to CFLOBDDs, a recent innovation. 
CFLOBDDs are akin to BDDs, but use a hierarchical structure to represent functions more efficiently than BDDs.
However, similar to BDDs, CFLOBDDs also suffer from structure explosion when the
image of a function has an exponential number of different values.
Weighted BDDs (WBDDs) are used in place of BDDs to overcome this issue.
We introduced WCFLOBDDs to avoid problems of size explosion of CFLOBDDs, with the potential to have the good properties of both WBDDs and CFLOBDDs.
For two classes of functions, we showed that WCFLOBDDs are exponentially more succinct than any WBDD (respectively, CFLOBDD) representation of the functions in that class.
Our experiments compared the performance of WCFLOBDDs, WBDDs (using the MQTDD package), and CFLOBDDs on (i) synthetic benchmarks, and (ii) quantum-simulation benchmarks.
\twrchanged{
We found that on most benchmarks, WCFLOBDDs outperform WBDDs and CFLOBDDs in terms of the problem sizes that can be handled successfully in a fixed time period.
}
Overall, our results support the conclusion that, for at least some applications, WCFLOBDDs provide the best of both worlds: 
\twrchanged{
the problem sizes that WCFLOBDDs can handle roughly match whichever of WBDDs and CFLOBDDs is better.
(From the standpoint of running time, the results are more nuanced.)
}

\section*{Data-Availability Statement}
The artifact used to generate the experimental results in~\sectref{eval}, along with the source code, can be found on Zenodo~\cite{wcflobdds-artifact}. The source code for WCFLOBDDs can also be found at \url{https://github.com/trishullab/cflobdd}, with the code for experiments at \url{https://github.com/trishullab/Quasimodo}.

\begin{acks}
This work was supported, in part,
by a gift from
\grantsponsor{00001}{Rajiv and Ritu Batra}{}, and by
\grantsponsor{00003}{NSF}{https://www.nsf.gov/}
under grants
\grantnum{00003}{CCF-2211968}
and
\grantnum{00003}{CCF-2212558}.
Any opinions, findings, and conclusions or recommendations
expressed in this publication are those of the authors,
and do not necessarily reflect the views of the sponsoring
entities.
\end{acks}

\bibliographystyle{ACM-Reference-Format}
\bibliography{refs}

\clearpage
\appendix
\section{Appendices}
\subsection{Constructing a Canonical WCFLOBDD from a Decision Tree}
\label{Se:canonicalness}

The construction of a WCFLOBDD for a function $f$ from the decision-tree representation of $f$ is a two-step process: (i) constructing a weighted decision tree (WDT) from the decision tree for $f$, and (ii) constructing a WCFLOBDD from the WDT for $f$.

The construction of a WDT from a decision tree has the same flavor as constructing a WBDD, as discussed in~\cite{vrudhula1996edge} and~\cite{Book:ZW2020}.

\begin{Constr}{\bf [Decision Tree to WDT]}
\,\newline

Given a decision tree $T$ that represents a function $f$, perform the following actions recursively, traversing $T$ in post-order (i.e., performing actions just before returning), thereby creating the WDT bottom-up.
\begin{enumerate}
  \item
    If the current node $n$ is a leaf node of $T$ with terminal value $v$, create a new leaf node $n'$ with terminal value $v'$, where
    \begin{equation*}
      v' = \begin{cases}
             \bar{1} & \text{if $v \neq \bar{0}$}\\
             \bar{0} &\text{otherwise }
           \end{cases}
    \end{equation*}
    Return the tuple $\langle v, n' \rangle$.
  \item
    If $n$ is an internal node of $T$, and tuples $\langle v_1, n_1 \rangle$ and $\langle v_2, n_2 \rangle$ are the tuples returned from the recursive calls to the left child and right child, respectively, create a new internal node $n_0$ with (i) left child $n_1$ and right child $n_2$, and (ii) edge weights $(lw, rw)$ as follows:
    \begin{equation*}
      (lw, rw) = \begin{cases}
                   (\bar{1}, v_1^{-1} \cdot v_2) & \text{when $v_1 \neq \bar{0}$}\\
                   (\bar{0}, \bar{1}) &\text{otherwise }
                 \end{cases}
    \end{equation*}
    Let $v_0$ be defined as follows:
    \begin{equation*}
        v_0 = \begin{cases}
                v_1 & \text{when $v_1 \neq \bar{0}$}\\
                v_2 &\text{otherwise }
              \end{cases}
    \end{equation*}
    Return the tuple $\langle v_0, n_0 \rangle$.
\end{enumerate}
The result is a WDT $\langle v', n' \rangle$, where $v'$ is the factor weight.
\end{Constr}

\begin{Constr}\label{Constr:DecisionTreeToCFLOBDD}{\bf [WDT to WCFLOBDD]\/}
\,\newline

\begin{definition}\label{De:ProtoWDT}
A \emph{proto-Weighted Decision Tree} (proto-WDT) is a WDT whose ``leaves'' are either
(i) all terminal values $\bar{0}$ or $\bar{1}$, or (ii) all proto-WDTs that are of the same height and obey the edge-weight conditions of a WDT.
\end{definition}

Just as inductive arguments about WCFLOBDDs have to be couched in terms of proto-WCFLOBDDs, the main part of the construction below focuses on proto-WDTs of a given WDT ($T$), and shows how to construct a \protoWCFLOBDD from a proto-WDT.
That construction provides a way to construct a WCFLOBDD from a WDT as a special case.

One property of WDTs that is different for proto-WDTs is the number of equivalence classes of their leaves.
Suppose that WDT $T$ has $2^{2^n}$ leaves.
$T$ has at most two kinds of leaves, $\bar{0}$ and $\bar{1}$, and thus just one or two equivalence classes of leaves.
In contrast, suppose that $T'$ is a proto-WDT of $T$ of height $2^k$ and $2^{2^k}$ leaves.
$T'$ can have between $1$ and $2^{2^k}$ equivalence classes of leaves.

For convenience, in the discussion below, when we refer to a proto-WCFLOBDD $w$, we mean a proto-WCFLOBDD proper, together with a value tuple $v$ where $|v|$ is equal to the number of exit vertices of the head-grouping of $w$.

The following recursive procedure describes how to convert a proto-WDT $T'$ of height $2^k$ and $2^{2^k}$ leaves into a level-$k$ \protoWCFLOBDD whose value tuple consists of an enumeration of the leaf equivalence classes $e'$ of $T'$.
The enumeration of the $|e'|$ equivalence-class representatives respects the relative ordering of their first occurrences in a left-to-right sweep over the leaves of $T'$.
\begin{enumerate}
  \item 
    \label{Construct:BaseCases}
    Base case (when $k = 0$):
    There are two cases, depending on whether the height-$1$ proto-WDT has one leaf equivalence class,
    $\{ v \}$, or two, $\{ v_1 \}$ and $\{ v_2 \}$.
    In the first case, create a \texttt{DontCareGrouping} (a level-$0$ proto-WCFLOBDD), and attach $[v]$ as the value tuple.
    In the second case,  create a \texttt{ForkGrouping} (a level-$0$ proto-WCFLOBDD), and then attach $[v_1, v_2]$ as the value tuple.
    In both cases, the left-edge and right-edge weights of the level-$0$ grouping are copied from the weights of the left and right edges of the proto-WDT node.
  \item
    \label{Construct:Recursive1}
    For each of the $2^{2^{k-1}}$ proto-WDTs of height $2^{k-1}$ in the lower half of $T'$,
    construct---via a recursive application of the construction---$2^{2^{k-1}}$ level-$(k{-}1)$ {\protoWCFLOBDD}s.
    The leaf values of the height-$2^{k-1}$ proto-WDTs are carried over from the leaf values of $T'$.
    
    \hspace{1.5ex}
    These {\protoWCFLOBDD}s are then partitioned into some number $e^{\#} \geq 1$ of equivalence classes of equal {\protoWCFLOBDD}s.
    A representative of each class is retained, and the others discarded.
    Each of the $2^{2^{k-1}}$ ``leaves'' of the upper half of proto-WDT $T'$
    is labeled with the appropriate equivalence-class representative (for the subtree of the lower half of $T'$ that begins there).
    These \protoWCFLOBDD-valued leaves serve as the leaf values of the upper half of proto-WDT $T'$ when the construction process is applied recursively to the upper half in
    step~\ref{Construct:Recursive2}.

    \hspace{1.5ex}
    The enumeration $1 \ldots |e^{\#}|$ of the equivalence-class representatives respects the relative ordering of their first occurrences in a left-to-right sweep over the leaves of the upper half of $T'$.
  \item
    \label{Construct:Recursive2}
    Construct---via a recursive application of the procedure---a level-$(k\textrm{--}1)$ \protoWCFLOBDD $A'$ for the proto-WDT consisting of the upper half of $T'$ (with the WCFLOBDDs constructed in step \ref{Construct:Recursive1} as the leaf values).
  \item
    \label{Construct:Grouping}
    Construct a level-$k$ proto-WCFLOBDD from the level-$(k\textrm{--}1)$ {\protoWCFLOBDD}s created in steps~\ref{Construct:Recursive1} and~\ref{Construct:Recursive2}.
    The level-$k$ grouping is constructed as follows:
    \begin{enumerate}
      \item
        \label{Construct:NewAConnection}
        The $A$-connection points to the level-$(k\textrm{--}1)$ proto-WCFLOBDD of \protoWCFLOBDD $A'$ constructed in step~\ref{Construct:Recursive2}.
      \item
        \label{Construct:NewMiddleVertices}
        The $|e^{\#}|$ middle vertices correspond to the equivalence classes
        formed in step~\ref{Construct:Recursive1} (in the enumeration order
        $1 \ldots |e^{\#}|$ of step~\ref{Construct:Recursive1}).
      \item
        \label{Construct:NewAReturnTuple}
        The $A$-connection return tuple is the identity map back to the middle vertices (i.e., the tuple $[1..|e^{\#}|]$).
      \item
        \label{Construct:NewBConnections}
        The $B$-connections point to the level-$(k\textrm{--}1)$ proto-WCFLOBDDs of the $|e^{\#}|$ equivalence-class representatives constructed in step~\ref{Construct:Recursive1}, in the enumeration order $1 \ldots |e^{\#}|$.
      \item
        \label{Construct:NewExitVertices}
        The exit vertices of the proto-WCFLOBDD correspond to the equivalence classes $e'$ of the leaves of $T'$, in the enumeration order $1 \ldots |e'|$.
      \item
        \label{Construct:NewBReturnTuples}
        The $B$-connection return tuples connect (i) the exit vertices of the level-$(k{-}1)$ groupings of $B$-connections to (ii) the level-$k$ grouping's exit vertices that were created in step~\ref{Construct:NewExitVertices}.
        The connections are made according to matching leaf equivalence classes from $T'$.
      \item
        \label{Construct:CheckForDuplicates}
        Consult a table of all previously constructed level-$k$ groupings to determine whether the grouping constructed by steps~\ref{Construct:NewAConnection}--\ref{Construct:NewBReturnTuples}
        duplicates a previously constructed grouping.
        If so, discard the present grouping and switch to the previously constructed one;
        if not, enter the present grouping into the table.
    \end{enumerate}
  \item
    \label{Construct:AttachValueTuple}
    To create a \protoWCFLOBDD with value tuples from the proto-WCFLOBDD (without value tuples) constructed in step \ref{Construct:Grouping}, we use a value tuple consisting of the leaf equivalence classes $e'$ of $T'$, listed in enumeration order (i.e., in the ordering of first occurrences in a left-to-right sweep over the leaves of $T'$).
\end{enumerate}

To construct a WCFLOBDD for a WDT $T$, we merely consider $T$ to be a proto-WDT with each leaf labeled by $\bar{0}$ or $\bar{1}$.
Note that the leaf equivalence classes will be one of $[\bar{0}]$, $[\bar{1}]$, $[\bar{0}, \bar{1}]$, or $[\bar{1}, \bar{0}]$ (which becomes the value tuple of the constructed WCFLOBDD).
\end{Constr}

\subsection{Algorithms Used in Pointwise Multiplication}
\label{Se:AlgorithmsUsedInPointwiseMultiplication}

\begin{algorithm}[tb!]
\caption{Reduce\label{Fi:Reduce}}
\SetKwFunction{Reduce}{Reduce}
\SetKwProg{myalg}{Algorithm}{}{end}
\small{
\Input{Grouping g, ReductionTuple reductionTuple, ValueTuple valueTuple}
\Output{Grouping g' that is ``reduced,'' weight $w$ (factor weight for g')}
\Begin{
\label{Li:ReduceNoProcessingStart}
\tcp{Test whether any reduction actually needs to be carried out}
\label{Li:ReduceNoProcessingEnd}
\If{reductionTuple == $[$1..$\mid$reductionTuple$\mid]$ and each element in valueTuple equals $v \neq \bar{0}$}
{\Return [g, $v$]\;}
\lIf{every element in valueTuple == $\bar{0}$}
{\Return [ConstantZeroProtoCFLOBDD(g.level), $\bar{0}$]}

\If{g is fork grouping, reductionTuple = $[$1,2$]$ and let valueTuple = $[v_1, v_2]$}
{
\eIf{$v_1 == \bar{0}$}
{
ForkGrouping g' = new ForkGrouping($\bar{0}$, $\bar{1}$); \Return [g', $v_2$]\;
}
{
ForkGrouping g' = new ForkGrouping($\bar{1}$, $v_1^{-1} v_2$); \Return [g', $v_1$]\;
}
}

\If{g is fork grouping, reductionTuple = $[$1,1$]$ and let valueTuple = $[v_1, v_2]$}
{
\eIf{$v_1 == \bar{0}$}
{
DontCareGrouping g' = new DontCareGrouping($\bar{0}$, $\bar{1}$); \Return [g', $v_2$]\;
}
{
DontCareGrouping g' = new DontCareGrouping($\bar{1}$, $v_1^{-1} v_2$); \Return [g', $v_1$]\;
}
}

InternalGrouping g' = new InternalGrouping(g.level)\;
g'.numberOfExits = $|\{ x : x \in \textit{reductionTuple} \}|$\;
Tuple reductionTupleA = []; Tuple valueTupleA = []\;
\For{$i \leftarrow 1$ \KwTo $g.\textit{numberOfBConnections}$}{
\label{Li:BConnectionCollapseStart}
Tuple deducedReturnClasses = [reductionTuple(v) : v $\in$ g.BReturnTuples[i]]\;
\label{Li:AConnectionProcessingStart}
Tuple$\times$Tuple [inducedReturnTuple, inducedReductionTuple] = CollapseClassesLeftmost(deducedReturnClasses)\;
Tuple inducedValueTuple = [deducedReturnClasses(i) : i $\in$ [1..$|$deducedReturnClasses$|$]]\;
\label{Li:BConnectionCollapseEnd}
Grouping$\times$Weight [h, $w_B$] = Reduce(g.BConnection[i], inducedReductionTuple, inducedValueTuple)\;
int position = InsertBConnection(g', h, inducedReturnTuple)\;  \label{Li:CallInsertBConnection}
reductionTupleA = reductionTupleA $||$ position\;  \label{Li:ExtendReductionTupleA}
valueTupleA = valueTupleA $||$ $w_B$\; \label{Li:ExtendValueTupleA}
}
Tuple$\times$Tuple [inducedReturnTuple, inducedReductionTuple] = CollapseClassesLeftmost(reductionTupleA)\;  \label{Li:CallCollapseClassesLeftmost}
Tuple inducedValueTuple = [reductionTupleA(i) : i $\in$ [1..$|$reductionTupleA$|$]]\;
Grouping$\times$Weight [h', $w$] = Reduce(g.AConnection, inducedReductionTuple, inducedValueTuple)\;
g'.AConnection = h'\;
g'.AReturnTuple = inducedReturnTuple\;  \label{Li:AConnectionProcessingEnd}
\Return [RepresentativeGrouping(g'), $w$]\;
}
}
\end{algorithm}

\begin{algorithm}[tb!]
\caption{InsertBConnection\label{Fi:InsertBConnection}}
\small{
\Input{InternalGrouping g, Grouping h, ReturnTuple returnTuple}
\Output{int --  Insert (h, ReturnTuple) as the next B-connection of g, if they are a new combination; otherwise return the index of the existing occurrence of (h, ReturnTuple)}

\Begin{
\label{Li:InsertBConnectionStart}
\lIf{there exists $i \in [1..g.\mathrm{numberOfBConnections}]$ such that g.BConnections[i] == h $\&\&$ g.BReturnTuples[i] == returnTuple}{\Return i}
g.numberOfBConnections = g.numberOfBConnections + 1\;
g.BConnections[g.numberOfBConnections] = h\;
g.BReturnTuples[g.numberOfBConnections] = returnTuple\;
\Return g.numberOfBConnections\;
\label{Li:InsertBConnectionEnd}
}
}
\end{algorithm}

This section gives two of the algorithms used in pointwise multiplication for which there was insufficient space to present in the body of the paper.
\begin{itemize}
  \item
    \algref{Reduce} reduces a grouping based on return tuples or the value tuple, and as a by-product ensures canonicity of the resulting WCFLOBDD.
  \item
    \algref{InsertBConnection} determines the position for a B-connection in a grouping being constructed, reusing one if it is already present in the grouping.
\end{itemize}
\subsection{Pointwise Addition}
\label{Se:PointwiseAddition}

The pseudo-code for pointwise addition of two WCFLOBDDs is given in~\algrefs{ABO}{WeightedPairProduct}.

\begin{algorithm}[tb!]
\caption{Pointwise Addition\label{Fi:ABO}}
\small{
\Input{WCFLOBDDs n1 = $\langle \textit{fw1}, \textit{h1}, \textit{vt1} \rangle$, n2 = $\langle \textit{fw2}, \textit{h2}, \textit{vt2} \rangle$}
\Output{CFLOBDD n = n1 $+$ n2}
\Begin{
\tcp{Perform ``weighted'' cross product}
    Grouping$\times$Tuple [g,pt] = WeightedPairProduct($\textit{h1}$,$\textit{h2}$, $\textit{fw1}$, $\textit{fw2}$)\;  \label{Li:ABO:CallPairProduct}
    ValueTuple deducedValueTuple = [ c1 $\cdot$ vt1[i1] + c2 $\cdot$ vt2[i2]~:~[$\langle (c1, i1),(c2, i2) \rangle$] $\in$ pt ]\;  \label{Li:ABO:LeafValues}
     \tcp{Collapse duplicate leaf values, folding to the left}
    Tuple$\times$Tuple [inducedReturnTuple,inducedReductionTuple] = CollapseClassesLeftmost(deducedValueTuple) \;  \label{Li:ABO:CollapseLeafValues}
    Tuple inducedValueTuple = one of $[\bar{1}, \bar{0}], [\bar{0}, \bar{1}], [\bar{1}], [\bar{0}]$ based on inducedReturnTuple \;
    Grouping$\times$Weight [g', $\textit{fw}$] = Reduce(g, inducedReductionTuple, deducedValueTuple) \;  \label{Li:ABO:CallReduce}
    WCFLOBDD n = RepresentativeCFLOBDD($\textit{fw}$, g', inducedValueTuple) \;
    \Return n\;
}
}
\end{algorithm}

\begin{algorithm}[tb!]
\caption{WeightedPairProduct\label{Fi:WeightedPairProduct}}
\small{
\Input{Groupings g1, g2; Weights p1, p2}
\Output{Grouping g: g1 $+$ g2; Tuple ptAns: tuple of pairs of exit vertices with corresponding weights}
\Begin{
    \lIf {g1 is ConstantZeroCFLOBDD }  
    { \Return [ g2, [$\langle (\bar{1}, 1), (\bar{0}, k) \rangle$~:~k $\in$ [1..g2.numberOfExits]] }   
    \lIf {g2 is ConstantZeroCFLOBDD }  
    { \Return [ g1, [$\langle (\bar{0}, k), (\bar{1}, 1) \rangle$~:~k $\in$ [1..g1.numberOfExits]] }   
    \lIf{g1 == g2}
    { \Return [ g1, [$\langle (\bar{1}, k), (\bar{1}, k) \rangle$~:~k $\in$ [1..g2.numberOfExits]] }   
\tcp{Similar for other base cases, with appropriate weights and exit vertices pairings}
\If {g1 and g2 are fork groupings  \label{Li:WPPBothForkGroupings}} {
    ForkGrouping g = new ForkGrouping(1,1)\;
\Return [ g, [$\langle (p1 \cdot g1.lw, 1), (p2 \cdot g2.lw, 1) \rangle$, $\langle (p1 \cdot g1.rw, 2), (p2 \cdot g2.rw, 2) \rangle$] ]\;
} \label{Li:WPPBothForkGroupingsEnd}
\tcp{ Pair the A-connections}
    Grouping$\times$Tuple [gA,ptA] = WeightedPairProduct(g1.AConnection, g2.AConnection, p1, p2)\;
    InternalGrouping g = new InternalGrouping(g1.level)\;
    g.AConnection = gA \;
    g.AReturnTuple = [1..$|$ptA$|$]\tcp*[r]{Represents the middle vertices}
    g.numberOfBConnections = $|$ptA$|$ \;  \label{Li:WPPNumberOfBConnections}
    \tcp{Pair the B-connections, but only for pairs in ptA }
    \tcp{Descriptor of pairings of exit vertices}
    Tuple ptAns = []\;  \label{Li:WPPptAnsDeclaration}

    \tcp{Create a B-connection for each middle vertex}
    \For {$j \leftarrow 1$ \KwTo $|ptA|$}{
        Grouping$\times$Tuple [gB,ptB] = WeightedPairProduct(g1.BConnections[ptA(j)(1)(2)], g2.BConnections[ptA(j)(2)(2)], ptA(j)(1)(1), ptA(j)(2)(1))\;
        g.BConnections[j] = gB\;
        \tcp{Now create g.BReturnTuples[j], and augment ptAns as necessary}
        g.BReturnTuples[j] = []\;        \label{Li:WPPExitVertexLoopStart}
        \For{$i \leftarrow 1$ \KwTo $|ptB|$} {
            c1 = g1.BReturnTuples[ptA(j)(1)(2)](ptB(i)(1)(2))\tcp*[r]{an exit vertex of g1}
            c2 = g2.BReturnTuples[ptA(j)(2)(2)](ptB(i)(2)(2))\tcp*[r]{an exit vertex of g2}
            f1 = g1.BReturnTuples[ptA(j)(1)(2)](ptB(i)(1)(1))\tcp*[r]{an associated weight of g1}
            f2 = g2.BReturnTuples[ptA(j)(2)(2)](ptB(i)(2)(1))\tcp*[r]{an associated weight of g2}
            \eIf(\tcp*[f]{Not a new exit vertex of g}){$\langle (f1, c1), (f2, c2) \rangle$ $\in$ ptAns}
                {
                index = the k such that ptAns(k) == $\langle (f1, c1), (f2, c2) \rangle$\;
                g.BReturnTuples[j] = g.BReturnTuples[j] $||$ index\;
                }
            (\tcp*[f]{Identified a new exit vertex of g}){
                g.numberOfExits = g.numberOfExits + 1\;
                g.BReturnTuples[j] = g.BReturnTuples[j] $||$ g.numberOfExits\;
                ptAns = ptAns $||$ $\langle (f1, c1), (f2, c2) \rangle$\; \label{Li:WPPConcatenateExitPair}
            }
        }  \label{Li:WPPExitVertexLoopEnd}
    }
    \Return [RepresentativeGrouping(g), ptAns]\; \label{Li:WPPTabulateAnswer}
}
}
\end{algorithm}
\subsection{Kronecker Product}
\label{Se:KroneckerProduct}



The pseudo-code for Kronecker Product on WCFLOBDDs is given as \algrefs{KroneckerProductAlgo}{KroneckerProductOnGrouping}.
(In \algref{KroneckerProductAlgo}, $\bowtie$ denotes the operation to interleave two variable orderings.)

\begin{algorithm}[tb!]
\caption{Kronecker Product \label{Fi:KroneckerProductAlgo}}
\Input{WCFLOBDDs n1 = $\langle \textit{fw1}, g1, \textit{vt1} \rangle$, n2 = $\langle \textit{fw2}, g2, \textit{vt2} \rangle$ with variable ordering of n1: $x \bowtie y$ and n2: $w \bowtie z$}
\Output{WCFLOBDD n = n1 $\otimes$ n2 with variable ordering of n: $(x || w) \bowtie (y || z)$}
\Begin{
    \lIf{\textit{fw1} == $\bar{0}$ or \textit{fw2} == $\bar{0}$}
    {\Return ConstantZeroProtoCFLOBDD(n1.level)}
    e = index of $\bar{0}$ in $\textit{vt1}$ (-1 if no such occurrence)\;
    e' = index of $\bar{0}$ in $\textit{vt2}$ (-1 if no such occurrence)\;
    Grouping g = KroneckerProductOnGrouping(g1, g2, e, e')\;
    ValueTuple $vt$\;
    \lIf{e == -1}{$\textit{vt}$ = $\textit{vt2}$}
    \lIf{e == 1}{$\textit{vt}$ = $[\bar{0}, \bar{1}]$}
    \eIf{e == 2 and \textit{vt2} == [$\bar{1}$]}{
      $\textit{vt}$ = $\textit{vt2}$ $||$ [$\bar{0}$]\;
    }{
      $\textit{vt}$ = $\textit{vt2}$\;
    }
    \Return RepresentativeCFLOBDD($\textit{fw1} \cdot \textit{fw2}$, g, $\textit{vt}$)\;
}
\end{algorithm}

\begin{algorithm}[tb!]
\caption{KroneckerProductOnGrouping\label{Fi:KroneckerProductOnGrouping}}
\Input{Groupings g1, g2; e (exit of $g1$ that leads to $\bar{0}$), e' (exit of $g2$ that leads to $\bar{0}$)}
\Output{Grouping g' such that g' = g1 $\otimes$ g2}
\Begin{
    InternalGrouping g' = new InternalGrouping(g.level + 1)\;
    g'.AConnection = g1\;
    g'.AReturnTuple = [1..$|$g1.numberOfExits$|$]\;
    g'.numberOfBConnections = $|$g1.numberOfExits$|$\;
    \If{e == -1}{
        g'.BConnection[1] = g2\; 
        g'.BReturnTuple[1] = [1..$|$g2.numberOfExits$|$]\;
        g'.numberOfExits = $|$g2.numberOfExits$|$\;
    }
    \If{e == 1}{
        g'.BConnection[1] = ConstantZeroProtoCFLOBDD(g.level)\;
        g'.BReturnTuple[1] = [1]\;
        g'.BConnection[2] = g2\;
        \lIf{e' == -1} { g'.BReturnTuple[2] = [2] }
        \lIf{e' == 1}  { g'.BReturnTuple[2] = [1,2] }
        \lElse         { g'.BReturnTuple[2] = [2,1] }
        g'.numberOfExits = 2\;
    }
    \If{e == 2}{
        g'.BConnection[1] = g2\; 
        g'.BReturnTuple[1] = [1..$|$g2.numberOfExits$|$]\;
        g'.BConnection[2] = ConstantZeroProtoCFLOBDD(g.level)\;
        \lIf{e' == -1} { g'.BReturnTuple[2] = [2] }
        \lIf{e' == 1}  { g'.BReturnTuple[2] = [1] }
        \lElse         { g'.BReturnTuple[2] = [2] }
        g'.numberOfExits = 2\;
    }
    \Return RepresentativeGrouping(g')\;
}
\end{algorithm}

\subsection{Matrix Multiplication}

Pseudo-code for the matrix-multiplication algorithm for WCFLOBDDs is shown in~\algrefsp{MatrixMult}{MatrixMultGrouping}{MatrixMultOnGroupingContd}.

\begin{algorithm}[tb!]
\caption{Matrix Multiplication \label{Fi:MatrixMult}}
\Input{WCFLOBDDs n1 = $\langle \textit{fw1}, g1, \textit{vt1} \rangle$, n2 = $\langle \textit{fw2}, g2, \textit{vt2} \rangle$}
\Output{WCFLOBDD n = n1 $\times$ n2}
\Begin{
Grouping$\times$MatMultTuple$\times$Weight [g,m,w] = MatrixMultOnGrouping(g1, g2)\;
ValueTuple v\_Tuple = []\;   \label{Li:CallMatrixMultOnGrouping}
\For{$i \leftarrow 1$ \KwTo $|m|$ }{   \label{Li:ValueComputationStart}
    Value $v = \langle \textit{vt1}, \textit{vt2} \rangle(m(i))$\;  \label{Li:ConcreteWeightedDotProduct}
    v\_Tuple = v\_Tuple $||$ $v$;
}  \label{Li:ValueComputationEnd}
Tuple$\times$Tuple [inducedValueTuple, inducedReductionTuple] = CollapseClassesLeftmost(v\_Tuple)\;
Grouping$\times$Weight [g, $\textit{fw}$] = Reduce(g, inducedReductionTuple, v\_Tuple)\;
ValueTuple valueTuple = one of $[\bar{0}, \bar{1}], [\bar{1}, \bar{0}], [\bar{0}], [\bar{1}]$ based on inducedValueTuple \;
WCFLOBDD n = RepresentativeCFLOBDD($w \cdot \textit{fw} \cdot \textit{fw1} \cdot \textit{fw2}$, g, valueTuple)\;
\Return n\;
}
\end{algorithm}

\begin{algorithm}[tb!]
\small {
\caption{MatrixMultOnGrouping \label{Fi:MatrixMultGrouping}}
\Input{Groupings g1, g2}
\Output{Grouping$\times$MatMultTuple$\times$Weight [g,m,w] such that g = g1 $\times$ g2}
\Begin{
\If(\tcp*[f]{Base Case: matrices of size $2\times 2$}){g1.level == 1}{
\tcp{Construct a level $1$ Grouping that reflects which cells of the product hold equal entries in the output MatMultTuple}
}
\If {g1 or g2 is ConstantZeroProtoCFLOBDD}
{
\Return [ \twrchanged{ConstantZeroProtoCFLOBDD(g1.level), $[\{[(0,0), \bar{0}]\}]$}, $\bar{0}$ ]\;
}
\If {g1 is Identity-Proto-CFLOBDD}
{
m = [ \twrchanged{ $\{[(0, k), \bar{1}]\}$ : $k \in$} [1..g2.numberOfExits] ]\;
\Return [g2, m, $\bar{1}$]\;
}
\If {g2 is Identity-Proto-CFLOBDD}
{
m = [ \twrchanged{ $\{[(k, 0), \bar{1}]\}$ : $k \in$} [1..g1.numberOfExits] ]\;
\Return [g1, m, $\bar{1}$]\;
}
InternalGrouping g = new InternalGrouping(g1.level)\;
Grouping$\times$MatMultTuple$\times$Weight [aa,ma,wa] = MatixMultOnGrouping(g1.AConnection, g2.AConnection)\;  \label{Li:AConnectionRecursion}
g.AConnection = aa; g.AReturnTuple = [1..$|$ma$|$]\; 
g.numberOfBConnections = $|$ma$|$\;
\tcp{Continued in \algref{MatrixMultOnGroupingContd}}
}
}
\end{algorithm}

\begin{algorithm}[tb!]
\small {
\caption{MatrixMultOnGrouping (cont.) \label{Fi:MatrixMultOnGroupingContd}}
\setcounter{AlgoLine}{18}
\Input{Groupings g1, g2}
\Output{Grouping$\times$MatMultTuple$\times$Weight [g,m,w] such that g = g1 $\times$ g2}
\Begin{
\tcp{Interpret ma to (symbolically) multiply and add BConnections}
MatMultTuple m = []\;
Tuple valueTuple = []\;
\For(\tcp*[f]{Interpret $i^{\textit{th}}~\BP$ in ma to create g.BConnections[$i$]}){$i \leftarrow 1$ \KwTo $|ma|$ }{    \label{Li:InterpretationLoopStart}
    \tcp{Set g.BConnections[$i$] to the (symbolic) weighted dot product $\sum_{((k_1,k_2),v) \in \textrm{ma}(i)} v * \textrm{g1.BConnections}[k_1] * \textrm{g2.BConnections}[k_2]$}
    CFLOBDD curr\_cflobdd = ConstantCFLOBDD(g1.level, $[\ZeroBP]$)\;  \label{Li:WeightedDotProductStart}
    \For{$((k_1,k_2),v) \in$ ma($i$)}{   
        Grouping$\times$MatMultTuple$\times$Weight [bb,mb,wb] = MatrixMultOnGrouping(g1.BConnections[$k_1$], g2.BConnections[$k_2$])\;  \label{Li:BConnectionRecursion}
        MatMultTuple mc = []\;
        Tuple inducedValueTuple = []\;
        \For{$j \leftarrow 1$ \KwTo $|mb|$ }{
            $\BP$ $\bp = \langle \mathrm{g1.BReturnTuples}[k_1],$
            $\mathrm{g2.BReturnTuples}[k_2] \rangle(\mathrm{mb}(j))$\;
            mc = mc $||$ $\bp$;
            valueTuple\_B = valueTuple\_B $||$ wb\;
        }
        Tuple$\times$Tuple [inducedMatMultTuple, inducedReductionTuple] = CollapseClassesLeftmost(mc)\;
        [bb, $fw$] = Reduce(bb, inducedReductionTuple, valueTuple\_B)\;
        CFLOBDD n = RepresentativeCFLOBDD($\textit{fw}$, bb, inducedMatMultTuple)\;
        curr\_cflobdd = curr\_cflobdd + $v$ $\ast$ n \tcp*[r]{Accumulate symbolic sum}
    }  
    g.BConnection[$i$] = curr\_cflobdd.grouping\;
    g.BReturnTuples[$i$] = curr\_cflobdd.valueTuple\;
    m = m $||$ curr\_cflobdd.valueTuple\;  
    valueTuple = valueTuple $||$ curr\_cflobdd.factor\_weight \label{Li:WeightedDotProductEnd}
} \label{Li:InterpretationLoopEnd}
g.numberOfExits = $|$m$|$\;
Tuple$\times$Tuple [inducedMatMultTuple, inducedReductionTuple] = CollapseClassesLeftmost(m)\;           
[g, $fw$] = Reduce(g, inducedReductionTuple, valueTuple)\;
\Return [RepresentativeGrouping(g), m, $fw$]\;
}
}
\end{algorithm}
\subsection{Sampling}
\label{Se:Sampling}

A WCFLOBDD with no non-negative edge weights can be considered to represent a discrete distribution over the set of assignments to the Boolean variables.
An assignment---or equivalently, the corresponding matched path---is considered to be an elementary event.
The probability of a matched path $p$ is the weight of $p$ divided by the sum of the weights of all matched paths of the WCFLOBDD.

\subsubsection{Weight Computation.}
To sample an assignment directly from the WCFLOBDD representation of the function, we first need to compute the weight corresponding to every exit vertex.
The weight of an exit vertex $e$ of a grouping $g$ is the sum of the weights of all matched paths through the proto-WCFLOBDD headed by $g$ that lead to $e$.
This information can be computed recursively by (i) computing the weight of every middle vertex of $g$ for all matched paths from the entry vertex to middle vertices, and then (ii) computing the weight of every exit vertex of $g$ for all matched paths from the middle vertices to the exit vertices, (iii) combining this information to obtain the weight of every exit vertex of $g$ for matched paths from entry vertex to exit vertices of $g$.

Consider a grouping $g$ at level $l$ with $e$ exit vertices.
Suppose that $g.\textit{AConnection}$ has $p$ exit vertices;
suppose that $g.\textit{BConnections}[j]$ (where $1 \leq j \leq p$) has $k_j$ exit vertices;
and let $g.\textit{BReturnTuples}[j]$ be the return edges from $g.\textit{BConnections}[j]$'s exit vertices to $g$'s exit vertices.
To compute Step (i), we recursively call the weight-computation procedure for $g.\textit{AConnection}$, which yields a vector of weights $v_A$ of size $1 \times p$.
For Step (ii), the vectors obtained from recursive calls on the weight-computation procedure for the $p$ B-connections of $g$ are used
to create a matrix $M_B$ of size $p\times e$, in which the $j^{\textit{th}}$ row is the vector of weights from the $j^{\textit{th}}$ middle vertex of $g$ to $g$'s exit vertices.
(The details are given in the next paragraph.)
For Step (iii),
the vector-matrix product $v_A \times M_B$ yields $g$'s weight vector, of size $1\times e$.
At level-$0$, if the grouping is a \emph{ForkGrouping} with left-edge and right-edge weights $(\textit{lw}, \textit{rw})$, then the weight vector is $[\textit{lw}, \textit{rw}]$.
If the grouping is \emph{DontCareGrouping}, then the weight vector is $[\textit{lw} + \textit{rw}]$.

Because the exit vertices of $g.\textit{BConnections}[j]$ are connected to $g$'s exit vertices via $g.{\textit{BReturnTuples}}[j]$, the $j^{\textit{th}}$ row of $M_B$ is the product of the weight vector for $g.\textit{BConnections}[j]$ (of size $1\times k_j$) and a ``permutation matrix'' $\textit{PM\,}^{g.\textit{BReturnTuples}[j]}$ (of size $k_j \times e$).
Each entry of $\textit{PM}$ is either $0$ or $1$;
each row must have exactly one $1$;
and each column must have at most one $1$.

This definition can be stated equationally, where the expression in large brackets represents $M_B$.

\[
\small{
\begin{array}{@{\hspace{0ex}}l@{\hspace{0ex}}}
  \textit{weightOfExit\,}^g_{1 \times e} = 
         \begin{cases}
           [lw, rw]_{1\times2} & \text{if g = \textit{Fork}} \\
           [lw + rw]_{1\times1} & \text{if g = \textit{DontCare}} \\
           \begin{array}{@{\hspace{0ex}}l@{\hspace{0ex}}}
             \textit{weightOfExit\,}^{g.\textit{AConnection}}_{1\times p} \, \times \\
             \quad \begin{bmatrix}
                      \vdots\\
                      \textit{weightOfExit\,}^{g.{\textit{BConnections}[j]}}_{1\times k_j}   \\
                      \quad \times \textit{PM\,}^{g.{\textit{BReturnTuples}}[j]}_{k_j \times e}\\
                      \vdots
                    \end{bmatrix}_{\scriptsize {\begin{array}{@{\hspace{0ex}}l@{\hspace{0ex}}} {p \times e} \\ j \in \{1..p\} \end{array}}}
           \end{array} & \text{otherwise}
         \end{cases}
\end{array}
}
\]

Pseudo-code for the algorithm is given as \algref{WeightComp}.

\begin{algorithm}[tb!]
\caption{ComputeWeights \label{Fi:WeightComp}}
\Input{Grouping g}
\Begin{
\eIf{g.level == 0}{
    \eIf{g == DontCareGrouping}{
        g.weightsOfExits = [g.lw + g.rw]\;
    }(\tcp*[h]{g == $\textit{ForkGrouping}$}){
        g.weightsOfExits = [g.lw, g.rw]\;
    }
}
{
    ComputeWeights(g.AConnection)\;
    \For{$i \leftarrow 1$ \KwTo $g.\textit{numberOfBConnections}$}{
        ComputeWeights(g.BConnection[i]);
    }
    g.weightsOfExits = a $\bar{0}$-initialized array of length $\mid$g.numberOfExits$\mid$\;  \label{Li:path-counting-multiplication-start}
    \For{$i \leftarrow 1$ \KwTo $g.\textit{numberOfBConnections}$}{
        \For{$j \leftarrow 1$ \KwTo $g.\textrm{BConnection}[i].\textit{numberOfExits}$}{
            k = BReturnTuples[i](j)\;
            g.weightsOfExits[k] += \\
            \quad g.AConnection.weightsOfExits[i] * g.BConnection[i].weightsOfExits[j]\;
        }
    }   \label{Li:path-counting-multiplication-end}
}
}
\end{algorithm}

\subsubsection{Sampling.}
To sample an assignment from the probability distribution
efficiently,
we need to perform the operation directly on the WCFLOBDD that represents the probability distribution.
Suppose that the WCFLOBDD has $l$ levels.
If the distribution were given as a vector of weights, $W = [w_1,\cdots,w_{2^{2^l}}]$, then the probability of selecting the $p^{th}$ matched path would be
\[
  \textit{Prob}(p) = \dfrac{w_p}{\Sigma_{i=1}^{2^{2^l}}w_i}
\]

Because we do not have this information directly, instead of sampling one matched path, we will sample a set of matched paths that lead to an exit vertex. At top level, we will consider only those paths that lead to the terminal value $\bar{1}$. 
At every other grouping $g$, given an exit-vertex $e$, we will sample a path from all the matched paths that lead to $e$.

We take advantage of the structure of matched paths to break the assignment/path-sampling problem down into a sequence of smaller assignment/path-sampling problems that can be performed recursively.
At each grouping $g$ visited by the algorithm, the goal is to sample a matched path based on the weight of the matched path from the set of matched paths $P_{g,i}$
(in the proto-WCFLOBDD headed by $g$) that lead from $g$'s entry vertex to a specific exit vertex $i$ of $g$.

Consider a grouping $g$ and a given exit vertex $i$. For each middle vertex $m$ of $g$, the sum of weights of the matched paths from the entry vertex of $g$ to $i$ that passes through $m$ forms a distribution $D$ on $g$'s middle vertices. To sample a matched path from $P_{g,i}$, we (i) first sample the index ($m_{index}$) of a middle vertex of $g$ according to $D$, (ii) recursively sample on $g$.AConnection with respect to the exit vertex that leads to $m_{index}$, (iii) recursively sample on $g$.BConnection[$m_{index}$] with respect to the exit vertex that leads to $i$, (iv) concatenate the sampled paths to obtain the sampled path of $g$.

Only those B-connections of $g$ whose exit vertices are connected to $i$ contribute to the paths leading to $i$.
Therefore, to sample a middle vertex, we need to consider only those B-connection groupings that lead to $i$.
For such an $i$-connected B-connection grouping $k$, let $(g.\textit{BReturnTuples}[k])^{-1}[i]$ denote the exit vertex of $g.\textit{BConnections}[k]$ that leads to $i$;
i.e., $\langle j, i \rangle \in g.\textit{BReturnTuples}[k] \Leftrightarrow (g.\textit{BReturnTuples}[k])^{-1}[i] = j$.

Given the sum of weights of all matched paths leading to exit vertex $i$ as $\textit{weightsOfExits}[i]$, we can sample $m_{index}$ based on the following probability espression (where $g.A$ denotes $g.\textit{AConnection}$, $g.B[k]$ denotes $g.\textit{BConnections}[k]$, and $g.\textit{BRT}$ denotes $g.\textit{BReturnTuples}$):
\begin{equation}
\label{Eq:sampling-equation}
\small{
  \textit{Prob}(m_{\textit{index}})
  =
  \dfrac{\begin{array}{@{\hspace{0ex}}c@{\hspace{0.25ex}}l@{\hspace{0ex}}}
                  & \textit{weightsOfExit\,}^{g.A}[m_{\textit{index}}] \\
           \times & \textit{weightsOfExit\,}^{g.B[m_{\textit{index}}]}[(g.\textit{BRT}[m_{\textit{index}}])^{-1}[i]]
          \end{array}
        }{g.\textit{weightsOfExit}[i]}
}
\end{equation}
Using this process, $m_{\textit{index}}$ is selected.
The sampling procedure is called recursively on $g$'s A-connection, which returns an assignment $a_A$, and then on B-connection[$m_{\textit{index}}$] of $g$, which returns an assignment $a_B$.
The sampled path/assignment of grouping $g$ is $a = a_A||a_B$. 

At level-$0$ (the base case), if $g$ is a \texttt{DontCareGrouping}, the assignment is sampled from ``$0$'' and ``$1$'' in proportion to the edge weights $(\textit{lw}, \textit{rw})$.
If $g$ is a \texttt{ForkGrouping}, the assignment ``$0$'' or ``$1$'' is chosen according to the specified index $i$.
Pseudo-code for the sampling algorithm is given as~\algref{Sampling}.

\begin{algorithm}[tb!]
\caption{Sample an Assignment from a WCFLOBDD\label{Fi:Sampling}}
\SetKwFunction{SampleAssignment}{SampleAssignment}
\SetKwFunction{SampleOnGroupings}{SampleOnGroupings}
\SetKwProg{myalg}{Algorithm}{}{end}
\myalg{\SampleAssignment{n}}{
\Input{WCFLOBDD n = $\langle \textit{fw}, g, \textit{vt} \rangle$}
\Output{
Assignment sampled from n according to $vt$.
}
\Begin{
$i$ = the index of $\bar{1}$ in $vt$\;
Assignment a = SampleOnGroupings(g, $i$)\;
\Return a\;
}
}{}
\setcounter{AlgoLine}{0}
\SetKwProg{myproc}{SubRoutine}{}{end}
\myproc{\SampleOnGroupings{g, i}}{
\Input{Grouping g, Exit index $i$}
\Output{
Assignment for a path leading to exit $i$ of $g$, sampled in proportion to path weights 
}
\Begin{
\If{g.level == 0}{
    \eIf{$g$==\texttt{DontCareGrouping}}{
        \tcp{random(w1,w2) returns 1 if w1 is chosen, else w2}
        \Return random(g.lw, g.rw) ? ``0'' : ``1''\;
    }(\tcp*[h]{$g$==$\textit{ForkGrouping}, \textrm{so}\:i\,{\in}\,[1,2]$}){
        \Return ($i$ == 1) ? ``0'' : ``1''    
    }
}
{
Tuple WeightsOfPathsLeadingToI = []\;
\For(\tcp*[f]{Build weight info tuple from which to sample}){$j \leftarrow 1$ \KwTo $g.\textit{numberOfBConnections}$}{
    \If(\tcp*[f]{if $j^{th}$ B-connection leads to $i$})
    {$i \in g.\textit{BReturnTuples}[j]$}{
    WeightsOfPathsLeadingToI = WeightsOfPathsLeadingToI 
        $||$ (g.AConnection.WeightsOfExits[$j$] * g.BConnections[$j$].weightsOfExits[$k$]), where $i$ = BReturnTuples[$j$]($k$)\;
    }
}
$m_{\textit{index}} \leftarrow$ Sample(WeightsOfPathsLeadingToI)\tcp*[r]{Sample middle-vertex index $m_{\textit{index}}$}
Assignment a = SampleOnGroupings(g.AConnection, $m_{\textit{index}}$) $||$ SampleOnGroupings(g.BConnection[$m_{\textit{index}}$], $k$), where $i$ = BReturnTuples[$m_{\textit{index}}$]($k$)\;
\Return a\;
}
}
}
\end{algorithm}

\end{document}